\documentclass[a4paper]{article}

\usepackage[latin9]{inputenc}
\usepackage{mathrsfs}
\usepackage{bm}
\usepackage{amsmath}
\usepackage{amsthm}
\usepackage{amssymb}
\usepackage{graphicx}
\usepackage{esint}

\makeatletter
\theoremstyle{plain}

\theoremstyle{plain}

\ifx\proof\undefined
\newenvironment{proof}[1][\protect\proofname]{\par
	\normalfont\topsep6\p@\@plus6\p@\relax
	\trivlist
	\itemindent\parindent
	\item[\hskip\labelsep\scshape #1]\ignorespaces
}{%
	\endtrivlist\@endpefalse
}
\providecommand{\proofname}{Proof}
\fi

\usepackage{jhepmod}
\pdfoutput=1

\usepackage[table ]{ xcolor}

\usepackage{amssymb}
\usepackage{graphicx,color}

\usepackage{amsmath,amsthm}

\usepackage{mathrsfs}

\usepackage{bm}

\def\beq{\begin{equation}}
\def\eeq{\end{equation}}
\def\bi{\begin{itemize}}
\def\ei{\end{itemize}}
	\def\ba{\begin{array}}
	\def\ea{\end{array}}
	\def\bfig{\begin{figure}}
	\def\efig{\end{figure}}

	\def\C{\mathbb{C}}
	\def\R{\mathbb{R}}
	\def\Z{\mathbb{Z}}

	\newtheorem{theorem}{Theorem}[section]

	\newtheorem{lemma}[theorem]{Lemma}

	\newcommand{\Slc}{\mathrm{SL}(2,\mathbb{C})}
	\newcommand{\PSlc}{\mathrm{PSL}(2,\mathbb{C})}

	\def\be{\begin{eqnarray}}
	\def\ee{\end{eqnarray}}

	\newcommand{\ca}{\mathcal A}

	\newcommand{\cc}{\mathcal C}
	\newcommand{\cd}{\mathcal D}
	
	\newcommand{\cf}{\mathcal F}
	
	\newcommand{\ch}{\mathcal H}

	\newcommand{\cl}{\mathcal L}

	\newcommand{\co}{\mathcal O}

	\newcommand{\cs}{\mathcal S}
	\newcommand{\ct}{\mathcal T}
	\newcommand{\cu}{\mathcal U}
	\newcommand{\cv}{\mathcal V}

	  \newcommand{\Fd}{\mathfrak{D}}

	\newcommand{\fp}{\mathfrak{p}}

	  \newcommand{\Fu}{\mathfrak{U}}
	  
	  \newcommand{\Fw}{\mathfrak{W}}

	\renewcommand{\a}{\alpha}
	\renewcommand{\b}{\beta}
	\newcommand{\g}{\gamma}

	\newcommand{\eps}{\varepsilon}

	\newcommand{\sig}{\sigma}
	
	\renewcommand{\l}{\lambda}
	\renewcommand{\L }{\Lambda}
	\renewcommand{\o}{\omega}

	\newcommand{\rmd}{\mathrm d}

	\newcommand{\lt}{\left}
	\newcommand{\rt}{\right}

	\newcommand{\lag}{\left\langle}
	\newcommand{\rag}{\right\rangle}

	\newcommand{\sw}{\mathscr{W}}

	\newcommand{\css}{\mathscr{S}}
	\newcommand{\re}{\mathrm{Re}}
	\newcommand{\im}{\mathrm{Im}}

	\newcommand{\tr}{\mathrm{Tr}}

	\newcommand{\bmY}{\bm{Y}}
	
	\newcommand{\bmy}{\bm{y}}
	
	\newcommand{\bmL}{\bm{L}}
	\newcommand{\bmU}{\bm{U}}
\newcommand{\bfq}{\bm{q}}
\newcommand{\rmD}{\mathrm{D}}

\newcommand{\suquqt}{\mathscr{U}_{\bfq}(sl_2)\otimes \mathscr{U}_{\tilde{\bfq}}(sl_2)}

\newcommand{\suq}{\mathscr{U}_{\bfq}(sl_2)}
\newcommand{\suqt}{\mathscr{U}_{\tilde \bfq}(sl_2)}

\title{Representations of a quantum-deformed Lorentz algebra, Clebsch-Gordan map, and Fenchel-Nielsen representation of complex Chern-Simons theory at level-$\bm{N}$}

\author[1,2]{Muxin Han}

\affiliation[1]{Department of Physics, Florida Atlantic University, 777 Glades Road, Boca Raton, FL 33431-0991, USA}

\affiliation[2]{Department Physik, Institut f\"ur Quantengravitation, Theoretische Physik III, Friedrich-Alexander Universit\"at Erlangen-N\"urnberg, Staudtstr. 7/B2, 91058 Erlangen, Germany}

\emailAdd{hanm(At)fau.edu}

\abstract{
A family of infinite-dimensional irreducible $*$-representations on $\mathcal{H}\simeq L^2(\mathbb{R})\otimes\mathbb{C}^N$ is defined for a quantum-deformed Lorentz algebra $\mathscr{U}_{\bfq}(sl_2)\otimes \mathscr{U}_{\widetilde{\bfq}}(sl_2)$, where ${\bfq}=\exp[\frac{\pi i}{N}(1+b^2)]$ and $\tilde{{\bfq}}=\exp[\frac{\pi i}{N}(1+b^{-2})]$ with $N\in\mathbb{Z}_+$ and $|b|=1$. The representations are constructed with the irreducible representation of quantum torus algebra at level-$N$, which is developed from the quantization of $\mathrm{SL}(2,\mathbb{C})$ Chern-Simons theory. We study the Clebsch-Gordan decomposition of the tensor product representation, and we show that it reduces to the same problem as diagonalizing the complex Fenchel-Nielson length operators in quantizing $\mathrm{SL}(2,\mathbb{C})$ Chern-Simons theory on 4-holed sphere. Finally, we explicitly compute the spectral decomposition of the complex Fenchel-Nielson length operators and the corresponding direct-integral representation of the Hilbert space $\mathcal{H}$, which we call the Fenchel-Nielson representation.

}

\keywords{}

\makeatother

\usepackage{babel}
\providecommand{\lemmaname}{Lemma}
\providecommand{\theoremname}{Theorem}

\begin{document}

\maketitle

\section{Introduction}

This work is partly inspired by the early results on the relations between the modular double of $U_q(sl(2,\R))$ and quantum Teichm\"ueller theory  \cite{Kashaev2001,Kashaev:2000ku,Bytsko:2002br,Nidaiev:2013bda,Ponsot:2000mt,Teschner:2005bz,Teschner:2003em,Derkachov:2013cqa}. As has been shown in the literature, the representation of the modular double of $U_q(sl(2,\R))$ can be defined on $L^2(\R)$, where the representations of the generators relate to the representation of quantum torus algebra (composed by the generators of Weyl algebra). For the tensor product representation, the Clebsch-Gordan decomposition is equivalent to the spectral decomposition of certain Fenchel-Nielsen (FN) length operator in quantum Teichm\"ueller theory. These results find their generalizations in this paper. 

The quantum Teichm\"ueller theory closely relates to the $\mathrm{SL}(2,\R)$ Chern-Simons theory \cite{Killingback:1990hi,Teschner:2014nja}. There has been recent generalization in \cite{Andersen2014,andersen2016level,levelk} to the Teichm\"ueller TQFT of integer level, which relates to the quantization of Chern-Simons theory with complex gauge group $\Slc$. The quantum $\Slc$ Chern-Simons theory has the complex coupling constant $N+is$ and $N-is$ where $N\in\Z_+$ is called the level of Chern-Simons theory, and $s\in\R$ corresponds to one of the unitary branch \cite{Witten1991}. We use the parametrization 
\be    
is=N\frac{1-b^2}{1+b^2},\qquad |b|=1,\quad \mathrm{Re}(b)>0. 
\ee
and define $\hbar=\frac{\pi i}{N}(1+b^2),\ \tilde{\hbar}=\frac{\pi i}{N}(1+b^{-2})$. $N$ relates to the integer level of the Teichm\"ueller TQFT \cite{andersen2016level}. The quantization of $\Slc$ Chern-Simons theory results in the Weyl algebra and quantum torus algebra at level-$N$, motivated by quantizing the Chern-Simon symplectic structure \cite{levelk,Andersen2014}. The level-$N$ quantum torus algebra has $\bfq=\exp(\hbar)$ and $\tilde{\bfq}=\exp(\tilde{\hbar})$, and the Weyl algebra has $\bfq^2$ and $\tilde{\bfq}^2$. The Hilbert space carrying their representations is $\ch\simeq L^2(\R)\otimes \C^N$, and the representation reduces to the representation in quantum Teichm\"ueller theory \cite{Teschner:2005bz,goncharov2022quantum} when $N=1$. The representation of quantum torus algebra is reviewed in Section \ref{Quantum torus algebra and the representation}.

Based on the representation of the quantum torus algebra, we construct a family of irreducible $*$-representation of $\suquqt$ on $\ch$. The $*$-structure on $\suquqt$ is represented by the hermitian conjugate on $\ch$. As the tensor product of two Hopf algebras, $\suquqt$ has a well-defined Hopf algebra structure and can be understood as a quantum deformation of the Lorentz algebra with complex deformation parameter $\bfq,\tilde{\bfq}$ \footnote{For $N=1$, the deformation parameters reduces to $\bfq\to-e^{\pi i b^2}$ and $\tilde{\bfq}\to -e^{\pi i b^{-2}}$. In this case, $\suquqt$ may be compared with the modular double of $U_{\bfq}(sl(2,\R))$ \cite{Derkachov:2013cqa,Kashaev2001,Nidaiev:2013bda}, although the deformation parameter is different by a flip of sign.}. In particular, it is different from the quantum Lorentz group with real $q$ that is well-studied in the literature (see e.g. \cite{MR1059324,BR}). The irreducible representations of $\suquqt$ constructed in this paper are parametrized by a continuous parameter $\mu_L\in\R$ and a discrete parameter $m_L\in\Z/N\Z$. They may be viewed as analog with the principle-series unitary representation of $\Slc$.

We study the Clebsch-Gordan decomposition of the tensor product $*$-representation (of $\suquqt$) on $\ch\otimes\ch$, and we show that the result is a direct-integral of irreducible $*$-representations (see Section \ref{The Clebsch-Gordan decomposition}). The direct-integral is given by the spectral decomposition of the Casimir operator $Q_{21}$ of the tensor product representation, and we find the unitary transformation $\cu_{21}$ as the Clebsch-Gordan map representing the co-multiplication.

Interestingly $Q_{21}$ closely relates to the quantization of complex FN lengths for $\Slc$ flat connections on 4-holed sphere (see Section \ref{Quantum flat connections on 4-holed sphere}). As resulting from quantizing the $\Slc$ Chern-Simons theory, the quantization of $\Slc$ flat connections on 4-holed sphere can be constructed based on the level-$N$ representation of quantum torus algebra. The Hilbert space $\ch\simeq L^2(\R)\otimes\C^N$ carries the irreducible representation of the quantum algebra from quantizing the Fock-Goncharov (FG) coordinates of flat connections. The complex FN length that relates to $Q_{21}$ is given by the trace of holonomies around two holes. It turns out that the quantization of the complex FN length leads to the normal operators $\bm{L},\tilde{\bm{L}}=\bm{L}^\dagger$, and $ \mathrm{id}_{\ch}\otimes \bmL$ is unitary equivalent to $Q_{21}$ on $\ch\otimes\ch$. Moreover, we show that the traces of S-cycle (enclosing the 1st and 2nd holes) and T-cycle (enclosing the 2nd and 3rd holes) holonomies are related by a unitary transformation, which is a realization of the A-move in the Moore-Seiberg groupoid (as a generalizaton from \cite{Teschner:2003em}). In addition, we show that the quantization of holonomies' traces on $\ch$ gives a representation of the operator algebra from the skein quantization \cite{Coman:2015lna} of the flat connections.

We show in Section \ref{Eigenstates of the trace operators} that the spectral decomposition of $\bm{L},\tilde{\bm{L}}$ endows to $\ch $ the direct-integral representation, which we call the FN representation:
\be
\ch\simeq \bigoplus_{m_r\in\Z/N\Z}\int^\oplus_{\R_{\geq 0}}\rmd\mu_r\varrho(\mu_r,m_r)^{-1}\,\ch_{\mu_r,m_r},\label{DID0000}
\ee
where $\rmd\mu_r\varrho(\mu_r,m_r)^{-1}$ is the spectral measure and each $\ch_{\mu_r,m_r}$ is 1-dimensional. The spectra of $\bm{L},\tilde{\bm{L}}$ are respectively $\ell(r)=r+r^{-1}$ and $\ell(r)^*$, where $r=\exp[\frac{2\pi i}{N}(-ib\mu_r-m_r)]$. Due to the relation between $\bmL$ and $Q_{21}$, the direct-integral representation \eqref{DID0000} also gives the Clebsch-Gordan decomposition for the tensor product representation of $\suquqt$.

The results from this work should have impact on the complex Chern-Simons theory at level-$N$ and its relation to quantum group and quantum Teichm\"ueller theory. For instance, although the quantum Lorentz group with real deformation and the representation theory has been widely studied, the generalization to complex $\bfq$ has not been studied in the literature before. We show in this paper that this generalization closely relates to the $\Slc$ Chern-Simons theory with level-$N$ \footnote{The quantum Lorentz group with real deformation relates to the $\Slc$ Chern-Simons theory with $N=0$ \cite{BNR,Gaiotto:2024osr}.}. As another aspect, given the relation between quantum Teichm\"ueller theory and Liouville conformal field theory, our study might point toward certain generalization of Liouville conformal field theory relating to the level-k, and this generalization might also relate to the boundary field theory of the $\Slc$ Chern-Simons theory.

In addition to the above, another motivation of this work is the potential application to the spinfoam model with cosmological constant \cite{Han:2021tzw,HHKR,HHKRshort}. The spinfoam model is formulated with $\Slc$ Chern-Simons theory ($1/N$ is proportional to the absolute value of cosmological constant) with the special boundary condition called the simplicity constraint, which restricts the flat connections on 4-holed sphere to be SU(2) up to conjugation. Classically, the simplicity constraint is conveniently formulated in terms of the complex FN variables \cite{Han:2023hbe}. Therefore, it may be convenient to formulate the quantization of the simplicity constraint in the FN representation of quantum flat connections. The investigation on this perspective will be reported elsewhere. 

The structure of this paper is as follows: In Section \ref{Quantum torus algebra and the representation}, we review briefly the representation of quantum torus algebra at level-$k$ and set up some notations. In section \ref{Representation of a q-deformed Lorentz algebra}, we construct the representations of $\suquqt$ and discussion the Clebsch-Gordan decomposition of tensor product representation. In Section \ref{Quantum flat connections on 4-holed sphere}, we discuss the quantization of $\Slc$ flat connections on 4-holed sphere with a certain ideal triangulation, the relation with the skein quantization, the S-cycle and T-cycle trace operators and their unitary transformations. In Section \ref{Eigenstates of the trace operators}, we compute the eigenvalue and distributional eigenstates of the trace operators and prove the direct-integral decomposition of the Hilbert space. In Section \ref{Changing triangulation}, we discuss the unitary transformation induced by changing ideal triangulation of the 4-holed sphere.

\section{Quantum torus algebra and the representation at level $N$}\label{Quantum torus algebra and the representation}

The quantum torus algebra $\mathcal{O}_{\bfq}$ is spanned
by Laurent polynomials of the symbols $\bm{u}_{\alpha,\beta}$ with
$\alpha,\beta\in\mathbb{Z}$, satisfying the following relation
\begin{equation}
\bm{u}_{\alpha,\beta}\bm{u}_{\gamma,\delta}=\bfq^{\alpha\delta-\beta\gamma}\bm{u}_{\alpha+\gamma,\beta+\delta},\qquad \bfq=e^{\hbar}.\label{eq:quantumtorus1}
\end{equation}
where $\hbar\in\mathbb{C}$ is the quantum deformation parameter.
We associated to $\mathcal{O}_{\bfq}$ the ``anti-holomorphic
partner'' $\mathcal{O}_{\tilde{\bfq}}$ generated
by $\tilde{\bm{u}}_{\alpha,\beta}$ with $\alpha,\beta\in\mathbb{Z}$,
satisfying
\begin{equation}
\tilde{\bm{u}}_{\alpha,\beta}\tilde{\bm{u}}_{\gamma,\delta}=\tilde{\bfq}^{\alpha\delta-\beta\gamma}\tilde{\bm{u}}_{\alpha+\gamma,\beta+\delta},\qquad\tilde{\bfq}=e^{\tilde{\hbar}},\qquad\tilde{\hbar}=-\hbar^{*},\label{eq:quantumtorus2}
\end{equation}
and $\mathcal{O}_{\tilde{\bfq}}$ commutes with $\mathcal{O}_{\bfq}$.
The entire algebra is denoted by $\mathcal{A}_{\hbar}=\mathcal{O}_{\bfq}\otimes\mathcal{O}_{\tilde{\bfq}}$.
We can endow the algebra a $\star$-structure by 
\[
\star\left(\bm{u}_{\alpha,\beta}\right)=\tilde{\bm{u}}_{\alpha,\beta},\qquad\star\left(\tilde{\bm{u}}_{\alpha,\beta}\right)=\bm{u}_{\alpha,\beta}.
\]
which interchanges the holomorphic and antiholomorphic copies.

In this paper, we use the following parametrizations of $\hbar$ and $\tilde{\hbar}$
\[
\hbar=\frac{\pi i}{N}\left(1+b^{2}\right),\qquad\tilde{\hbar}=\frac{\pi i}{N}\left(1+b^{-2}\right),
\]
where $N,b$ satsifies 
\[
N\in\mathbb{Z}_+,\qquad|b|=1,\qquad\mathrm{Re}(b)>0,\qquad\mathrm{Im}(b)>0.
\]


It has been proposed in \cite{levelk,andersen2016level} an infinite-dimensional
unitary irreducible representation of $\mathcal{A}_{\hbar}$ as the quantization of the Chern-Simons theory with complex gauge group $\mathrm{SL}(2,\mathbb{C})$. The Hilbert space carrying the representation
is $\mathcal{H}\simeq L^{2}(\mathbb{R})\otimes\mathbb{C}^{N}$, where $N$ is identified to the integer level of the Chern-Simons theory. A state in $\ch$ is represented by the function $f(\mu,m),$ $\mu\in\mathbb{R},$ $m\in\mathbb{Z}/N\mathbb{Z}$. The following basic operators are defined on $\mathcal{H}$ 
\begin{align*}
\bm{\mu}f(\mu,m) & =\mu f(\mu,m),\qquad\bm{\nu}f(\mu,m)=-\frac{N}{2\pi i}\frac{\partial}{\partial\mu}f(\mu,m)\\
e^{\frac{2\pi i}{N}\bm{m}}f(\mu,m) & =e^{\frac{2\pi i}{N}m}f(\mu,m),\qquad e^{\frac{2\pi i}{N}\bm{n}}f(\mu,m)=f(\mu,m+1).
\end{align*}
They satisfy
\be 
[\bm{\mu},\bm{\nu}]=\frac{N}{2\pi i},\qquad e^{\frac{2\pi i}{N}\bm{n}}e^{\frac{2\pi i}{N}\bm{m}}=e^{\frac{2\pi i}{N}}e^{\frac{2\pi i}{N}\bm{m}}e^{\frac{2\pi i}{N}\bm{n}},\qquad [\bm{\nu},e^{\frac{2\pi i}{N}\bm{m}}]=[\bm{\mu},e^{\frac{2\pi i}{N}\bm{n}}]=0
\ee 
The representation of $\bm{u},\bm{y},\tilde{\bm{u}},\tilde{\bm{y}}$ in the quantum torus algebra are represented by
\be
\bm{y} & =&\exp\left[\frac{2\pi i}{N}(-ib\boldsymbol{\mu}-\bm{m})\right],\qquad\tilde{\bm{y}}=\exp\left[\frac{2\pi i}{N}\left(-ib^{-1}\boldsymbol{\mu}+\bm{m}\right)\right],\\
\bm{u} & =&\exp\left[\frac{2\pi i}{N}(-ib\boldsymbol{\nu}-\bm{n})\right],\qquad\tilde{\bm{u}}=\exp\left[\frac{2\pi i}{N}\left(-ib^{-1}\boldsymbol{\nu}+\bm{n}\right)\right].
\ee
Their actions on states $f(\mu,m)$ are given by
\be
\bm{y}f(\mu,m)&=&e^{\frac{2\pi i}{N}(-ib {\mu}-{m})}f(\mu,m),\qquad \bm{u}f(\mu,m)=f(\mu+ib,m-1),\label{repuandy}\\
\tilde{\bm{y}}f(\mu,m)&=&e^{\frac{2\pi i}{N}(-ib^{-1} {\mu}+{m})}f(\mu,m),\qquad  \tilde{\bm{u}}f(\mu,m)=f(\mu+ib^{-1},m+1).\label{repuandy1}
\ee
These operators are unbounded
operators. The common domain $\mathfrak{D}$ of their Laurent polynomials
contains $f(\mu,m)$ being entire functions in $\mu$ and satisfying
\be
e^{\alpha_1\frac{2\pi}{N}b\mu}e^{\alpha_2\frac{2\pi}{N}b^{-1}\mu}f(\mu+ib\b_1+ib^{-1}\b_2,m)\in L^{2}(\mathbb{R}),
\qquad\forall\,m_0\in\mathbb{Z}/N\mathbb{Z},\quad\alpha_i,\b_i\in\mathbb{Z}.\label{domainD0}
\ee
The Hermite functions $e^{-\mu^{2}/2}H_{n}(\mu)$ , $n=1,\cdots,\infty$
satisfy all the requirements and span a dense domain in $L^{2}(\mathbb{R})$,
so $\mathfrak{D}$ is dense in $\mathcal{H}$. The operators $\bm{u},\tilde{\bm{u}},\bm{y},\tilde{\bm{y}}$ form the $({\bfq}^2,\tilde{{\bfq}}^2)$-Weyl algebra on the domain $\Fd$:
\be
\bm{u}\bm{y}={\bfq}^2\bm{y}\bm{u},\qquad\tilde{\bm{u}}\tilde{\bm{y}}=\tilde{{\bfq}}^2\tilde{\bm{y}}\tilde{\bm{u}},\qquad \bm{u}\tilde{\bm{y}}=\tilde{\bm{y}}\bm{u},\qquad \tilde{\bm{u}}{\bm{y}}={\bm{y}}\tilde{\bm{u}}.
\ee
The tilded and untilded operators are related by the Hermitian conjugate
\[
\bm{u}^{\dagger}=\tilde{\bm{u}},\qquad\bm{y}^{\dagger}=\tilde{\bm{y}},
\]
and they are normal operators. 

In the following discussion, we sometimes use the following
formal notation 
\[
\bm{u}=e^{\bm{U}},\qquad\bm{y}=e^{\bm{Y}};\qquad\tilde{\bm{u}}=e^{\tilde{\bm{U}}},\qquad\tilde{\bm{y}}=e^{\tilde{\bm{Y}}}
\]
where 
\begin{align*}
\bm{Y} & =\frac{2\pi i}{N}(-ib\boldsymbol{\mu}-\bm{m}),\qquad\tilde{\bm{Y}}=\frac{2\pi i}{N}\left(-ib^{-1}\boldsymbol{\mu}+\bm{m}\right),\\
\bm{U} & =\frac{2\pi i}{N}(-ib\boldsymbol{\nu}-\bm{n}),\qquad\tilde{\bm{U}}=\frac{2\pi i}{N}\left(-ib^{-1}\boldsymbol{\nu}+\bm{n}\right),
\end{align*}
satisfy the canonical commutation relation
\[
[\bm{U},\bm{Y}]=2\hbar,\qquad[\tilde{\bm{U}},\tilde{\bm{Y}}]=2\tilde{\hbar},\qquad [\bm{U},\tilde{\bm{Y}}]=[\tilde{\bm{U}},{\bm{Y}}]=0.
\]

We denote by $\cl(\mathfrak{D})$ the space of linear operators on $\mathfrak{D}$. The representation $\rho$: $\mathcal{A}_{\hbar}\to\mathcal{L}(\mathfrak{D})$
is given by
\begin{align}
\rho:\  & \bm{u}_{\alpha,\beta}\mapsto e^{\alpha\bm{U}+\beta\bm{Y}}=\bfq^{-\alpha\beta}\bm{u}^{\alpha}\bm{y}^{\beta},\label{eq:reptor1}\\
 & \tilde{\bm{u}}_{\alpha,\beta}\mapsto e^{\alpha\tilde{\bm{U}}+\beta\tilde{\bm{Y}}}=\tilde{\bfq}^{-\alpha\beta}\tilde{\bm{u}}^{\alpha}\tilde{\bm{y}}^{\beta}.\label{eq:reptor2}
\end{align}
The relations (\ref{eq:quantumtorus1}) and (\ref{eq:quantumtorus2})
are obtained by applying the $({\bfq}^2,\tilde{{\bfq}}^2)$-Weyl
algebra. In the following, we often denote $\rho(\bm{u}_{\alpha,\beta}$)
by $\bm{u}_{\alpha,\beta}$ for simplifying notations. The $\star$-stucture
is represented by the Hermitian conjugate on $\Fd$:
\[
\bm{u}_{\alpha,\beta}^{\dagger}=\tilde{\bfq}^{\alpha\beta}\tilde{\bm{y}}^{\beta}\tilde{\bm{u}}^{\alpha}=\tilde{\bfq}^{-\alpha\beta}\tilde{\bm{u}}^{\alpha}\tilde{\bm{y}}^{\beta}=\tilde{\bm{u}}_{\alpha,\beta}.
\]

\begin{lemma}\label{irreducible1}

The representation $\rho$ is irreducible in the sense that any bounded operator $\bm\co\in \cl(\ch)$ commuting with all elements in $\ca_h$ (i.e. $\bm\co\bm{a}\psi=\bm{a}\bm\co\psi$ \footnote{The commutativity may be written as $\bm\co\bm{a}\subset\bm{a}\bm\co$ (the graph of $\bm{a}\bm\co$ contains the graph of $\bm\co\bm{a}$), namely, $\bm{a}\bm\co$ is an extension of $\bm\co\bm{a}$.} for all $\psi\in\mathfrak{D}$ and $\bm{a}\in\ca_h$) is a scalar multiple of identity operator.
	
\end{lemma}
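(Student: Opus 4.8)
The plan is to exploit the concrete realization of $\rho_0$ on $\ch_0 \simeq L^2(\R)\otimes\C^k$ and reduce the statement to the classical irreducibility of the Weyl algebra on $L^2(\R)$ together with that of the clock-and-shift algebra on $\C^k$. Concretely, suppose $\bm\co$ is a bounded operator commuting (in the sense of the footnote) with every $\bm a\in\ca_\hbar$. First I would note that it suffices to show $\bm\co$ commutes with the four generators $\bm x,\bm y,\tilde{\bm x},\tilde{\bm y}$, since these generate $\ca_\hbar$ (after adjoining $\bm m_0,\bm n_0$-type data already encoded in them). Since $\bm\co$ is bounded while $\bm x,\bm y$ are unbounded, I would pass to bounded functions of the generators: the commutation with $\bm y=e^{\frac{2\pi i}{k}(-ib\bm\mu_0-\bm m_0)}$ and with $\tilde{\bm y}=e^{\frac{2\pi i}{k}(-ib^{-1}\bm\mu_0+\bm m_0)}$ implies, upon taking appropriate real linear combinations of the exponents, that $\bm\co$ commutes with all bounded functions of $\bm\mu_0$ and with the finite-dimensional multiplication operator $e^{\frac{2\pi i}{k}\bm m_0}$ — here one uses $\mathrm{Im}(b)>0$ so that $b$ and $b^{-1}$ together with $1$ span enough directions, making $e^{it\bm\mu_0}$ and $e^{\frac{2\pi i}{k}\bm m_0}$ all recoverable. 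Hence $\bm\co$ is decomposable: it acts as multiplication by an $L^\infty$ operator-valued function $\mu_0\mapsto \bm\co(\mu_0)\in\mathrm{End}(\C^k)$ that moreover commutes with the diagonal clock matrix $e^{\frac{2\pi i}{k}\bm m_0}$, so each $\bm\co(\mu_0)$ is diagonal in the $m_0$-basis.

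Next I would bring in the shift generators. Commutation with $\bm x$, which acts by $f(\mu_0,m_0)\mapsto f(\mu_0+ib,m_0-1)$, forces $\bm\co(\mu_0+ib)_{m_0-1,m_0-1} = \bm\co(\mu_0)_{m_0,m_0}$ for a.e. $\mu_0$; likewise commutation with $\tilde{\bm x}$ gives a shift by $ib^{-1}$ together with $m_0\mapsto m_0+1$. The $m_0$-shift part, iterated $k$ times, shows all diagonal entries of $\bm\co(\mu_0)$ agree up to the $\mu_0$-translations, so it remains to see that a bounded measurable function $g(\mu_0)$ on $\R$ with $g(\mu_0+ib)=g(\mu_0)$ and $g(\mu_0+ib^{-1})=g(\mu_0)$ — interpreted correctly, i.e. these are genuine quasi-periodicity constraints realized via the unitary translations on the common domain $\mathfrak D_0$ of entire vectors — must be constant. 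Since $\mathrm{Re}(ib) = -\mathrm{Im}(b)$ and $\mathrm{Re}(ib^{-1}) = \mathrm{Im}(b^{-1})$ have opposite signs while the imaginary parts $\mathrm{Re}(b),\mathrm{Re}(b^{-1})$ are both positive, the two translation vectors $ib$ and $ib^{-1}$ are $\R$-linearly independent in $\C$; acting on the dense space of entire functions in $\mathfrak D_0$ this rules out any nonconstant bounded invariant, so $g$ is constant and $\bm\co$ is a scalar multiple of the identity.

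The main obstacle I anticipate is the domain/unboundedness bookkeeping: $\bm\co$ is only assumed bounded and the commutation relations hold on the dense domain $\mathfrak D_0$, so every manipulation of $\bm x,\bm y$ must be justified either by passing to their bounded spectral functions (for the multiplication-type generators $\bm y,\tilde{\bm y}$) or by carefully using that the shift operators $\bm x,\tilde{\bm x}$ restrict to well-defined operators on $\mathfrak D_0$ and that $\mathfrak D_0$ is invariant and a core. In particular, deducing ``$\bm\co$ commutes with $e^{it\bm\mu_0}$ for all $t\in\R$'' from commutation with the specific exponentials $e^{\frac{2\pi}{k}b\mu_0}$ (which are \emph{unbounded} multiplication operators, since $b$ is not real) requires first forming the unitary combinations $e^{i t \mu_0}$ built from $\bm y/|\bm y|$-type phases, and then invoking the projection-valued measure of the self-adjoint operator $\bm\mu_0$. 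Once the ``$\bm\co$ is decomposable over the spectrum of $\bm\mu_0$ and diagonal over $\Z/k\Z$'' reduction is in place, the remaining step is the standard Schur-type argument, and the $\R$-independence of $\{ib,ib^{-1}\}$ — guaranteed precisely by the hypotheses $|b|=1$, $\mathrm{Re}(b)>0$, $\mathrm{Im}(b)>0$ — closes it.
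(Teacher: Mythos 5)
Your proposal follows the same two-stage strategy as the paper: use commutation with the normal multiplication operators $\bm y,\tilde{\bm y}$ to conclude that $\bm\co$ is itself a multiplication operator, then use commutation with the shift operators $\bm x,\tilde{\bm x}$ to force the multiplier to be constant. The first stage has essentially the same content as the paper's, though you phrase it in von Neumann--algebra language (decomposability over the spectrum of $\bm\mu_0$ plus commutation with the clock matrix $e^{\frac{2\pi i}{k}\bm m_0}$) where the paper constructs the multiplier $\co(\mu_0,m_0)$ by hand from the spectral projections $E(\Delta,m_0')$; both amount to the observation that the von Neumann algebra generated by the injective normal multiplier $\bm y$ is the full maximal abelian algebra. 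The genuine divergence is in the second stage. The paper views $\co$ as a tempered distribution and shows its Fourier transform must be $\delta$-supported at the origin. You instead argue that $\co$, analytically continued to an entire function of $\mu_0$, is doubly quasi-periodic under the two $\R$-linearly independent translations $ib$ and $ib^{-1}$, and then conclude constancy by the Liouville argument for doubly-periodic entire functions. Both routes close the argument; yours is more geometric and bypasses the Fourier machinery, at the cost of leaning more heavily on the entire-function structure of $\mathfrak D_0$. Two points worth tightening in your write-up: (i) $\bm x,\tilde{\bm x}$ are normal but not unitary, so ``unitary translations'' is a misnomer (the complex shift by $ib$ changes moduli, not just phases); (ii) the entirety of $\co$ --- which your Liouville step genuinely requires --- is only gestured at (``interpreted correctly''); the clean way, as the paper does, is to note that $\bm\co$ must preserve $\mathfrak D_0$ (else $\bm a\bm\co\psi$ is undefined) so that $\co = (\bm\co f)/f$ is entire once you choose a non-vanishing $f\in\mathfrak D_0$ such as a Gaussian.
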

	
\begin{proof}
	
It is sufficient to show that any bounded operator on $\ch$ commuting with $\bm{u}=\bm{u}_{1,0},\ \bm{y}=\bm{u}_{0,1},\ \widetilde{\bm{u}}=\widetilde{\bm{u}}_{1,0},\ \widetilde{\bm{y}}=\widetilde{\bm{u}}_{0,1}$ is a scalar multiple of the identity operator. The spectral decomposition of the normal operators $\bm{y},\widetilde{\bm{y}}$ gives the spectral projections $E(\Delta,m'): L^2(\R)\otimes \C^N \to L^2(\Delta)$ defined by $E(\Delta,m')f(\mu,m)=\chi_\Delta(\mu)\delta_{m'}(m)f(\mu,m)$ for any closed interval $\Delta\subset \R$ and $m'\in\Z/N\Z$ (Fixing $m'$, the function $E(\Delta,m')f(\mu,m)$ vanishes for $m\neq m'$ due to the projection $\delta_{m'}(m)$, so it is understood as an element in $L^2(\Delta)$). $\chi_\Delta$ is the characteristic function of $\Delta$. That $\bm\co$ commutes with $\bm{y},\widetilde{\bm{y}}$ implies $[\bm\co,E(\Delta,m')]=0$, so $\bm\co$ leaves the image $L^2(\Delta)$ of the projection invariant. We define 
	\be 
	\co_{\Delta,m'}(\mu,m):=\lt(\bm{\co}\lt(\chi_\Delta\otimes\delta_{m'}\rt)\rt)(\mu,m)\in L^2(\Delta).
	\ee
$\co_{\Delta,m'}(\mu,m)=0$ if $m\neq m'$. For any $\Delta'\subset \Delta$, $\bm\co E(\Delta',m')E(\Delta,m')= E(\Delta',m') \bm\co E(\Delta,m')$ implies $\bm\co(\chi_{\Delta'}\otimes\delta_{m'})=\chi_{\Delta'}\bm\co(\chi_{\Delta}\otimes\delta_{m'})$ and thus $\co_{\Delta',m'}(\mu,m)=\co_{\Delta,m'}(\mu,m)$ for $\mu\in\Delta'$, i.e. the value of $\co_{\Delta,m'}(\mu,m)$ at any $\mu\in\Delta$ is independent of the choice of $\Delta$. By using any cover of $\R$ with closed intervals, we obtain the function $\co_{m'}(\mu,m)$ such that $\co_{\Delta,m'}(\mu,m)=\co_{m'}(\mu,m)\chi_\Delta(\mu)=\co_{m'}(\mu,m)\chi_\Delta(\mu)\delta_{m'}(m)$ for any $\Delta$. We define $\co(\mu,m):=\co_{m}(\mu,m)$ and have $\co_{\Delta,m'}(\mu,m)=\co(\mu,m)\chi_\Delta(\mu)\delta_{m'}(m)$. For any simple function $f_\fp=\sum_{\Delta}\sum_{m}f_{\Delta,m}(\chi_\Delta\otimes\delta_{m})$ based on a set $\fp$ of disjoint $\Delta$, we obtain
	\be
	\lt(\bm\co f_\fp\rt)(\mu,m)=\sum_{\Delta\in\fp}\sum_{m'\in\Z/N\Z}f_{\Delta,m'}\co_{\Delta,m'}(\mu,m)=\co(\mu,m)f_\fp(\mu,m).
	\ee 
	For any function $f(\mu,m)\in \ch$, there exists a sequence of simple functions $f_\fp$ that converges to $f$ in the sense of the Hilbert space norm. Since $\bm\co$ is a bounded operator, $\bm\co f_\fp$ converges to $\bm\co f$. Therefore,
	\be 
	\lt(\bm\co f\rt)(\mu,m)=\co(\mu,m)f (\mu,m),
	\ee
	for any $f\in\ch$. The assumption that $\bm\co\psi\in\Fd$ for any $\psi\in\Fd$ (in order that $\bm{a}\bm\co\psi$ is well-defined) implies that $\co(\mu,m)$ is an entire function in $\mu$ after analytic continuation. $|\co(\mu,m)|$ is a bounded function, since $\bm\co$ is a bounded operator.
	
	The commutativity between $\bm\co$ and $\bm{u},\widetilde{\bm u}$ implies that $\co(\mu,m)$ is a constant: We consider $\co(\mu,m)$ as a tempered distribution: $\co\in \mathscr{S}'(\R)\otimes \C^k$ and for any $f\in  \mathfrak{D}$, 
	\be  
	\co[f]=\sum_{m\in\Z/N\Z}\int\rmd \mu \,\co(\mu,m) f(\mu,m)=\sum_{n\in\Z/N\Z}\int\rmd \nu \,\cf[\co](\nu,n) \tilde{f}(\nu,n), 
	\ee  
where the Fourier transformations of the tempered distribution $\co$ is given by
	\be
	\cf[\co](\nu,n)=\frac{1}{N}\sum_{m\in\mathbb{Z}/N\mathbb{Z}}\int d\mu\,e^{-\frac{2\pi i}{N}\left(\mu\nu-mn\right)}\co(\mu,m),
	\ee
That $\bm{u},\widetilde{\bm u}$ commute with $\bm\co$ implies that  $e^{\frac{2\pi i}{N}\left(-ib\nu-n\right)}\cf[\co](\nu,n)=e^{\frac{2\pi i}{N}\left(-ib^{-1}\nu+n\right)}\cf[\co](\nu,n)=\cf[\co](\nu,n)$, which implies $\mathcal{F}\left[\co\right](\nu,n)=\text{constant}\times \delta(\nu)\delta_{\exp\left(\frac{2\pi i}{N}n\right),1}$, then $\co(\mu,m)$ is constant.

	
	
	\end{proof}

\section{Representation of a quantum deformed Lorentz algebra}\label{Representation of a q-deformed Lorentz algebra}

\subsection{The representation}

Based on the representation $(\Fd,\rho)$ of the quantum torus
algebra, we obtain a representation of $\suq$ on $\Fd$: We define a family of operators $E_{\lambda},F_{\lambda},K_{\lambda},K_{\lambda}^{-1}$ labelled by a parameter $\l\in\C^\times$: 
\be   
K_{\lambda}&=&\bm{u}_{-1,0},\qquad K_{\lambda}^{-1}=\bm{u}_{1,0},\qquad F_\l=-\frac{iq}{\bm{q}-\bm{q}^{-1}}\bm{u}_{0,1}\\
E_\l&=&-\frac{iq^{-1}}{\bm{q}-\bm{q}^{-1}}\left[\left(\lambda+\lambda^{-1}\right)\bm{u}_{0,-1}+\bm{u}_{1,-1}+\bm{u}_{-1,-1}\right],\label{KKEF2}
\ee
where $q=\bfq^{1/2}=e^{\hbar/2}$. It is straight-forward to check that the operator algebra of $E_{\lambda},F_{\lambda},K_{\lambda},K_{\lambda}^{-1}$ represents the commutation relation of $\suq$ \cite{kassel2012quantum}:
\be
KE&=&\bfq^{2}EK,\qquad KF=\bfq^{-2}FK,\qquad\left[E,F\right]=\frac{K-K^{-1}}{\bfq-\bfq^{-1}}.\label{EFK}
\ee

Similarly, we define the tilded operators with $\tilde{\l}\in\C^\times$
\be
\tilde{K}_{\lambda}&=&\tilde{\bm{u}}_{-1,0},\qquad\tilde{K}_{\lambda}^{-1}=\tilde{\bm{u}}_{1,0},\qquad \tilde{F}_{\lambda}=-\frac{i\tilde{q}^{-1}}{\tilde{\bm{q}}-\tilde{\bm{q}}^{-1}}\tilde{\bm{u}}_{0,1},\\
\tilde{E}_\l&=&-\frac{i\tilde{q}}{\tilde{\bm{q}}-\tilde{\bm{q}}^{-1}}\left[\left(\overline{\lambda}+\overline{\lambda}^{-1}\right)\tilde{\bm{u}}_{0,-1}+\tilde{\bm{u}}_{1,-1}+\tilde{\bm{u}}_{-1,-1}\right],
\ee
where $\tilde{q}=\tilde{\bfq}^{1/2}=e^{\tilde \hbar/2}$.
Their operator algebra represents the commutation relation of $\suqt$:
\be
\tilde{K}\tilde{E}=\tilde{{\bfq}}^{2}\tilde{E} \tilde{K} ,\qquad\tilde{K}\tilde{F}=\tilde{{\bfq}}^{-2}\tilde{F}\tilde{K},\qquad\left[\tilde{E},\tilde{F}\right]=\frac{\tilde{K} -\tilde{K}^{-1}}{\tilde{{\bfq}}-\tilde{{\bfq}}^{-1}}.\label{EFKt}
\ee
The tensor product $\suquqt$ has the $*$-structure:
\be 
E^*=\tilde{E},\qquad F^*=\tilde{F},\qquad K^{\pm1}{}^*=\tilde{K}^{\pm1},\label{EFKstar}
\ee
which is represented by the Hermitian conjugate on $\Fd$ when $\lambda^{*}=\tilde{\lambda}$:
\be  
 E_{\lambda}^{\dagger}=\tilde{E}_{\lambda},\qquad F_{\lambda}^{\dagger}= \tilde{F}_{\lambda},\qquad K_{\lambda}^{\pm1\dagger}=\tilde{K}_{\lambda}^{\pm1}.
\ee 
It shows that $\Fd$ carries a $*$-representation of
$\suquqt$ labelled by the complex parameters $(\lambda,\tilde{\lambda})$ with $\lambda^{*}=\tilde{\lambda}$. We denote this representation by $\pi_{\lambda,\tilde{\lambda}}$. It is manifest that $\pi_{\lambda,\tilde{\lambda}}=\pi_{\lambda^{-1},\tilde{\lambda}^{-1}}$.

\begin{lemma}
	The representation $(\Fd,\pi_{\lambda,\tilde{\lambda}})$ is irreducible.
\end{lemma}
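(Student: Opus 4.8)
The plan is to reduce the irreducibility of $(\ch,\pi_{\lambda,\tilde\lambda})$ to the already-established irreducibility of $(\ch,\rho)$ from Lemma~\ref{irreducible1} (in its $k=2N$ form). Suppose $\bm\co$ is a bounded operator on $\ch$ commuting with all of $E_\lambda, F_\lambda, K_\lambda, K_\lambda^{\pm1}$ and the tilded operators, in the same graph-inclusion sense as in Lemma~\ref{irreducible1}. I would first observe that, since $K_\lambda^{-1}=\bm u$ and $\tilde K_\lambda^{-1}=\tilde{\bm u}$, commutation with $K_\lambda^{\pm1}$ and $\tilde K_\lambda^{\pm1}$ already gives commutation with $\bm u,\bm u^{-1},\tilde{\bm u},\tilde{\bm u}^{-1}$. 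It therefore remains to produce $\bm y$ and $\tilde{\bm y}$ (equivalently $\bm y^{-1},\tilde{\bm y}^{-1}$) from the remaining generators; then $\bm\co$ commutes with the full $\bfq^2,\tilde\bfq^2$-Weyl algebra generated by $\bm u,\bm y,\tilde{\bm u},\tilde{\bm y}$, and the argument in the proof of Lemma~\ref{irreducible1} shows $\bm\co$ is a scalar.

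The key algebraic step is to invert the defining relations \eqref{KKEF1}--\eqref{KKEF2}. From $F_\lambda=-\frac{i}{\bfq-\bfq^{-1}}q\bm y$ we get $\bm y$ as a scalar multiple of $F_\lambda$, so commutation with $F_\lambda$ immediately yields commutation with $\bm y$ (and, since $\bm y$ is invertible with bounded-on-$\Fd$ inverse built from the Weyl algebra, with $\bm y^{-1}$ as well). Likewise $\tilde F_\lambda$ gives $\tilde{\bm y}$. At this point $\bm\co$ commutes with $\bm u,\bm y,\tilde{\bm u},\tilde{\bm y}$ — that is, with the generators $\bm u_{1,0}=\bm u$, $\bm u_{0,1}=\bm y$, $\tilde{\bm u}_{1,0}$, $\tilde{\bm u}_{0,1}$ of $\ca_{2\hbar}$ — so by the irreducibility of $(\ch,\rho)$ stated at the end of Section~\ref{Quantum torus algebra and the representation}, $\bm\co$ is a scalar multiple of the identity. (One does not even need $E_\lambda,\tilde E_\lambda$; they are consistency data, but $K^{\pm1}$ and $F$ alone recover the Weyl algebra.)

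The main technical point to handle carefully is the domain/unboundedness bookkeeping, exactly as in Lemma~\ref{irreducible1}: the operators $E_\lambda,F_\lambda,K_\lambda^{\pm1}$ are unbounded, defined on the dense domain $\Fd$, and ``$\bm\co$ commutes with $\bm a$'' must be read as $\bm\co\bm a\subset\bm a\bm\co$, which presupposes $\bm\co\Fd\subset\Fd$. I would state this hypothesis explicitly in the lemma (mirroring the footnote in Lemma~\ref{irreducible1}) and check that passing from $F_\lambda$ to $\bm y=\frac{i(\bfq-\bfq^{-1})}{q}F_\lambda$ and from $K_\lambda^{-1}$ to $\bm u$ is harmless, since these are scalar multiples, and that $\bm y^{-1}$, while not a scalar multiple of a generator, lies in $\rho(\ca_{2\hbar})$ and acts on $\Fd$, so commutation with it follows formally from commutation with $\bm y$ on the invariant domain $\Fd$. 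The only genuine obstacle is thus to confirm that $\bm\co\Fd\subset\Fd$ is enough to run the Fourier-analytic constancy argument of Lemma~\ref{irreducible1} verbatim after the reduction; since that argument only used the Weyl generators $\bm x,\bm y$ (here $\bm u,\bm y$) and the domain $\Fd$, no new estimates are needed, and the proof closes by invoking Lemma~\ref{irreducible1}.
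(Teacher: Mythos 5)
Your proposal is correct and follows essentially the same route as the paper: recover $\bm u=K_\lambda^{-1}$, $\bm y\propto F_\lambda$ (and their tilded partners) and then invoke the irreducibility of the $\ca_{2\hbar}$ representation established via Lemma~\ref{irreducible1}. The extra remarks on domains and on $E_\lambda$ being unnecessary are sound but not part of the paper's argument, which states the reduction directly.
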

	
\begin{proof} The generators of the $\bfq^2,\tilde{\bfq}^2$-Weyl algebra can be recovered by $\suquqt$ generators in the following way:
	\be
	&&\bm{u}=K_\l^{-1},\qquad \bm{y}=i(\bfq-\bfq^{-1})q^{-1}F_\l,\\
	&&\tilde{\bm{u}}=\tilde K_\l^{-1},\qquad \tilde{\bm{y}}=i(\tilde{\bfq}-\tilde{\bfq}^{-1})\tilde{q}\tilde{F}_\l.
	\ee
Then the irreducibility is implied by Lemma \ref{irreducible1}.
	
\end{proof}

$\suquqt$ is a $*$-Hopf algebra, with the co-product $\Delta$, antipode $S$, and co-unit $\eps$ given by
\be 
&&\Delta E=E\otimes K+1\otimes E,\qquad\Delta\tilde{E}=\tilde{E}\otimes\tilde{K}+1\otimes\tilde{E},\\
&&\Delta F=F\otimes1+K^{-1}\otimes F,\qquad\Delta\tilde{F}=\tilde{F}\otimes1+\tilde{K}^{-1}\otimes\tilde{F}\\
&&\Delta K^{\pm1}=K^{\pm1}\otimes K^{\pm1},\qquad\Delta\tilde{K}^{\pm1}=\tilde{K}^{\pm1}\otimes\tilde{K}^{\pm1}\\
&&S\left(K^{\pm1}\right)=K^{\mp1},\qquad S\left(E\right)=-EK^{-1},\qquad S\left(F\right)=-KF,\\
&&S\left(\tilde{K}^{\pm1}\right)=\tilde{K}^{\mp1},\qquad S\left(\tilde{E}\right)=-\tilde{E}\tilde{K}^{-1},\qquad S\left(\tilde{F}\right)=-\tilde{K}\tilde{F}\\
&&\varepsilon\left({K}^{\pm1}\right)=\varepsilon\left(\tilde{{K}}^{\pm1}\right)=1,\qquad \varepsilon(E)=\varepsilon(F)=\varepsilon(\tilde{E})=\varepsilon(\tilde{F})=0.
\ee

The Casimir operators defined by
\be 
Q_{\lambda} & =&-\left({\bfq}-{\bfq}^{-1}\right)^{2}E_{\lambda}F_{\lambda}-{\bfq}^{-1}K_{\lambda}-{\bfq}K_{\lambda}^{-1}\\
\tilde{Q}_{\lambda} & =&-\left(\tilde{{\bfq}}-{\tilde{\bfq}}^{-1}\right)^{2}\tilde{E}_{\lambda}\tilde{F}_{\lambda}-\tilde{{\bfq}}^{-1}K_{\lambda}-\tilde{{\bfq}}K_{\lambda}^{-1}
\ee
are constant on $\ch$: $Q_\l=\left(\lambda+\lambda^{-1}\right)\mathrm{id}_{\ch}$ and $\tilde{Q}_{\lambda}=\left(\tilde{\lambda}+\tilde{\lambda}^{-1}\right)\mathrm{id}_{\ch}$. Then $Q_{\lambda}^{\dagger}=\tilde{Q}_{\lambda}$ when $\lambda^{*}=\tilde{\lambda}$.

$\suquqt$ with the $*$-structure described above is a quantum deformation of (the universal enveloping algebra of) the Lorentz Lie algebra, which is recovered by $\bfq\to 1$ or $N\to\infty$ \footnote{The modular double of $U_q(sl(2,\R))$ cannot be called as quantum deformation of Lorentz algebra, because it is at $N=1$, and there is no limit that can send the quantum group to classical Lorentz algebra.}: The Lorentz Lie algebra as a real Lie algebra is generated by rotation generators $J^i$ and boost generators $K^i$ ($i=1,2,3$), with the commutation relations $[J^i,J^j]=-\eps^{ijk}J^k$, $[K^i,K^j]=\eps^{ijk}J^k$, $[J^i,K^j]=-\eps^{ijk}K^k$ and $*$-structure $J^i{}^*=-J^i$, $K^i{}^*=-K^i$ \footnote{$J^i,K^i$ relate to $H_\pm,H_3,F_\pm,F_3$ in \cite{gelfand2018representations} by $H_+=-i J^1+J^2$, $H_-=-i J^1-J^2$, $H_3=-iJ^3$, $F_+=i K^1-K^2$, $F_-=iK^1+K^2$, $F_3=iK^3$, with $H_3^*=H_3,\ H^*_\pm=H_\mp,\ F_3^*=F_ 3,\ F^*_\pm=F_\mp$.}. The self-dual and anti-self-dual generators are defined by $L^i=-\frac{1}{2}(J^i+iK^i)$ and $\tilde{L}^i=-\frac{1}{2}(J^i-iK^i)$, satisfying $[L^i,L^j]=\eps^{ijk}L^k$, $[\tilde L^i,\tilde L^j]=\eps^{ijk}\tilde L^k$, $[L^i,\tilde L^j]=0$, and $L^i{}^*=-\tilde L^i$. Therefore, the Lorentz Lie algebra is isomorphic to $sl_2\oplus sl_2$. We define $E=iL_1-L_2$, $F=iL_1+L_2$, $H=iL_3$, $\tilde E=i\tilde L_1+\tilde L_2$, $\tilde F=i\tilde L_1-\tilde L_2$, $\tilde H=-i\tilde L_3$, satisfying 
\be    
&&[E,F]=2H,\qquad [H,E]=E,\qquad [H,F]=-F,\label{EFH}\\
&&[\tilde E,\tilde F]=2\tilde H,\qquad [\tilde H,\tilde E]=\tilde E,\qquad [\tilde H,\tilde F]=-\tilde F,\label{EFHt}
\ee
and 
\be 
E^*=\tilde{E},\qquad F^*=\tilde{F},\qquad H^*=-\tilde{H}.\label{EFHstar}
\ee
The relations \eqref{EFK} and \eqref{EFKt} of $\suquqt$ are deformations of \eqref{EFH} and \eqref{EFHt} by identifying $K=\bfq^{2H}$ and $\tilde K=\tilde{\bfq}^{2\tilde H}$. The $*$-structure of $\suquqt$ in \eqref{EFKstar} follows from \eqref{EFHstar} due to $\bfq^*=\tilde{\bfq}^{-1}$.

\subsection{The Clebsch-Gordan decomposition} \label{The Clebsch-Gordan decomposition}

The tensor product representation of $\suquqt$
on $\mathcal{H}\otimes\mathcal{H}$ is given by $(\pi_{\lambda_{2},\tilde{\lambda}_{2}}\otimes\pi_{\lambda_{1},\tilde{\lambda}_{1}})\circ\Delta$. In the following, we often use the notation e.g. $K_{1}\equiv K_{\lambda_{1}}\equiv\pi_{\lambda_{1},\tilde{\lambda}_{1}}(K)$
and similar for other generators. We only focus on the represesentation
of untilded operator, while the tilded operator can be analyzed in
the same way. The discussion in this section is a straight-forward generalization from the Clebsch-Gordan decomposition of the modular-doubled $U_q(sl(2,\R))$ in \cite{Nidaiev:2013bda} by incorporating the discrete degrees of freedom in $\ch\simeq L^2(\R)\otimes\C^N$.

The representation of Casimir $Q_{21}=(\pi_{\lambda_{2},\tilde{\lambda}_{2}}\otimes\pi_{\lambda_{1},\tilde{\lambda}_{1}})\circ\Delta Q$
is expressed as
\be
Q_{21}
&=&K_{2}^{-1}Q_{1}+Q_{2}K_{1}+\left(\bfq+\bfq^{-1}\right)K_{2}^{-1}K_{1}+\left(Q_{2}+{\bfq}^{-1}K_{2}+{\bfq}K_{2}^{-1}\right)F_{2}^{-1}K_{2}^{-1}K_{1}F_{1}\nonumber\\
&&+\,F_{2}\left(Q_{1}+{\bfq}^{-1}K_{1}+{\bfq}K_{1}^{-1}\right)F_{1}^{-1}
\ee
where $Q_{1}$ and $Q_{2}$ are proprtional to identity operator on
$\mathcal{H}$. Our task is to find the unitary transformation
to diagonalize $Q_{21}$. 

It turns out that we can use a few elementary unitary transformations to simplify $Q_{21}$:
\begin{itemize}
\item Firstly, given $f(\mu_{2},m_{2}\mid\mu_{1},m_{1})\in\mathcal{H}\otimes\mathcal{H}$,
the unitary transformation $\mathcal{S}_{2}$ shift $\mu_{1},m_{1}$ by $-\mu_2,-m_2$
\begin{align*}
\cs_{2}f(\mu_{2},m_{2}\mid\mu_{1},m_{1})= & e^{\frac{2\pi i}{N}\left(\mu_{2}\bm{\nu}_{1}-m_{2}\bm{n}_{1}\right)}f(\mu_{2},m_{2}\mid\mu_{1},m_{1})\\
= & f(\mu_{2},m_{2}\mid\mu_{1}-\mu_{2},m_{1}-m_{2}).
\end{align*}
\item The unitary transformation $t_{21}$ is defined by the quantum dilogarithm:
\[
t_{21}=\varphi\left(e^{\bm{Y}_{1}-\bm{Y}_{2}+\bm{U}_{2}},e^{\tilde{\bm{Y}}_{1}-\tilde{\bm{Y}}_{2}+\tilde{\bm{U}}_{2}}\right),
\]
where $\varphi(y,\tilde{y})$ is the quantum dilogarithm function (see Appendix \ref{Quantum dilogarithm} for details)
\[
\varphi(y,\tilde{y})=\left[\frac{\prod_{j=0}^{\infty}\left(1+{\bfq}^{2j+1}y\right)}{\prod_{j=0}^{\infty}\left(1+\tilde{{\bfq}}^{-2j-1}\tilde{y}\right)}\right]^{-1}.
\]
To understand the action of $t_{21}$ on $\mathcal{H}\otimes\mathcal{H}$,
we consider the Weil transformation $\mathcal{V}_{2}$ representing
the following symplectic transformation \cite{levelk}:
\be
&&\qquad \left(\begin{array}{c}
Y_{2}\\
U_{2}
\end{array}\right)  \mapsto\left(\begin{array}{c}
Y_{2}-U_{2}\\
X_{2}
\end{array}\right)=\begin{pmatrix}1 & -1\\
0 & 1
\end{pmatrix}\left(\begin{array}{c}
Y_{2}\\
U_{2}
\end{array}\right)=\begin{pmatrix}0 & 1\\
-1 & 0
\end{pmatrix}\begin{pmatrix}1 & 0\\
1 & 1
\end{pmatrix}\begin{pmatrix}0 & -1\\
1 & 0
\end{pmatrix}\left(\begin{array}{c}
Y_{2}\\
X_{2}
\end{array}\right),\nonumber\\
&&\mathcal{V}_{2}  f(\mu_{2},m_{2}\mid\mu_{1},m_{1})=\frac{1}{N^{2}}\sum_{n,m\in\mathbb{Z}/N\mathbb{Z}}\int d\nu d\mu\,e^{\frac{2\pi i}{N}\left(\mu_{2}\nu-m_{2}n\right)}(-1)^{n^2}e^{\frac{\pi i}{N}\left(\nu^{2}-n^{2}\right)}e^{-\frac{2\pi i}{N}\left(\mu\nu-mn\right)}f(\mu,m\mid\mu_{1},m_{1}),\nonumber
\ee
for all $f\in \Fd\otimes \Fd$. $\mathcal{V}_{2}$ diagonalizes $e^{-\bm{Y}_{2}+\bm{U}_{2}}$ and$e^{-\tilde{\bm{Y}}_{2}+\tilde{\bm{U}}_{2}}$
by 
\be
\mathcal{V}_{2}^{-1}e^{-\bm{Y}_{2}+\bm{U}_{2}}\mathcal{V}_{2}f=-e^{-\bm{Y}_{2}}f,\qquad \mathcal{V}_{2}^{-1}e^{-\tilde{\bm{Y}}_{2}+\tilde{\bm{U}}_{2}}\mathcal{V}_{2}f=-e^{-\tilde{\bm{Y}}_{2}}f,
\ee
for any $f\in\mathfrak{D}\otimes\mathcal{H}$. Therefore
\[
t_{21}=\mathcal{V}_{2}\varphi\left(-e^{\bm{Y}_{1}-\bm{Y}_{2}},-e^{\tilde{\bm{Y}}_{1}-\tilde{\bm{Y}}_{2}}\right)\mathcal{V}_{2}^{-1},
\]
where $\varphi(-e^{\bm{Y}_{1}-\bm{Y}_{2}},-e^{\tilde{\bm{Y}}_{1}-\tilde{\bm{Y}}_{2}})$
simply multiplies the quantum dilogarithm function $\varphi(-y_{1}y_{2}^{-1},-\tilde{y}_{1}\tilde{y}_{2}^{-1})$
to $f(\mu_{2},m_{2}\mid\mu_{1},m_{1})$, with $y_a=\exp[\frac{2\pi i}{N}(-ib\mu_a-m_a)]$ and $\tilde{y}_a=\exp[\frac{2\pi i}{N}(-ib^{-1}\mu_a+m_a)]$, $a=1,2$.
\item The unitary transformation $C_{1}$ is defined by
\be
C_{1}^{-1}=\varphi\left(e^{\bm{Y}_{1}-L_{2}},e^{\tilde{\bm{Y}}_{1}-\tilde{L}_{2}}\right)\frac{\varphi\left(e^{-\bm{U}_{1}+L_{1}},e^{-\tilde{\bm{U}}_{1}+\tilde{L}_{1}}\right)}{\varphi\left(e^{\bm{U}_{1}+L_{1}},e^{\tilde{\bm{U}}_{1}+\tilde{L}_{1}}\right)}
\ee
where $e^{L_{a}}=\lambda_{a}$, $a=1,2$, is a notation similar to e.g. $\bmy=e^{\bm Y}$ and $\bm{u}=e^{\bm U}$. 
\end{itemize}
\begin{lemma}\label{lemma31}
The unitary transformation $t_{21}\cs_{2}$ transforms $Q_{21}$
to act only on the second factor of $\mathcal{H}\otimes\mathcal{H}$.
Namely $Q_{21}=t_{21}\cs_{2}Q_{1}^{\prime}\cs_{2}^{-1}t_{21}^{-1}$
where
\begin{equation}
Q_{1}^{\prime}=e^{\bm{Y}_{1}-\bm{U}_{1}}+e^{-\bm{Y}_{1}+\bm{U}_{1}}+e^{-\bm{Y}_{1}}Q_{1}+e^{-\bm{U}_{1}}Q_{2}+e^{-\bm{U}_{1}-\bm{Y}_{1}}.\label{eq:Q1primes}
\end{equation}
\end{lemma}
\begin{proof}
The proofs of this Lemma, Lemma \ref{lemma32}, and Lemma \ref{CGirrep} only present the formal algebraic computations. More detailed discussions involving analyzing operator domains are given in Appendix \ref{Operator domains}.

Since $\cs_{2}e^{\bm{Y}_{1}}\cs_{2}^{-1}=e^{\bm{Y}_{1}-\bm{Y}_{2}}$ and $\cs_2 e^{\bm{U}_1}\cs_2^{-1}=e^{\bm{U}_1}$,
we have 
\[
\cs_{2}Q_{1}^{\prime}\cs_{2}^{-1}=e^{\bm{Y}_{1}-\bm{Y}_{2}-\bm{U}_{1}}+e^{-\bm{Y}_{1}+\bm{Y}_{2}+\bm{U}_{1}}+e^{-\bm{Y}_{1}+\bm{Y}_{2}}Q_{1}+e^{-\bm{U}_{1}}Q_{2}+e^{-\bm{U}_{1}-\bm{Y}_{1}+\bm{Y}_{2}}.
\]
Since $[\bm{Y}_{1}-\bm{Y}_{2}-\bm{U}_{1},\,\bm{Y}_{1}-\bm{Y}_{2}+\bm{U}_{2}]=[-\bm{Y}_{2},\bm{U}_{2}]+[-\bm{U}_{1},\,\bm{Y}_{1}]=0$, the first two terms commute with $t_{21}=\varphi(e^{\bm{Y}_{1}-\bm{Y}_{2}+\bm{U}_{2}},e^{\tilde{\bm{Y}}_{1}-\tilde{\bm{Y}}_{2}+\tilde{\bm{U}}_{2}})$.
In the following, we often suppress the tilded entry of $\varphi$
when it is not involved in the manipulation. We check the following relation by using the recursion relation of the quantum dilogarithm $\varphi$ in \eqref{recursion3} of Appendix \ref{Quantum dilogarithm}:
\begin{align*}
t_{21}e^{-\bm{Y}_{1}+\bm{Y}_{2}}t_{21}^{-1} & =\varphi\left(e^{\bm{Y}_{1}-\bm{Y}_{2}+\bm{U}_{2}}\right)e^{-\bm{Y}_{1}+\bm{Y}_{2}}\varphi\left(e^{\bm{Y}_{1}-\bm{Y}_{2}+\bm{U}_{2}}\right)^{-1}\\
 & =\varphi\left(e^{\bm{Y}_{1}-\bm{Y}_{2}+\bm{U}_{2}}\right)\varphi\left({\bfq}^{-2}e^{\bm{Y}_{1}-\bm{Y}_{2}+\bm{U}_{2}}\right)^{-1}e^{-\bm{Y}_{1}+\bm{Y}_{2}}\\
 & =\left(1+{\bfq}^{-1}e^{\bm{Y}_{1}-\bm{Y}_{2}+\bm{U}_{2}}\right)e^{-\bm{Y}_{1}+\bm{Y}_{2}}\\
 & =e^{-\bm{Y}_{1}+\bm{Y}_{2}}+e^{\bm{U}_{2}},
\end{align*}
\begin{align*}
t_{21}e^{-\bm{U}_{1}}t_{21}^{-1} & =\varphi\left(e^{\bm{Y}_{1}-\bm{Y}_{2}+\bm{U}_{2}}\right)e^{-\bm{U}_{1}}\varphi\left(e^{\bm{Y}_{1}-\bm{Y}_{2}+\bm{U}_{2}}\right)^{-1}\\
 & =\varphi\left(e^{\bm{Y}_{1}-\bm{Y}_{2}+\bm{U}_{2}}\right)\varphi\left({\bfq}^{-2}e^{\bm{Y}_{1}-\bm{Y}_{2}+\bm{U}_{2}}\right)^{-1}e^{-\bm{U}_{1}}\\
 & =\left(1+{\bfq}^{-1}e^{\bm{Y}_{1}-\bm{Y}_{2}+\bm{U}_{2}}\right)e^{-\bm{U}_{1}}\\
 & =e^{-\bm{U}_{1}}+e^{\bm{Y}_{1}-\bm{U}_{1}-\bm{Y}_{2}+\bm{U}_{2}},
\end{align*}
\begin{align*}
t_{21}e^{-\bm{U}_{1}-\bm{Y}_{1}+\bm{Y}_{2}}t_{21}^{-1} & =\varphi\left(e^{\bm{Y}_{1}-\bm{Y}_{2}+\bm{U}_{2}}\right)e^{-\bm{U}_{1}-\bm{Y}_{1}+\bm{Y}_{2}}\varphi\left(e^{\bm{Y}_{1}-\bm{Y}_{2}+\bm{U}_{2}}\right)^{-1}\\
 & =\varphi\left(e^{\bm{Y}_{1}-\bm{Y}_{2}+\bm{U}_{2}}\right)\varphi\left({\bfq}^{-4}e^{\bm{Y}_{1}-\bm{Y}_{2}+\bm{U}_{2}}\right)^{-1}e^{-\bm{U}_{1}-\bm{Y}_{1}+\bm{Y}_{2}}\\
 & =\left(1+{\bfq}^{-3}e^{\bm{Y}_{1}-\bm{Y}_{2}+\bm{U}_{2}}\right)\left(1+{\bfq}^{-1}e^{\bm{Y}_{1}-\bm{Y}_{2}+\bm{U}_{2}}\right)e^{-\bm{U}_{1}-\bm{Y}_{1}+\bm{Y}_{2}}\\
 & =\left[1+\left({\bfq}^{-1}+{\bfq}^{-3}\right)e^{\bm{Y}_{1}-\bm{Y}_{2}+\bm{U}_{2}}+{\bfq}^{-4}e^{2\bm{Y}_{1}-2\bm{Y}_{2}+2\bm{U}_{2}}\right]e^{-\bm{U}_{1}-\bm{Y}_{1}+\bm{Y}_{2}}\\
 & =e^{-\bm{U}_{1}-\bm{Y}_{1}+\bm{Y}_{2}}+\left({\bfq}+{\bfq}^{-1}\right)e^{-\bm{U}_{1}+\bm{U}_{2}}+e^{\bm{Y}_{1}-\bm{Y}_{2}-\bm{U}_{1}+2\bm{U}_{2}}.
\end{align*}
We obtain that 
\be 
t_{21}\cs_{2}Q_{1}^{\prime}\cs_{2}^{-1}t_{21}^{-1}&= & \bm{u}_{0,-1}^{(2)}\bm{u}_{-1,1}^{(1)}+\bm{u}_{0,1}^{(2)}\bm{u}_{1,-1}^{(1)}+Q_{1}\left(\bm{u}_{0,1}^{(2)}\bm{u}_{0,-1}^{(1)}+\bm{u}_{1,0}^{(2)}\right)+Q_{2}\left(\bm{u}_{-1,0}^{(1)}+\bm{u}_{1,-1}^{(2)}\bm{u}_{-1,1}^{(1)}\right)\nonumber\\
 &&+\,\bm{u}_{0,1}^{(2)}\bm{u}_{-1,-1}^{(1)}+\left({\bfq}+{\bfq}^{-1}\right)\bm{u}_{1,0}^{(2)}\bm{u}_{-1,0}^{(1)}+\bm{u}_{2,-1}^{(2)}\bm{u}_{-1,1}^{(1)}.
\ee 
It is the same as the expression of $Q_{21}$ in terms of $\bm{u}_{\alpha,\beta}$. 
\end{proof}
\begin{lemma}\label{lemma32}
The unitary transformation $C_{1}$ further simplifies $Q_{1}^{\prime}$:
\[
	{Q}_{1}^{\prime\prime}=C_{1}Q_{1}^{\prime}C_{1}^{-1}=e^{-\bm{U}_{1}+L_{2}}+e^{\bm{U}_{1}-L_{2}}+e^{-\bm{Y}_{1}-L_{1}}.
\]
\end{lemma}
\begin{proof}
For the 1st, 4th, and 5th terms of $Q_{1}^{\prime}$ in (\ref{eq:Q1primes}), we use the recursion relation of $\varphi$ in \eqref{recursion3} of Appendix \ref{Quantum dilogarithm},
\begin{align*}
 & C_{1}\left(e^{\bm{Y}_{1}-\bm{U}_{1}}+e^{-\bm{U}_{1}}Q_{2}+e^{-\bm{Y}_{1}-\bm{U}_{1}}\right)C_{1}^{-1}\\
= & \frac{\varphi\left(e^{\bm{U}_{1}+L_{1}}\right)}{\varphi\left(e^{-\bm{U}_{1}+L_{1}}\right)}\varphi\left(e^{\bm{Y}_{1}-L_{2}}\right)^{-1}{\bfq}e^{-\bm{Y}_{1}}\left(1+{\bfq}^{-1}e^{\bm{Y}_{1}-L_{2}}\right)\left(1+{\bfq}^{-1}e^{\bm{Y}_{1}+L_{2}}\right)\varphi\left({\bfq}^{-2}e^{\bm{Y}_{1}-L_{2}}\right)\frac{\varphi\left(e^{-\bm{U}_{1}+L_{1}}\right)}{\varphi\left(e^{\bm{U}_{1}+L_{1}}\right)}e^{-\bm{U}_{1}}\\
= & \frac{\varphi\left(e^{\bm{U}_{1}+L_{1}}\right)}{\varphi\left(e^{-\bm{U}_{1}+L_{1}}\right)}\varphi\left(e^{\bm{Y}_{1}-L_{2}}\right)^{-1}{\bfq}e^{-\bm{Y}_{1}}\left(1+{\bfq}^{-1}e^{\bm{Y}_{1}+L_{2}}\right)\varphi\left(e^{\bm{Y}_{1}-L_{2}}\right)\frac{\varphi\left(e^{-\bm{U}_{1}+L_{1}}\right)}{\varphi\left(e^{\bm{U}_{1}+L_{1}}\right)}e^{-\bm{U}_{1}}\\
= & \frac{\varphi\left(e^{\bm{U}_{1}+L_{1}}\right)}{\varphi\left(e^{-\bm{U}_{1}+L_{1}}\right)}\left({\bfq}^{-1}e^{-\bm{U}_{1}}e^{-\bm{Y}_{1}}\right)\frac{\varphi\left(e^{-\bm{U}_{1}+L_{1}}\right)}{\varphi\left(e^{\bm{U}_{1}+L_{1}}\right)}+e^{-\bm{U}_{1}+L_{2}}.
\end{align*}
For the 2nd term of $Q_{1}^{\prime}$, we use the recursion relation \eqref{recursion1}
\begin{align*}
 & C_{1}e^{-\bm{Y}_{1}+\bm{U}_{1}}C_{1}^{-1}\\
= & \frac{\varphi\left(e^{\bm{U}_{1}+L_{1}}\right)}{\varphi\left(e^{-\bm{U}_{1}+L_{1}}\right)}\varphi\left(e^{\bm{Y}_{1}-L_{2}}\right)^{-1}{\bfq}^{-1}e^{-\bm{Y}_{1}}\varphi\left({\bfq}^{2}e^{\bm{Y}_{1}-L_{2}}\right)\frac{\varphi\left(e^{-\bm{U}_{1}+L_{1}}\right)}{\varphi\left(e^{\bm{U}_{1}+L_{1}}\right)}e^{\bm{U}_{1}}\\
= & \frac{\varphi\left(e^{\bm{U}_{1}+L_{1}}\right)}{\varphi\left(e^{-\bm{U}_{1}+L_{1}}\right)}{\bfq}^{-1}e^{-\bm{Y}_{1}}\left(1+{\bfq}e^{\bm{Y}_{1}-L_{2}}\right)\frac{\varphi\left(e^{-\bm{U}_{1}+L_{1}}\right)}{\varphi\left(e^{\bm{U}_{1}+L_{1}}\right)}e^{\bm{U}_{1}}\\
= & \frac{\varphi\left(e^{\bm{U}_{1}+L_{1}}\right)}{\varphi\left(e^{-\bm{U}_{1}+L_{1}}\right)}\left({\bfq}e^{\bm{U}_{1}}e^{-\bm{Y}_{1}}\right)\frac{\varphi\left(e^{-\bm{U}_{1}+L_{1}}\right)}{\varphi\left(e^{\bm{U}_{1}+L_{1}}\right)}+e^{\bm{U}_{1}-L_{2}}
\end{align*}
For the 3rd term, 
\begin{align*}
C_{1}e^{-\bm{Y}_{1}}Q_{1}C_{1}^{-1}= & \frac{\varphi\left(e^{\bm{U}_{1}+L_{1}}\right)}{\varphi\left(e^{-\bm{U}_{1}+L_{1}}\right)}e^{-\bm{Y}_{1}}\left(e^{L_{1}}+e^{-L_{1}}\right)\frac{\varphi\left(e^{-\bm{U}_{1}+L_{1}}\right)}{\varphi\left(e^{\bm{U}_{1}+L_{1}}\right)}.
\end{align*}
Inserting these results, we obtain
\begin{align*}
 & C_{1}Q_{1}^{\prime}C_{1}^{-1}\\
= & e^{-\bm{U}_{1}+L_{2}}+e^{\bm{U}_{1}-L_{2}}+\frac{\varphi\left(e^{\bm{U}_{1}+L_{1}}\right)}{\varphi\left(e^{-\bm{U}_{1}+L_{1}}\right)}\left(1+{\bfq}e^{\bm{U}_{1}+L_{1}}\right)\left(1+{\bfq}^{-1}e^{-\bm{U}_{1}+L_{1}}\right)\frac{\varphi\left({\bfq}^{-2}e^{-\bm{U}_{1}+L_{1}}\right)}{\varphi\left({\bfq}^{2}e^{\bm{U}_{1}+L_{1}}\right)}e^{-\bm{Y}_{1}-L_{1}}\\
= & e^{-\bm{U}_{1}+L_{2}}+e^{\bm{U}_{1}-L_{2}}+e^{-\bm{Y}_{1}-L_{1}}.
\end{align*}
where the recursion relations \eqref{recursion1} and \eqref{recursion3} are used in the last step.
\end{proof}

If we define the unitary transformation $\mathcal{U}_{21}^{-1}=t_{21}\cs_{2}C_{1}^{-1}:\,\mathcal{H}\otimes\mathcal{H}\to\mathcal{H}\otimes\mathcal{H}$, the Casimir operator is simplified by 
\[
Q_{1}''=\mathcal{U}_{21}Q_{21}\mathcal{U}_{21}^{-1}=e^{-\bm{U}_{1}+L_{2}}+e^{\bm{U}_{1}-L_{2}}+e^{-\bm{Y}_{1}-L_{1}}.
\]
The Hermitian conjugate gives for the tilded operator
\[
\tilde{Q}_{1}^{\prime\prime}=\mathcal{U}_{21}\tilde{Q}_{21}\mathcal{U}_{21}^{-1}=e^{-\tilde{\bm{U}}_{1}+\tilde{L}_{2}}+e^{\tilde{\bm{U}}_{1}-\tilde{L}_{2}}+e^{-\tilde{\bm{Y}}_{1}-\tilde{L}_1}.
\]
$Q_{1}^{\prime\prime}$ and $\tilde{Q}_{1}^{\prime\prime}$
only acting on the second copy of $\mathcal{H}\otimes\mathcal{H}$. It turns out that $Q_1'',\tilde{Q}_{1}^{\prime\prime}$ can be simultaneously diagonalized (see Section \ref{Eigenstates of the trace operators}),
and their spectral decomposition gives a direct integral decomposition
\be    
\mathcal{H}\simeq\int_{\mathbb{C}}^{\oplus}d\mu\left(\l,\l^{*}\right)\,W\left(\l,\l^{*}\right),\qquad Q_2''=\int_\C\lt(\l+\l^{-1}\rt)\rmd P_{\l,\tilde{\l}},\qquad \tilde{Q}_2''=\int_\C\lt(\tilde{\l}+\tilde{\l}^{-1}\rt)\rmd P_{\l,\tilde{\l}},\nonumber
\ee  
where $\mu$ is the spectral measure, $P_{\l,\tilde{\l}}$ is the projective-valued measure for both $Q_1'',\tilde{Q}_{1}^{\prime\prime}$, and $\tilde{\l}=\l^*$. Therefore,
\[
\mathcal{U}_{21}:\ \mathcal{H}\otimes\mathcal{H}\to\int_{\mathbb{C}}^{\oplus}d\mu\left(\l,\l^{*}\right)\,\bm{\mathcal{H}}\left(\l,\l^{*}\right),\qquad\bm{\mathcal{H}}\left(\l,\l^{*}\right)=\mathcal{H}\otimes W\left(\l,\l^{*}\right).
\]
The unitary transformation $\mathcal{U}_{21}$ gives a Clebsch-Gordan
decomposition of the tensor product representation by the following
result:
\begin{lemma}\label{CGirrep}
If $\dim W\left(\l,\l^{*}\right)=1$, each $\bm{\mathcal{H}}\left(\l,\l^{*}\right)$
is a irreducible representation of $(\pi_{\lambda_{2},\tilde{\lambda}_{2}}\otimes\pi_{\lambda_{1},\tilde{\lambda}_{1}})\circ\Delta$. 
\end{lemma}
\begin{proof}
$F_{2},E_{2},K_{2},K_{2}^{-1}$ and their tilded relatives are represented
irreducibly on the first factor in $\bm{\mathcal{H}}\left(\l,\l^{*}\right)=\mathcal{H}\otimes W\left(\l,\l^{*}\right)$,
and they are explicitly given by
\begin{align*}
K_{2} & =e^{-\bm{U}_{2}},\qquad K_{2}^{-1}=e^{\bm{U}_{2}},\qquad F_{2}=-\frac{i}{{\bfq}-{\bfq}^{-1}}qe^{\bm{Y}_{2}},\\
E_{2} & =-\frac{1}{\left({\bfq}-{\bfq}^{-1}\right)^{2}}\left({\bfq}^{-1}K_{2}+{\bfq}K_{2}^{-1}+Q_{1}^{\prime\prime}\right)F_{2}^{-1},\qquad Q_{1}^{\prime\prime}=(\l+\l^{-1})\mathrm{id}_{\mathcal{H}}
\end{align*}
and similarly for the tilded operators. The representaion is unitarily
equivalent to the tensor production representation:
\begin{align*}
\mathcal{U}_{21}^{-1}K_{2}\mathcal{U}_{21} & =t_{21}\cs_{2}e^{-\bm{U}_{2}}\cs_{2}^{-1}t_{21}^{-1}=t_{21}e^{-\bm{U}_{2}-\bm{U}_{1}}t_{21}^{-1}\\
 & =\varphi\left(e^{\bm{Y}_{1}-\bm{Y}_{2}+\bm{U}_{2}}\right)e^{-\bm{U}_{2}-\bm{U}_{1}}\varphi\left(e^{\bm{Y}_{1}-\bm{Y}_{2}+\bm{U}_{2}}\right)^{-1}\\
 & =e^{-\bm{U}_{2}-\bm{U}_{1}}=K_{2}K_{1}=\lt(\Delta K\rt)_{21}
\end{align*}
\begin{align*}
\mathcal{U}_{21}^{-1}F_{2}\mathcal{U}_{21} & =-\frac{iq}{{\bfq}-{\bfq}^{-1}}t_{21}\cs_{2}e^{\bm{Y}_{2}}\cs_{2}^{-1}t_{21}^{-1}\\
 & =-\frac{iq}{{\bfq}-{\bfq}^{-1}}\varphi\left(e^{\bm{Y}_{1}-\bm{Y}_{2}+\bm{U}_{2}}\right)e^{\bm{Y}_{2}}\varphi\left(e^{\bm{Y}_{1}-\bm{Y}_{2}+\bm{U}_{2}}\right)^{-1}\\
 & =-\frac{iq}{{\bfq}-{\bfq}^{-1}}\varphi\left(e^{\bm{Y}_{1}-\bm{Y}_{2}+\bm{U}_{2}}\right)\varphi\left({\bfq}^{-2}e^{\bm{Y}_{1}-\bm{Y}_{2}+\bm{U}_{2}}\right)^{-1}e^{\bm{Y}_{2}}\\
 & =-\frac{iq}{{\bfq}-{\bfq}^{-1}}\left(1+{\bfq}^{-1}e^{\bm{Y}_{1}-\bm{Y}_{2}+\bm{U}_{2}}\right)e^{\bm{Y}_{2}}\\
 & =-\frac{iq}{{\bfq}-{\bfq}^{-1}}\left(e^{\bm{Y}_{2}}+e^{\bm{Y}_{1}+\bm{U}_{2}}\right)=F_{2}+K_{2}^{-1}F_{1}\\
 & =\lt(\Delta F\rt)_{21}
\end{align*}
Moreover we have $\mathcal{U}_{21}^{-1}F_{2}^{-1}\mathcal{U}_{12}=(\Delta F)_{21}^{-1}$. Therefore, at the level of the representation, $Q_{21}=-\left({\bfq}-{\bfq}^{-1}\right)^{2}(\Delta E)_{21}(\Delta F)_{21}-{\bfq}^{-1}(\Delta K)_{21}-{\bfq}(\Delta K)_{21}^{-1}$ implies
\be 
(\Delta E)_{21}&=&-\frac{\lt[{\bfq}^{-1}(\Delta K)_{21}+{\bfq}(\Delta K)_{21}^{-1}+Q_{21}\rt](\Delta F)_{21}^{-1}}{\left({\bfq}-{\bfq}^{-1}\right)^{2}}\nonumber\\
& =&\mathcal{U}_{21}^{-1}\frac{\left({\bfq}^{-1}K_{2}+{\bfq}K_{2}^{-1}+Q_{1}^{\prime\prime}\right)F_{2}^{-1}}{\left({\bfq}-{\bfq}^{-1}\right)^{2}}\mathcal{U}_{21}=\mathcal{U}_{21}^{-1}E_{2}\mathcal{U}_{21},
\ee   
by using $\mathcal{U}_{21}^{-1}Q_{1}^{\prime\prime}\mathcal{U}_{21}=Q_{21}$.
\end{proof}

When $\dim\mathcal{H}\left(\ell,\ell^{*}\right)=1$ holds, the unitary transformation $\cu_{21}$ is a Clebsch-Gordan map intertwining three representations $\l_1,\l_2,\l_3$:
\be
\cu_{21}\lt[\left(\pi_{\lambda_{2},\tilde{\lambda}_{2}}\otimes\pi_{\lambda_{1},\tilde{\lambda}_{1}}\right)\circ\Delta( X )\rt]\cu_{21}^{-1} =\int_{\mathbb{C}}^{\oplus}\pi_{\lambda_{3},\tilde{\lambda}_{3}}(X)\otimes \rmd P_{\lambda_{3},\tilde{\lambda}_{3}}.\label{CGequation}
\ee

In the following, we assume $\lambda_a=e^{L_a}$, $a=1,2$, is quantized
\be
\lambda_a=\exp\lt[\frac{2\pi i}{N}(-i b\mu_{a}-m_{a})\rt], \qquad \mu_{a}\in\R,\ m_{a}\in\Z/N\Z.
\ee
We define the following unitary transformations
\be 
\mathcal{S}_{\lambda_{1}}f\left(\mu,m\right)=f\left(\mu-\mu_{1},m-m_{1}\right),\qquad {\mathcal{D}}_{\lambda_{2}}f\left(\mu,m\right)=e^{-\frac{2\pi i}{N}(\mu_2\mu-m_2m)}f\left(\mu,m\right),
\ee
satisfying $\mathcal{S}_{\lambda_{1}}\mathcal{D}_{\lambda_{2}}=e^{\frac{2\pi i}{N}(\mu_{1}\mu_{2}-m_{1}m_{2})}\mathcal{D}_{\lambda_{2}}\mathcal{S}_{\lambda_{1}}$ and 
\be 
&&\mathcal{S}_{\lambda_{1}}\bm{y}^{-1}\mathcal{S}_{\lambda_{1}}^{-1}=\lambda_{1}\bm{y}^{-1},\quad \mathcal{S}_{\lambda_{1}}\bm{u}\mathcal{S}_{\lambda_{1}}^{-1}=\bm{u},\quad \mathcal{S}_{\lambda_{1}}\tilde{\bm{y}}^{-1}\mathcal{S}_{\lambda_{1}}^{-1}={\lambda}^*_{1}\tilde{\bm{y}}^{-1},\quad \mathcal{S}_{\lambda_{1}}\tilde{\bm{u}}\mathcal{S}_{\lambda_{1}}^{-1}=\tilde{\bm{u}},\\
&&\mathcal{D}_{\lambda_{2}}^{-1}\bm{y}\mathcal{D}_{\lambda_{2}}=\bm{y},\qquad \mathcal{D}_{\lambda_{2}}^{-1}\bm{u}\mathcal{D}_{\lambda_{2}}=\lambda_{2}\bm{u},\qquad \mathcal{D}_{\lambda_{2}}^{-1}\tilde{\bm{y}}\mathcal{D}_{\lambda_{2}}=\tilde{\bm{y}},\qquad \mathcal{D}_{\lambda_{2}}^{-1}\tilde{\bm{u}}\mathcal{D}_{\lambda_{2}}={\lambda}^*_{2}\tilde{\bm{u}}.
\ee
The Fourier transformation
\be  
\mathcal{F}f(\mu,m)=\frac{1}{N}\sum_{m^{\prime}\in\mathbb{Z}/N\mathbb{Z}}\int d\mu^{\prime}e^{\frac{2\pi i}{N}\left(\mu\mu^{\prime}-mm^{\prime}\right)}f\left(\mu^{\prime},m^{\prime}\right)
\ee
satisfies
\be 
\cf\bm{u}\cf^{-1} =\bmy,\quad \cf\bm{y}^{-1}\cf^{-1} =\bm{u},\quad \cf\tilde{\bm{u}}\cf^{-1} =\tilde{\bmy},\quad \cf\tilde{\bm{y}}^{-1}\cf^{-1} =\tilde{\bm{u}}.
\ee
Applying these unitary transformations to $Q''_2$ and $\tilde{Q}_2''$ leads to the simplification
\be 
\bm{L}\equiv{\cal F}\mathcal{D}_{\lambda_{2}}^{-1}\mathcal{S}_{\lambda_{1}}Q_{2}^{\prime\prime}\mathcal{S}_{\lambda_{1}}^{-1}\mathcal{D}_{\lambda_{2}}{\cal F}^{-1}&=&\bm{y}^{-1}+\bm{y}+\bm{u},\\
\tilde{\bm{L}}\equiv{\cal F}\mathcal{D}_{\lambda_{2}}^{-1}\mathcal{S}_{\lambda_{1}}\tilde Q_{2}^{\prime\prime}\mathcal{S}_{\lambda_{1}}^{-1}\mathcal{D}_{\lambda_{2}}{\cal F}^{-1}&=&\tilde{\bm{y}}^{-1}+\tilde{\bm{y}}+\tilde{\bm{u}}.
\ee
These pair of operators are generalizations to the Dehn-twist operators in quantum Teichm\"uller theory, which corresponds to $N=1$ \cite{Kashaev:2000ku,Derkachov:2013cqa}. The Dehn-twist operators are unitary equivalent to the Casimir operators of the tensor product representations for the modular double of $U_q(sl(2,\R))$ \cite{Ponsot:2000mt,Derkachov:2013cqa,Nidaiev:2013bda}.

The spectral decomposition of $\bmL,\tilde{\bmL}$ will be discussed in detail in Section \ref{Eigenstates of the trace operators}. The analysis shows that $\dim \ch(\ell,\ell^*)=1$ indeed holds.

\section{Quantization of flat connections on 4-holed sphere}\label{Quantum flat connections on 4-holed sphere}

\subsection{Fock-Goncharov coordinates and quantization}

In this section, we study the quantization of $\rm{SL}(2,\mathbb{C})$ flat connections on 4-holed sphere. The moduli space of framed $\rm{SL}(2,\mathbb{C})$ flat connections on $n$-holed sphere is a Poisson manifold. A set of useful coordinates are known as the Fock-Goncharov (FG) coordinates, each of which associate to an edge in an ideal triangulation of the $n$-holed sphere. See Appendix \ref{Fock-Goncharov coordinate and holonomies} for details. FIG.\ref{degtriangulation} is an example of the ideal triangulation of 4-holed sphere. Given any ideal triangulation on $n$-holed sphere, we denote the FG coordinate by $z_E= \exp({Z_E}),\ \tilde{z}_E= \exp({\tilde{Z}_E})$, for all edges $E$ of the triangulation. Here $Z_E,\tilde{Z}_E$ are lifts of $\log(z_E),\log(\tilde{z}_E)$. We define their classical Poisson bracket by $\{z_E,z_{E'}\}=2 \eps_{E,E'}z_Ez_{E'},\ \{\tilde{z}_E,\tilde{z}_{E'}\}=2 \eps_{E,E'}\tilde{z}_E\tilde{z}_{E'}$. Their quantization $[\bm{O},\bm{O}']=\hbar\{O,O'\}$ and $[\tilde{\bm{O}},\tilde{\bm{O}}']=\tilde{\hbar}\{\tilde{O},\tilde{O}'\}$ leads to the operator algebra 
\be
\bm{z}_E\bm{z}_{E'}={\bfq}^{2\eps_{E,E'}}\bm{z}_{E'}\bm{z}_E,\qquad \tilde{\bm{z}}_E\tilde{\bm{z}}_{E'}=\tilde{{\bfq}}^{2\eps_{E,E'}}\tilde{\bm{z}}_{E'}\tilde{\bm{z}}_E,\qquad \bm{z}_E\tilde{\bm{z}}_{E'}=\tilde{\bm{z}}_{E'}\bm{z}_E\label{opalgFG}
\ee
$\eps_{E,E'}=0,\pm1,\pm2$ counts the number of oriented triangles shared by edges $E,E'$, The contribution from each triangle is $+1$ ($-1$) if $E$ rotates to $E'$ counterclockwisely (clockwisely) in the triangle. $\eps_{E,E'}$ for the triangulation in FIG.\ref{degtriangulation} reads
\be
\eps=\left(
\begin{array}{cccccc}
 0 & -1 & 0 & 1 & 1 & -1 \\
 1 & 0 & -1 & 0 & 0 & 0 \\
 0 & 1 & 0 & -1 & -1 & 1 \\
 -1 & 0 & 1 & 0 & 0 & 0 \\
 -1 & 0 & 1 & 0 & 0 & 0 \\
 1 & 0 & -1 & 0 & 0 & 0 \\
\end{array}
\right).
\ee
Similar as the above, we often write $\bm{z}_E=\exp({\bm{Z}_E})$ and $\tilde{\bm z}_E=\exp({\tilde{\bm Z}_E})$. The algebra has centers given by 
\be
\prod_{E}\bm{z}_E,\qquad \prod_{E}\tilde{\bm{z}}_E
\ee 
where the products are over edges incident to a given hole. This quantization can be applied to all ideal triangulations, and different triangulation are associated with different data $(\{E\},\eps_{E,E'})$.

The Poisson brackets defined above can be derived from the $\Slc$ Chern-Simons theory \cite{GMN09,DGV}. Therefore, as the quantization of these poisson brackets, the operator algebra \eqref{opalgFG} relates to the quantization of the Chern-Simons theory on 4-holed sphere \cite{levelk,Andersen2014}. In the following, we consider the quantization of the theory at level-$N$ on a fixed triangulation FIG.\ref{degtriangulation}. The quantization on different triangulations should be related by unitary transformations representing the cluster transformations of the FG coordinates \cite{FG03,2008InMat.175..223F}, see e.g. \cite{Teschner:2003em} for the case of the quantum Teichm\"uller theory. Section \ref{Changing triangulation} gives a generalization of the unitary cluster transformation to include the level $N$ in the case of a 4-holed sphere.

\begin{figure}[h]
\centering
\includegraphics[width=0.7\textwidth]{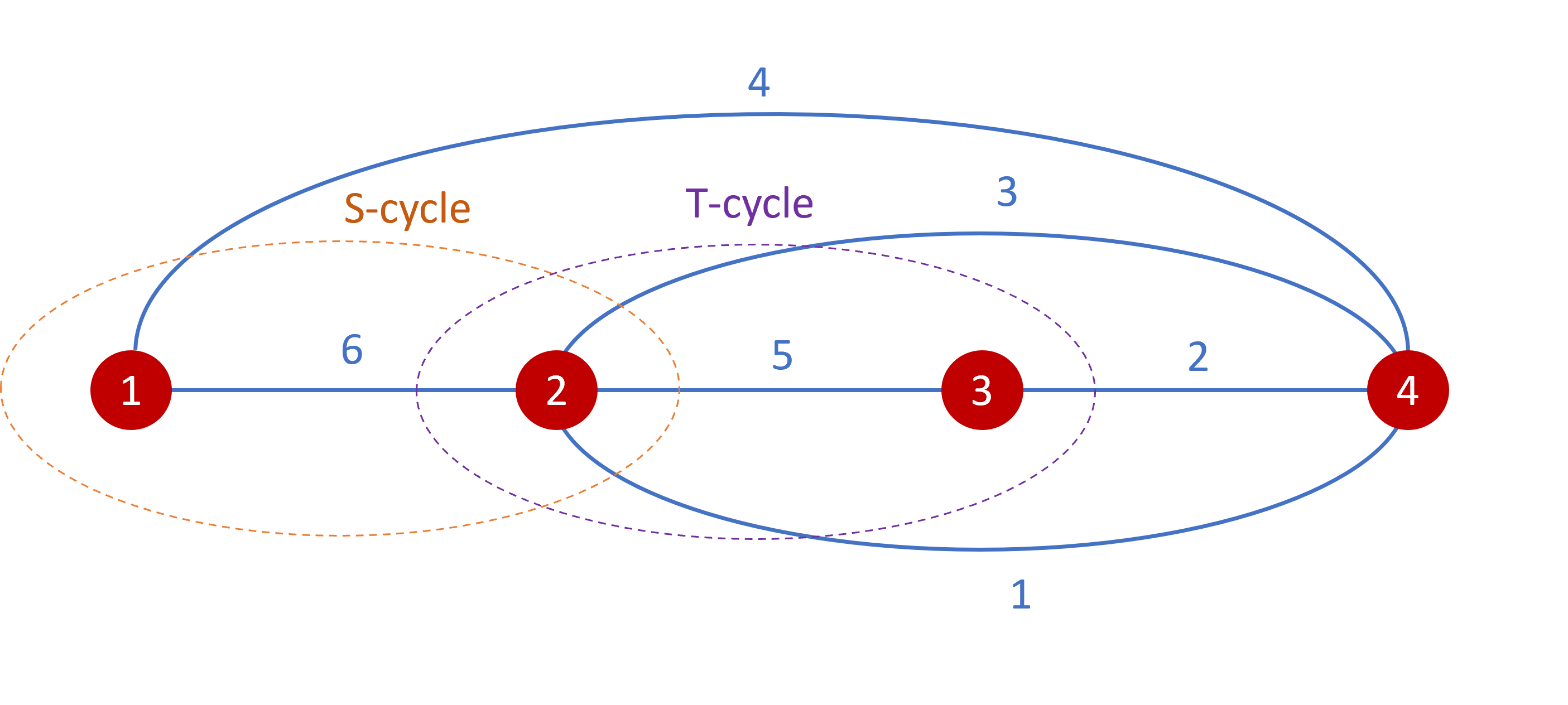}
\caption{An ideal triangulation of a 4-holed sphere, the S-cycle (orange) and T-cycle (purple).
}
\label{degtriangulation}
\end{figure}

Let us focus on the triangulation in FIG.\ref{degtriangulation}. The representation of \eqref{opalgFG} on $\mathcal{H}\simeq L^2(\R)\otimes \C^N$ can be constructed as the following: Firstly, we relate $\bm{X},\bm{Y}$ to $\bm{Z}_1,\bm{Z}_5$ by
\be
-\bm{U} =\bm{Z}_{1},\qquad -\bm{Y}+i\pi =\bm{Z}_{5},\quad\text{i.e.}\quad  \bm{z}_1=\bm{u}^{-1},\qquad \bm{z}_5=-\bm{y}^{-1}.\label{Z1andZ5}
\ee
where the representations of $\bm{u},\bm{y}$ are the same as the above in \eqref{repuandy}. $i\pi$ in $\bm{Z}_5$ or the minus sign in $\bm{z}_5$ is a choice of quantization and relates to the plus sign in the holonomy-trace operators of Sections \ref{S-cycle trace operator} and \ref{T-cycle trace operator}, so that the resulting holonomy-trace operators are unitary equivalent to the Casimirs $Q_{21},\tilde{Q}_{21}$ of $\suquqt$ and relates to the quantum Teichm\"uller theory when $N=1$.


For the triangulation in FIG.\ref{degtriangulation}, the center of the algebra is given
by $\l_a^2=e^{2L_a}$, where
\be
-L_{1}&=&\frac{1}{2}\left(\bm{Z}_{4}+\bm{Z}_{6}-2\pi i\right),\qquad -L_{2}=\frac{1}{2}\left(\bm{Z}_{1}+\bm{Z}_{3}+\bm{Z}_{5}+\bm{Z}_{6}-4\pi i\right),\nonumber\\
-L_{3}&=&\frac{1}{2}\left(\bm{Z}_{2}+\bm{Z}_{5}-2\pi i\right),\qquad -L_{4}=-\frac{1}{2}\left(\bm{Z}_{1}+\bm{Z}_{2}+\bm{Z}_{3}+\bm{Z}_{4}-4\pi i\right).\label{LZ1234}
\ee
The convention of the multiple of $\pi i$'s in these relations relates to the conventions of lifts of $L_a$ and $\bm{Z}_a$. Our convention follows from e.g. \cite{DGV,DGG11,Dimofte2011}, see also the discussion in Appendix \ref{Fock-Goncharov coordinate and holonomies}. Rigorously speaking, the FG coordinates are data for $\PSlc$ flat connections. Defining the lifts is useful to promote the coordinates to $\Slc$ flat connections, so that the traces of holonomies in the next subsection are defined without $\pm$ sign ambiguity. See Appendix \ref{Fock-Goncharov coordinate and holonomies} for some details.

We may express $\bm{Z}_{1},\cdots,\bm{Z}_{6}$ in terms of $\bm{U},\bm{Y},L_{1},\cdots,L_{4}$
\begin{align}\label{repZa}
\bm{Z}_{1} & =-\bm{U},\qquad 
\bm{Z}_{2}=-2L_{3}+\bm{Y}+i\pi,\nonumber\\
\bm{Z}_{3}&=L_{1}-L_{2}+L_{3}+L_{4}+\bm{U}+2i\pi,\nonumber\\
\bm{Z}_{4} & =-L_{1}+L_{2}+L_{3}+L_{4}-\bm{Y}+i\pi,\qquad \bm{Z}_{5}=-\bm{Y}+i\pi,\nonumber\\
\bm{Z}_{6}& =-L_{1}-L_{2}-L_{3}-L_{4}+\bm{Y}+i\pi
\end{align}
We find the representation of all $\bm{z}_e$ by
\begin{align}
\bm{z}_{1} & =\bm{u}^{-1},\qquad
\bm{z}_{2}=-\lambda_{3}^{-2}\bm{y},\qquad
\bm{z}_{3}=\frac{\lambda_{1}\lambda_{3}\lambda_{4}}{\lambda_{2}}\bm{u},\\
\bm{z}_{4} & =-\frac{\lambda_{2}\lambda_{3}\lambda_{4}}{\lambda_{1}}\bm{y}^{-1},\qquad
\bm{z}_{5}=-\bm{y}^{-1},\qquad
\bm{z}_{6}=-\frac{1}{\lambda_{1}\lambda_{2}\lambda_{3}\lambda_{4}}\bm{y}.
\end{align}
$\lambda_{1},\cdots,\lambda_{6} \in\mathbb{C}^\times$ are constants labelling the representation, and they relate to the centers of the operators algebra of $\bm{z}_E$, $E=1,\cdots,6$, by
\be
\bm{z}_4\bm{z}_6=\l_1^{-2},\qquad \bm{z}_1\bm{z}_3\bm{z}_5\bm{z}_6=\l_2^{-2},\qquad \bm{z}_2\bm{z}_5=\l_3^{-2},\qquad \bm{z}_1\bm{z}_3\bm{z}_2\bm{z}_4=\l_4^{2},
\ee

The tilded operators can be defined similarly
\begin{align}
	\tilde{\bm{z}}_{1} & =\tilde{\bm{u}}^{-1},\qquad
	\tilde{\bm{z}}_{2}=-\tilde{\lambda}_{3}^{-2}\tilde{\bm{y}},\qquad
	\tilde{\bm{z}}_{3}=\frac{\tilde{\lambda}_{1}\tilde{\lambda}_{3}\tilde{\lambda}_{4}}{\tilde{\lambda}_{2}}\tilde{\bm{u}},\\
	\tilde{\bm{z}}_{4} & =-\frac{\tilde{\lambda}_{2}\tilde{\lambda}_{3}\tilde{\lambda}_{4}}{\tilde{\lambda}_{1}}\tilde{\bm{y}}^{-1},\qquad
	\tilde{\bm{z}}_{5}=-\tilde{\bm{y}}^{-1},\qquad
	\tilde{\bm{z}}_{6}=-\frac{1}{\tilde{\lambda}_{1}\tilde{\lambda}_{2}\tilde{\lambda}_{3}\tilde{\lambda}_{4}}\tilde{\bm{y}}.
\end{align}
where $\tilde{\bm{u}},\tilde{\bm{y}}$ are represented as in \eqref{repuandy1}.
$\tilde{\lambda}_{1},\cdots,\tilde{\lambda}_{6} \in\mathbb{C}^\times$ are also constants and label the representation. We set $\tilde{\l}_a=\l_a^*$ so that 
\be
\bm{z}_E^\dagger = \tilde{\bm z}_E
\ee
representing the star structure. The above representation on $\ch$ of the quantum FG coordinates may be denoted by $\varrho_{\{\l,\tilde{\l}\}}$ labelled by $\{\l,\tilde{\l}\}\equiv\{\l_a,\tilde{\l}_a\}_{a=1}^4$.

\subsection{Trace operators and relation to skein quantization}

The flat connection is an $\Slc$-representation of the fundamental group on 4-holed sphere modulo conjugations. The fundamental group on 4-holed sphere $\pi_1$ is generated by the loops $\g_1,\g_2,\g_3,\g_4$, each traveling counter-clockwisely around a hole and satisfying $\g_1\g_2\g_3\g_4=1$. The flat connection defines the $\Slc$ holonomies $H(\g_1),\cdots,H(\g_4)$ modulo gauge transformations. On the moduli space of flat connections, the holonomies are $\Slc$-valued functions of logarithmic FG coordinates $Z_1,\cdots,Z_6$ by the snake rule, as shown in Appendix \ref{Fock-Goncharov coordinate and holonomies}. These functions depend on the choice of gauge. But the traces of the holonomies are gauge invariant:
\be
A_1&=&\tr\lt(H(\g_1)\rt)=e^{-\frac{Z_4}{2}-\frac{Z_6}{2}+i \pi }+e^{\frac{Z_4}{2}+\frac{Z_6}{2}-i \pi },\nonumber\\
B_1&=&\tr\lt(H(\g_2)\rt)=e^{-\frac{Z_1}{2}-\frac{Z_3}{2}-\frac{Z_5}{2}-\frac{Z_6}{2}+2 i \pi }+e^{\frac{Z_1}{2}+\frac{Z_3}{2}+\frac{Z_5}{2}+\frac{Z_6}{2}-2 i \pi },\\
C_1&=&\tr\lt(H(\g_3)\rt)=e^{-\frac{Z_2}{2}-\frac{Z_5}{2}+i \pi }+e^{\frac{Z_2}{2}+\frac{Z_5}{2}-i \pi },\\
D_1&=&\tr\lt(H(\g_4)\rt)=e^{-\frac{Z_1}{2}-\frac{Z_2}{2}-\frac{Z_3}{2}-\frac{Z_4}{2}+2 i \pi }+e^{\frac{Z_1}{2}+\frac{Z_2}{2}+\frac{Z_3}{2}+\frac{Z_4}{2}-2 i \pi }.
\ee
In order to obtain a complete set of coordinate functions of the moduli space, we need to consider the traces of the S-cycle, T-cycle, and U-cycle holonomies. S-cycle is the loop enclosing the 1st and 2nd holes, T-cycle encloses the 2nd and 3rd holes (see FIG. \ref{degtriangulation}), and U-cycle encloses the 1st and 3rd holes (here $L_s,L_t,L_u$ should be distinguished with $L_1,\cdots,L_4$). 
\be  
L_s=-\tr\lt(H(\g_1)H(\g_2)\rt)&=&e^{-\frac{Z_1}{2}-\frac{Z_3}{2}-\frac{Z_4}{2}-\frac{Z_5}{2}}-e^{-\frac{Z_1}{2}+\frac{Z_3}{2}-\frac{Z_4}{2}-\frac{Z_5}{2}}+e^{-\frac{Z_1}{2}+\frac{Z_3}{2}+\frac{Z_4}{2}-\frac{Z_5}{2}}\nonumber\\
&&+e^{-\frac{Z_1}{2}+\frac{Z_3}{2}+\frac{Z_5}{2}-\frac{Z_4}{2}}-e^{-\frac{Z_1}{2}+\frac{Z_3}{2}+\frac{Z_4}{2}+\frac{Z_5}{2}}+e^{\frac{Z_1}{2}+\frac{Z_3}{2}+\frac{Z_4}{2}+\frac{Z_5}{2}},\\
L_t=-\tr\lt(H(\g_2)H(\g_3)\rt)&=&e^{-\frac{Z_1}{2}-\frac{Z_2}{2}-\frac{Z_3}{2}-\frac{Z_6}{2}}-e^{\frac{Z_1}{2}-\frac{Z_2}{2}-\frac{Z_3}{2}-\frac{Z_6}{2}}+e^{\frac{Z_1}{2}+\frac{Z_2}{2}-\frac{Z_3}{2}-\frac{Z_6}{2}}\nonumber\\
&&+e^{\frac{Z_1}{2}+\frac{Z_6}{2}-\frac{Z_2}{2}-\frac{Z_3}{2}}-e^{\frac{Z_1}{2}+\frac{Z_2}{2}+\frac{Z_6}{2}-\frac{Z_3}{2}}+e^{\frac{Z_1}{2}+\frac{Z_2}{2}+\frac{Z_3}{2}+\frac{Z_6}{2}},\\
L_u=-\tr\lt(H(\g_1)H(\g_3)\rt)&=&-e^{-Z_1-\frac{Z_2}{2}-\frac{Z_4}{2}-\frac{Z_5}{2}-\frac{Z_6}{2}}-e^{-\frac{Z_2}{2}+\frac{Z_4}{2}-\frac{Z_5}{2}-\frac{Z_6}{2}}+e^{-Z_1+\frac{Z_4}{2}-\frac{Z_2}{2}-\frac{Z_5}{2}-\frac{Z_6}{2}}\nonumber\\
&&-e^{-\frac{Z_2}{2}+\frac{Z_5}{2}-\frac{Z_4}{2}-\frac{Z_6}{2}}+e^{-Z_1+\frac{Z_5}{2}-\frac{Z_2}{2}-\frac{Z_4}{2}-\frac{Z_6}{2}}+2 e^{-\frac{Z_2}{2}+\frac{Z_4}{2}+\frac{Z_5}{2}-\frac{Z_6}{2}}\nonumber\\
&&-e^{-Z_1+\frac{Z_4}{2}+\frac{Z_5}{2}-\frac{Z_2}{2}-\frac{Z_6}{2}}-e^{Z_1+\frac{Z_4}{2}+\frac{Z_5}{2}-\frac{Z_2}{2}-\frac{Z_6}{2}}-e^{\frac{Z_2}{2}+\frac{Z_4}{2}+\frac{Z_5}{2}-\frac{Z_6}{2}}\nonumber\\
&&+e^{Z_1+\frac{Z_2}{2}+\frac{Z_4}{2}+\frac{Z_5}{2}-\frac{Z_6}{2}}-e^{-\frac{Z_2}{2}+\frac{Z_4}{2}+\frac{Z_5}{2}+\frac{Z_6}{2}}+e^{Z_1+\frac{Z_4}{2}+\frac{Z_5}{2}+\frac{Z_6}{2}-\frac{Z_2}{2}}\nonumber\\
&&-e^{Z_1+\frac{Z_2}{2}+\frac{Z_4}{2}+\frac{Z_5}{2}+\frac{Z_6}{2}}
\ee   
They satisfy a constraint \cite{Coman:2015lna,Nekrasov:2011bc}
\be   
\cc&=&L_s L_t L_u - L_s^2 - L_t^2 - L_u^2 - A_1 B_1 C_1 D_1 - A_1^2 - B_1^2 - C_1^2 - D_1^2 \nonumber\\
&& - (A_1 B_1 + C_1 D_1) L_s - (B_1 C_1 + D_1 A_1) L_t - (C_1 A_1 + B_1 D_1) L_u + 4 =0.
\ee   
The polynomial ring of $A_1,B_1,C_1,D_1,L_s,L_t,L_u$ quotient by $\cc$ gives the algebra of functions on the moduli space of $\Slc$ flat connections.

The quantization is given by applying \eqref{repZa} to $A_1,B_1,C_1,D_1,L_s,L_t,L_u$ and defines the following operators on $\ch$
\be  
\bm{A}_1&=&\l_1+\l_1^{-1},\qquad \bm{B}_1=\l_2+\l_2^{-1},\qquad \bm{C}_1=\l_3+\l_3^{-1},\qquad \bm{D}_1=\l_4+\l_4^{-1}\nonumber\\
\bmL_s&=&e^{-L_3-L_4+\bmY}+e^{L_3+L_4-\bmY}-e^{L_1-L_2+\bm{U}+\bmY}-e^{L_3+L_4+\bm{U}-\bmY}-e^{L_1-L_2+\bm{U}}-e^{L_3+L_4+\bm{U}},\nonumber\\
\bmL_t&=&e^{-L_2-L_3+\bmY}+e^{L_2+L_3-\bmY}-e^{L_2+L_3-\bm{U}-\bmY}-e^{-L_1-2 L_3-L_4-\bm{U}+\bmY}-e^{L_2-L_3-\bm{U}}-e^{-L_1-L_4-\bm{U}}\nonumber\\
\bmL_u&=&-e^{-L_1+L_3-\bm{U}-\bm{Y}}-e^{L_1+L_3+\bm U-\bm Y}-e^{L_2+L_4-\bm U-\bm Y}-e^{L_2+2 L_3+L_4-\bm U-2 \bm Y}-e^{L_2+2
   L_3+L_4+\bm U-2 \bm Y}\nonumber\\
&&-e^{L_2+2L_3+L_4+\bm U-\bm Y}-e^{-L_1-L_3-\bm U}-e^{L_1+L_3+\bm U}+e^{-L_1+L_3-\bm Y}+e^{L_1+L_3-\bm Y}+e^{L_2+L_4-\bm Y}\nonumber\\
&&+\lt(\bfq +\bfq^{-1}\rt){e^{L_2+2 L_3+L_4-2\bm Y}}+e^{L_2+2 L_3+L_4-\bm Y}
\ee
where $\bm{A}_1,\bm{B}_1,\bm{C}_1,\bm{D}_1$ are proportional to identity operator on $\ch$. These operators satisfy the following commutation relations
\be 
\bfq^{-1} \bmL_s  \bmL_t - \bfq \bmL_t \bmL_s &=& (\bfq^{-2} - \bfq^2) \bmL_u - (\bfq - \bfq^{-1}) (\bm{A}_1 \bm{C}_1 + \bm{B}_1 \bm{D}_1),\label{skein1}\\
\bfq^{-1}\bmL_t \bmL_u - \bfq \bmL_u \bmL_t &=& (\bfq^{-2} - \bfq^2) \bmL_s - (\bfq - \bfq^{-1}) (\bm{A}_1 \bm{B}_1 + \bm{C}_1 \bm{D}_1),\\
\bfq^{-1}\bmL_u \bmL_s - \bfq \bmL_s \bmL_u &=& (\bfq^{-2} - \bfq^2) \bmL_t  - (\bfq - \bfq^{-1}) (\bm{B}_1 \bm{C}_1 + \bm{A}_1 \bm{D}_1),
\ee
and the quantized constraint
\be   
\bm{\cc}&=&\bfq^{-1} \bmL_s \bmL_t \bmL_u - \bfq^{-2} \bmL_s^2 - \bfq^2 \bmL_t^2 - \bfq^{-2}\bmL_u^2  \nonumber\\
&&- \bfq^{-1} (\bm{A}_1 \bm{B}_1 + \bm{C}_1 \bm{D}_1) \bmL_s - \bfq (\bm{B}_1 \bm{C}_1 + \bm{A}_1 \bm{D}_1) \bmL_t - \bfq^{-1} (\bm{A}_1 \bm{C}_1 + \bm{B}_1 \bm{D}_1) \bmL_u\nonumber\\
&&-\bm{A}_1 \bm{B}_1 \bm{C}_1 \bm{D}_1 - \bm{A}_1^2 - \bm{B}_1^2 - \bm{C}_1^2 -\bm{D}_1^2 +\lt( \frac{\bfq^2-\bfq^{-2}}{\bfq-\bfq^{-1}}\rt)^2=0.\label{skein4}
\ee   
The relations \eqref{skein1} - \eqref{skein4} coincide with the operator algebra $\ca_{0,4}^2$ from the skein quantization of $\Slc$ flat connections on 4-holed sphere \cite{Coman:2015lna}. Therefore, the operators $\bm{A}_1,\bm{B}_1,\bm{C}_1,\bm{D}_1,\bmL_s,\bmL_t,\bmL_u$ define the representation of $\ca_{0,4}^2$ carried by $\ch$. 

The Hermitian conjugates of $\bm{A}_1,\bm{B}_1,\bm{C}_1,\bm{D}_1,\bmL_s,\bmL_t,\bmL_u$ are denoted by the tilded operators $\tilde{\bm{A}}_1,\tilde{\bm{B}}_1,\tilde{\bm{C}}_1,\tilde{\bm{D}}_1,\tilde{\bmL}_s,\tilde{\bmL}_t,\tilde{\bmL}_u$. Their relations are given by \eqref{skein1} - \eqref{skein1} with $\bfq\to\tilde{\bfq}$ and the operators replaced by the tilded operators.

\subsection{S-cycle trace operator}\label{S-cycle trace operator}


The S-cycle trace operator can be written as
\be
\bm{L}_{s} 
&=&e^{-\bmY_{s}}+e^{\bmY_{s}}-e^{L_{3}+L_{4}}e^{-\bmY}{\bfq}^{-1}\left(e^{L_{1}-L_{2}-L_{3}-L_{4}}{\bfq}e^{\bmY}+1\right)\left(1+e^{\bmY}{\bfq}\right)e^{\bm U}, 
\ee
where $\bmY_{s}=\bmY-L_{3}-L_{4}$. Its tilded partner $\tilde{\bmL}_s=\bmL_s^\dagger$ is obtained analogously
\be 
\tilde{\bm{L}}_{s} 
&=&e^{-\tilde{\bmY}_{s}}+e^{\tilde{\bmY}_{s}}-e^{\tilde{L}_{3}+\tilde{L}_{4}}e^{-2\tilde{\bmY}}{\tilde{\bfq}}^{-1}\left(e^{\tilde{L}_{1}-\tilde{L}_{2}-\tilde{L}_{3}-\tilde{L}_{4}}{\tilde{\bfq}}e^{\tilde{\bmY}}+1\right)\left(1+e^{\tilde{\bmY}}{\tilde{\bfq}}\right)e^{\tilde{\bm U}}.
\ee

The trace operator $\bmL_s$ can be simplified by the unitary transformation
\be
\bm{U}'_{s}=\varphi\left(\bm{y}e^{L_{1}-L_{2}-L_{3}-L_{4}},\tilde{\bm{y}}e^{\tilde{L}_{1}-\tilde{L}_{2}-\tilde{L}_{3}-\tilde{L}_{4}}\right)\varphi\left(\bm{y},\tilde{\bm{y}}\right).
\ee
Indeed, the recursion relation of the quantum dilogarithm implies the following relation:
\be
\bm{U}_{s}^{\prime}\left[e^{L_{3}+L_{4}}{\bfq}^{-1}e^{-\bm{Y}}\left({\bfq}e^{L_{1}-L_{2}-L_{3}-L_{4}+\bm{Y}}+1\right)\left({\bfq}e^{\bm{Y}}+1\right)e^{\bm{U}}\right]\bm{U}_{s}^{\prime -1}=e^{-\bmY_{s}+\bm{U}}.
\ee
Therefore, we obtain
\be
\bm{U}'_{s}\bmL_s\bmU_s^{\prime -1}=e^{-\bmY_{s}}+e^{\bmY_{s}}-e^{-\bmY_{s}+\bm{U}}.
\ee

The operator can be further simplified by the Weil representation $\bm{T}$ of the T-type symplectic transformation \footnote{We refer to \cite{levelk} for a general discussion of T-type symplectic transformation and its Weil representation as unitary operator.}
\be
\left(\begin{array}{c}
	Y\\
	U
	\end{array}\right)\to \begin{pmatrix}1 & 0\\
	-1 & 1
	\end{pmatrix}\left(\begin{array}{c}
	Y\\
	U
	\end{array}\right),\qquad \bm{T}=(-1)^{\bm{m}^2}e^{\frac{\pi i}{N}\left(\bm{\mu}^{2}-\bm{m}^{2}\right)}.
\ee
The unitary operator $\bm{T}$ is well-defined on $\ch$ and satisfies
\be
\bm{T}e^{\bm{U}-\bm{Y}}\bm{T}^{-1}=-e^{\bm U},\label{Ttransform}
\ee
and commutes with $e^{\bm Y}$. As a result, we obtain
\be
\bmL_s'=\bmU_s\bmL_s\bmU_s^{-1}=e^{-\bmY +L_3+L_4}+e^{\bmY-L_3-L_4}+e^{\bm{U}+L_3+L_4},\qquad \bmU_s=\bmU_s'\bm{T}.
\ee
Note that the minus sign from the $\bm T$ transformation in \eqref{Ttransform} is a consequence from the integer level $N>1$. But this minus sign is cancelled due to the choice of quantization \eqref{Z1andZ5} so that $\bmL_s'$ has all plus signs.

A similar computation shows that $\bmU_s$ also simplify $\tilde{\bmL}_s$:
\be
\tilde{\bmL}_s'=\bmU_s\tilde{\bmL}_s\bmU_s^{-1}=e^{-\tilde{\bmY} +\tilde{L}_3+\tilde{L}_4}+e^{\tilde{\bmY}-\tilde{L}_3-\tilde{L}_4}+e^{\tilde{\bm U}+\tilde{L}_3+\tilde{L}_4}.
\ee


Here we assume both $e^{L_3},e^{L_4}$ can be parametrized by
\be
e^{L_3}=\exp\lt[\frac{2\pi i}{N}\lt(-ib\mu_{3}-m_3\rt)\rt],\qquad e^{L_4}=\exp\lt[\frac{2\pi i}{N}\lt(-ib\mu_{4}-m_4\rt)\rt]
\ee
where $\mu_3,\mu_4\in\R$ and $m_3,m_4\in\mathbb{Z}/N\Z$. For this paper, the purpose of this assumption is to further simplify $\bmL_s'$ and $\tilde{\bmL}_s'$ by unitary transformations and relate them to $\bmL,\tilde{\bmL}$ derived at the end of Section \ref{The Clebsch-Gordan decomposition}. However, when we consider the quantization of $\Slc$ Chern-Simons theory on 3-manifold whose boundary is a closed 2-surface containing the 4-holed sphere as a part of the geodesic boundary (see e.g. \cite{DGV,Han:2021tzw}), $\{L_a\}_{a=1}^4$ are part of the Darboux coordinate of the phase space, so $\{e^{L_a}\}_{a=1}^4$ should be quantized in the same way as $e^{\bm{Y}}$.

We apply the unitary shift operators 
\be
\cs_{34}f(\mu,m)=f\left(\mu+\mu_{3}+\mu_4,m+m_{3}+m_4\right),\qquad \cd_{34}=e^{\frac{2\pi i}{N}\lt[ ({\mu_3+\mu_4})\bm{\mu}-({m_3+m_4})\bm{m} \rt]}
\ee 
which gives 
\be
\cs_{34}e^{-{\bmY}+L_3+L_4}\cs_{34}^{-1}=e^{-{\bmY}},\qquad \cs_{34}e^{{\bm U}}\cs_{34}^{-1}=e^{{\bm U}}\\
\cd_{34}^{-1}e^{\bm{U}+L_3+L_4}\cd_{34}=e^{\bm U},\qquad \cd_{34}^{-1}e^{\bm{Y}}\cd_{34}=e^{\bm Y}
\ee
Therefore
\be
\cd_{34}^{-1}\cs_{34}\bmL_s'\cs_{34}^{-1}\cd_{34}&=&e^{-{\bmY}}+e^{{\bmY}}+e^{{\bm U}}=\bmL, \label{scyclefinal1}\\
\cd_{34}^{-1}\cs_{34}\tilde{\bmL}_s' \cs_{34}^{-1}\cd_{34}&=&e^{-{\tilde{\bmY}}}+e^{\tilde{\bmY}}+e^{\tilde{\bm U}}=\tilde{\bmL}.\label{scyclefinal2}
\ee
This shows that the sequence of unitary transformations on $\ch$
\be
\mathfrak{U}_s=\cd_{34}^{-1}\cs_{34}\bmU_s
\ee
relates the S-cycle trace operators $\bmL_s,\tilde{\bmL}_s$ to $\bmL,\tilde{\bmL}$ derived in Section \ref{The Clebsch-Gordan decomposition} from the Clebsch-Gordan decomposition of $\suquqt$ unitary representations:
\be
\bmL=\mathfrak{U}_s\bmL_s\mathfrak{U}_s^{-1},\qquad \tilde{\bmL}=\mathfrak{U}_s\tilde{\bmL}_s\mathfrak{U}_s^{-1}.
\ee

\subsection{T-cycle trace operator}\label{T-cycle trace operator}

The operator quantizing the trace of T-cycle enclosing the 2nd and 3rd holes is given by
\be
\bmL_t
&=&e^{-\bmY_{t}}+e^{\bmY_{t}}-e^{L_{2}+L_{3}}{\bfq}e^{-\bm{Y}}\left(1+{\bfq}^{-1}e^{-2L_{3}+\bm{Y}}\right)\left(1+{\bfq}^{-1}e^{-L_{1}-L_{2}-L_{3}-L_{4}+\bm{Y}}\right)e^{-\bm{U}}
\ee
where $\bmY_t=\bmY-L_{2}-L_{3}$. We obtain the tilded operator by $\tilde{\bmL}_t=\bmL_t^\dagger$. We apply the unitary transformation:
\be
\bmU'_{t}\bmL_t\bmU^{\prime -1}_{t}&=&e^{-\bmY+L_2+L_3}+e^{\bmY-L_2-L_3}-e^{-\bmY-\bm{U}+L_2+L_3},
\ee
where
\be
\bmU'_{t}&=&\varphi\left(e^{-\left(L_{1}+L_{2}+L_{3}+L_{4}\right)}\bm{y},e^{-\left(\tilde{L}_{1}+\tilde{L}_{2}+\tilde{L}_{3}+\tilde{L}_{4}\right)}\tilde{\bm{y}}\right)^{-1}\varphi\left(e^{-2L_{3}}\bm{y},e^{-2\tilde{L}_{3}}\tilde{\bm{y}}\right)^{-1}.
\ee

The further unitary transformation can be made by the Weil transformation representing the following symplectic transformation
\be
\left(\begin{array}{c}
	Y\\
	U
	\end{array}\right)\to \begin{pmatrix}-1 & 0\\
	-1 & -1
	\end{pmatrix}\left(\begin{array}{c}
	Y\\
	U
	\end{array}\right)=\begin{pmatrix}1 & 0\\
		1 & 1
		\end{pmatrix}\begin{pmatrix}-1 & 0\\
		0 & -1
		\end{pmatrix}\left(\begin{array}{c}
		Y\\
		U
		\end{array}\right).
\ee
We define the unitary operator $\bm{V}$ by $\bm{V}f(\mu,m)= f(-\mu,-m)$ and obtain
\be
\bmL_{t}'=\bmU_{t}\bmL_{t}\bmU_{t}^{-1}=e^{-\bm{Y}-L_2-L_3}+e^{\bm{Y}+L_2+L_3}+e^{\bm{U}+L_2+L_3},\qquad \bmU_{t}=\bm{T}^{-1}\bm{V}\bmU'_{t}.
\ee

Similar to $\bmL_s$, we parametrize $e^{L_2},e^{L_3}$ by
\be
e^{L_2}=\exp\lt[\frac{2\pi i}{N}\lt(-ib\mu_{2}-m_2\rt)\rt],\qquad e^{L_3}=\exp\lt[\frac{2\pi i}{N}\lt(-ib\mu_{3}-m_3\rt)\rt]
\ee
where $\mu_2,\mu_3\in\R$ and $m_2,m_3\in\mathbb{Z}/N\Z$. We apply the unitary shift operators 
\be    
\cs_{23}f(\mu,m)=f\lt(\mu+\mu_{2}+\mu_{3},m+m_2+m_3\rt),\qquad \cd_{23}=e^{\frac{2\pi i}{N}\lt[ ({\mu_2+\mu_3})\bm{\mu}-({m_2+m_3})\bm{m} \rt]},
\ee    
which gives
\be
\cs_{23}^{-1}e^{{\bmY}+L_2+L_3}\cs_{23}&=&e^{{\bmY}},\qquad \cs_{23}^{-1}e^{{\bm U}}\cs_{23}=e^{{\bm U}}\\
\cd_{23}^{-1}e^{\bm{U}+L_2+L_3}\cd_{23}&=&e^{\bm U},\qquad \cd_{23}^{-1}e^{\bm{Y}}\cd_{23}=e^{\bm Y}
\ee
As a result, the sequence of unitary transformations transforms the T-cycle trace operators to $\bmL,\tilde{\bmL}$
\be
\cd_{23}^{-1}\cs_{23}^{-1}\bmL_t'\cs_{23}\cd_{23}&=&e^{-{\bmY}}+e^{{\bmY}}+e^{{\bm U}}=\bmL,\\
\cd_{23}^{-1}\cs_{23}\tilde{\bmL}_t'\cs_{23}\cd_{23}&=&e^{-{\tilde{\bmY}}}+e^{\tilde{\bmY}}+e^{\tilde{\bm U}}=\tilde{\bmL}.
\ee
We denote the unitary operator $\Fu_t$ by
\be
\mathfrak{U}_t=\cd_{23}^{-1}\cs_{23}^{-1}\bm{U}_t,
\ee
so the above analysis shows that
\be
\bmL=\mathfrak{U}_t\bmL_t\mathfrak{U}_t^{-1},\qquad \tilde{\bmL}=\mathfrak{U}_t\tilde{\bmL}_t\mathfrak{U}_t^{-1},
\ee
and it also implies that there is a unitary transformation relating $(\bmL_t, \tilde{\bmL}_t)$ and $(\bmL_s,\tilde{\bmL}_s)$.
\be
\bmL_s=\mathfrak{U}_s^{-1}\mathfrak{U}_t\bmL_t\mathfrak{U}_t^{-1}\mathfrak{U}_s,\qquad \tilde{\bmL}_s=\mathfrak{U}_s^{-1}\mathfrak{U}_t\tilde{\bmL}_t\mathfrak{U}_t^{-1}\mathfrak{U}_s.
\ee
The operators $\bmL,\tilde{\bmL}$ are generalizations of the Dehn twist operator of quantum Teichm\"uller theory studied in \cite{Kashaev:2000ku,Nidaiev:2013bda,Teschner:2005bz}. The unitary transformation $\mathfrak{U}_s^{-1}\mathfrak{U}_t$ relating $(\bmL_s,\tilde{\bmL}_s)$ and $(\bmL_t,\tilde{\bmL}_t)$ can be seen as realizing the A-move of Moore-Seiberg groupoid on $\ch$, as a generalization from the result from quantum Teichm\"uller theory in \cite{Teschner:2003em}.


\section{Spectral decomposition}\label{Eigenstates of the trace operators}

The above discuss shows that both the S-cycle and T-cycle trace operators of quantized flat connection on 4-holed sphere, $(\bmL_s,\tilde{\bmL}_s)$ and $(\bmL_t,\tilde{\bmL}_t)$ are unitary equivalent to $(\bmL,\tilde{\bmL})$, which are derived from the unitary transformations acting on the Casimir operators $(Q_{21},\tilde{Q}_{21})$ of the tensor product representation of $\suquqt$ on $\ch\otimes\ch$. On one hand, the spectral decomposition of $(\bmL,\tilde{\bmL})$ is unitary equivalent to the Clebsch-Gordan decomposition for $\suquqt$ by Lemma \ref{CGirrep}, while on the other hand, diagonalizing  $(\bmL,\tilde{\bmL})$ gives the Fenchel-Nielsen (FN) representation of $\ch$, since classically $(L_s,\tilde{L}_s)$ and $(L_t,\tilde{L}_t)$ are the complexification of the Fenchel-Nielsen (FN) lengths on the 4-holed sphere, and $(\bmL_s,\tilde{\bmL}_s)$ and $(\bmL_t,\tilde{\bmL}_t)$ define their quantizations.

We define
\be
\g(-x,n)&=&\prod_{j=0}^{\infty}\frac{1-{\bfq}^{2j+1}\exp\left[\frac{2\pi i}{N}\left(-ibx\sqrt{N}+n\right)\right]}{1-\tilde{{\bfq}}^{-2j-1}\exp\left[\frac{2\pi i}{N}\left(-ib^{-1}x\sqrt{N}-n\right)\right]},
\ee
$\g(-x,n)=\mathrm{D}_b(x,n)$ is the quantum dilogarithm over $\R\times \Z/N\Z$ defined by Andersen and Kashaev in \cite{Andersen2014} (see also \cite{andersen2016level}). For $N=1$, $\g(x,n)=\g(x,0)\equiv\g(x)$ is Faddeev's quantum dilogarithm in e.g. \cite{Derkachov:2013cqa,Faddeev:1995nb}:
\be
\g(x)=\exp\lt[\frac{1}{4}\int_{\R+i0^+}\frac{\rmd w}{w}\frac{e^{2 i w x} }{\sinh \left(b^{-1}{w}\right) \sinh (b w)}\rt]
\ee
The relation with the quantum dilogarithm $\varphi$ used above is 
\be
\varphi\lt(-y,-\tilde{y}\rt)=\g\lt(-\frac{\mu}{\sqrt{N}},-m\rt)^{-1},\qquad y=e^{\frac{2\pi i}{N}(-ib\mu-m)},\quad \tilde{y}=e^{\frac{2\pi i}{N}(-ib^{-1}\mu+m)}
\ee
We introduce some short-hand notations that are useful below
\be
c_b=\frac{i}{2}(b+b^{-1}),\quad \omega = \frac{i}{2 b \sqrt{N}},\quad
\o'= \frac{i b}{2 \sqrt{N}},\quad
\o''= \frac{c_b}{\sqrt{N}},\quad
x = \frac{\mu }{\sqrt{N}},\quad
\lambda =\frac{\mu_r}{\sqrt{N}}.
\ee
Some useful properties of $\gamma(x,n)$ including the inverse relation, recursion relation, and integral identity can be found in Appendix \ref{Quantum dilogarithm}.

$\bmL,\widetilde{\bmL}$ has the following eigenstates
\be
\psi_r(\mu ,m)&=&\exp \left(\frac{i \pi  m^2}{N}+\frac{i \pi  m_r^2}{N}+i \pi  m_r^2-i \pi  m-i \pi  \left(x-\omega ''+i \epsilon \right)^2\right)\nonumber\\
&&\gamma \left(-\lambda +x-\omega ''+i \epsilon ,m-m_r\right) \gamma \left(\lambda +x-\omega ''+i \epsilon ,m_r+m\right),
\ee
It is straight-forward to check that $\psi_r$ satisfies the following eigen-equations
\be 
&&\bmL \psi_r(\mu ,m)=\lt(r+r^{-1}\rt)\psi_r(\mu ,m),\qquad \widetilde{\bmL}\psi_r(\mu ,m)=\lt({r}+{r}^{-1}\rt)^*\psi_r(\mu ,m),\\
&& \quad r=\exp\lt[\frac{2\pi i}{N}(-ib\mu_r-m_r)\rt],\qquad\quad {r}^*=\exp\lt[\frac{2\pi i}{N}(-ib^{-1}\mu_r+m_r)\rt].
\ee
$\psi_r(\mu ,m)$ is invariant manifestly under $(\mu_r,m_r)\to(-\mu_r,-m_r)$ and periodic under $m\to m+N$, $m_r\to m_r+N$. When $N=1$, $m=m_r=0$, $\psi_r$ reduces to Kashaev's eigenfunctions \cite{Kashaev:2000ku} ($\phi(x,\l)$ in \cite{Derkachov:2013cqa}) for the Dehn-twist operator.

The following two lemmas state that the set of $\psi_r$ forms a complete distributional orthogonal basis for $\ch$. The proofs are generalizations from \cite{Kashaev2001} for the quantum Teichm\"uller theory to include the discrete degrees of freedom labelled by $m,m_r$ due to the level $N$.

\begin{lemma}

The eigenstates $\psi_r$ satisfy the orthogonality
	\be
	\lag\psi_r\mid \psi_{r'}\rag=\sum_{m\in\Z/N\Z}\int\rmd\mu\,\psi_r(\mu,m)^*\psi_{r'}(\mu,m)=\frac{1}{4}\rho(\mu_r,m_r)\delta_{r,r'}.
	\ee
where
	\be
	\rho(\mu_r,m_r)&=&N^2\lt[\sin\lt(\frac{2\pi}{N}(ib\mu_r+m_r)\rt)\sin\lt(\frac{2\pi}{N}(-ib^{-1}\mu_r+m_r)\rt)\rt]^{-1},\\
	\delta_{r,r'}&=&\delta(\mu_r-\mu_{r'})\delta_{m_r,m_{r'}}+\delta(\mu_r+\mu_{r'})\delta_{m_r,-m_{r'}}.
	\ee


\end{lemma}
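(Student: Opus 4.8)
The plan is to reduce the inner product $\lag\psi_r\mid\psi_{r'}\rag$ to a known Fourier–beta integral for the level-$N$ quantum dilogarithm $\g$, and then to extract the two delta distributions by a contour-pinching argument. Before doing the computation, it is worth noting that the shape of the right-hand side is essentially forced: $\bmL,\widetilde{\bmL}$ are normal operators with $\widetilde{\bmL}=\bmL^\dagger$, and $\psi_r$ is a common generalized eigenvector with eigenvalues $r+r^{-1}$ and $(r+r^{-1})^*$; hence for $r'$ with $r'+r'^{-1}\neq r+r^{-1}$ the overlap must vanish, and since $r=\exp[\frac{2\pi i}{N}(-ib\mu_r-m_r)]$ this happens unless $(\mu_{r'},m_{r'})\equiv(\mu_r,m_r)$ or $(\mu_{r'},m_{r'})\equiv(-\mu_r,-m_r)$, which is precisely the support of $\delta_{r,r'}$. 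Turning this heuristic into the stated distributional identity, however, still requires the explicit evaluation below, which also fixes the normalization.

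First I would substitute the explicit formula for $\psi_r$ together with the inversion relation for $\g$ (Appendix~\ref{Quantum dilogarithm}) to compute $\psi_r(\mu,m)^*$: the conjugate of each $\g$-factor becomes the reciprocal of a $\g$-factor with reflected arguments, up to an explicit Gaussian in $(\mu,m)$ and a phase. Collecting all prefactors, the summand/integrand of $\lag\psi_r\mid\psi_{r'}\rag$ takes the form of an elementary Gaussian-times-phase factor in $(\mu,m,\mu_r,m_r,\mu_{r'},m_{r'})$ multiplied by
\[
\frac{\g(\lambda'+x-\o''+i\epsilon,\,m+m_{r'})\,\g(-\lambda'+x-\o''+i\epsilon,\,m-m_{r'})}{\g(\lambda+x-\o''-i\epsilon,\,m+m_r)\,\g(-\lambda+x-\o''-i\epsilon,\,m-m_r)},
\]
where $x=\mu/\sqrt N$, $\lambda=\mu_r/\sqrt N$, and $\lambda'=\mu_{r'}/\sqrt N$.

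Next I would use the recursion (shift) relations of $\g$ to merge the four $\g$-factors into a ratio of two $\g$-factors times elementary exponentials, and then apply the Andersen–Kashaev integral identity — the level-$N$ analogue of Faddeev's beta integral quoted in Appendix~\ref{Quantum dilogarithm} — to perform the $\mu$-integral and the $\Z/N\Z$-sum in closed form. The output is, up to explicit Gaussian and phase factors, a product $\g(\cdots)\,\g(\cdots)$ whose arguments are linear in $\mu_r\mp\mu_{r'}$ and $m_r\mp m_{r'}$. Taking the limit $\epsilon\to 0^+$, this meromorphic function of $\mu_{r'}$ has two strings of poles approaching the real axis that pinch the original integration contour exactly at $\mu_{r'}=\pm\mu_r$ (with $m_{r'}=\mp m_r$ selected by the periodicity of $\g$ in the discrete variable); by the standard Sokhotski–Plemelj argument these pinches produce $\delta(\mu_r-\mu_{r'})\delta_{m_r,m_{r'}}+\delta(\mu_r+\mu_{r'})\delta_{m_r,-m_{r'}}=\delta_{r,r'}$. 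The residue at each pole, evaluated from the complementary $\g$-factor at the special point $x=\o''$ (where $\g$ reduces to a reciprocal product of two sines), combines with the prefactors — which become trivial on the support of the deltas thanks to the invariance of $\psi_r$ under $(\mu_r,m_r)\to(-\mu_r,-m_r)$ — to give precisely $\tfrac14\rho(\mu_r,m_r)$.

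The step I expect to be the main obstacle is this last one: tracking the $i\epsilon$-prescriptions through the beta integral, identifying which pole string actually pinches the contour, and carrying out the residue evaluation so that it reproduces both $\rho(\mu_r,m_r)$ and, in particular, the overall constant $\tfrac14$ — which should emerge from the combination of the two delta contributions together with the change of variables $\mu_0=2\mu$, $m_0=2m$ relating $\ch$ to $\ch_0$. Everything else — the conjugation formula, the functional equations of $\g$, and the beta integral itself — is essentially bookkeeping once the correct identity from Appendix~\ref{Quantum dilogarithm} is invoked.
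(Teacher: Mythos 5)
Your overall strategy is recognizable — conjugate $\psi_r$ via the inversion relation, feed the resulting ratio of four $\gamma$'s into the Andersen--Kashaev integral identity, and extract the delta distributions in the $\epsilon\to 0^+$ limit — and the heuristic you open with (eigenvectors of a normal pair with distinct eigenvalues are orthogonal) is sound. However, there is a genuine gap in the crucial middle step, and a couple of the mechanisms you invoke differ from what actually makes the computation go through.

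The main problem is the claim that you can ``use the recursion (shift) relations of $\g$ to merge the four $\g$-factors into a ratio of two $\g$-factors times elementary exponentials,'' and then apply the beta integral \eqref{intIdgamma1}. The recursion relations only translate the argument of a single $\gamma$ by fixed lattice steps $\pm 2\o'$ or $\pm 2\o$ (with the accompanying discrete shift); they cannot combine $\gamma(\lambda'+x-\o''+i\epsilon,\,m+m_{r'})$ with $\gamma(-\lambda'+x-\o''+i\epsilon,\,m-m_{r'})$ (whose arguments differ by the \emph{continuous, independent} parameter $2\lambda'$) into a single $\gamma$. So after this step you would still have four $\gamma$'s depending on $x$, and \eqref{intIdgamma1}, which is built to integrate a ratio of exactly two $\gamma$'s against a Fourier kernel, does not apply. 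The paper's actual proof circumvents this by first invoking the \emph{inverse} identity \eqref{intIdgamma2} to rewrite one of the two ratios, $\gamma(\lambda'+x-\o''+i\epsilon,m+m_{r'})/\gamma(\lambda+x+\o''-i\epsilon,m+m_r)$, as an integral over an auxiliary pair $(s,c)$, then checks Fubini applicability, and only afterwards applies \eqref{intIdgamma1} to do the $\mu$-integral and $\Z/N\Z$-sum. Introducing this auxiliary integration variable is the key idea your sketch is missing; without it the beta integral is not applicable.

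As a secondary point, once \eqref{intIdgamma2}--\eqref{intIdgamma1} have been applied, the remaining $\gamma$-factors are manipulated with the inverse formula and recursion relation and turn out to \emph{cancel} away from the diagonal $(\mu_{r'},m_{r'})=(\mu_r,m_r)$, leaving the residual Fourier integral $\sum_c\int\rmd s\, e^{-\frac{2\pi i}{N}c(m_{r'}+m_r)+2\pi i s(\lambda'+\lambda)}$, which is what produces the delta distributions directly; the paper never needs a contour-pinching/Sokhotski--Plemelj argument on an explicit product of $\gamma$'s (and $\rho$ comes out of the algebraic cancellation, not a residue at $x=\o''$). Finally, the overall $\tfrac14$ does not ``emerge from the change of variables $\mu_0=2\mu$, $m_0=2m$'' — that rescaling is already baked into the definition of $\ch$ and its inner product in Section \ref{Quantum torus algebra and the representation}; the $\tfrac14$ is produced by the explicit constants accumulated in the $\gamma$-manipulations. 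To repair the proposal you should replace the ``merge via recursion'' step with the introduction of the auxiliary $(s,c)$ variables via \eqref{intIdgamma2}, justify the interchange of integrals, and obtain the deltas from the residual Fourier integral rather than from a pinch.
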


\begin{proof}


The integrand $\psi_r(\mu,m)^*\psi_{r'}(\mu,m)$ is given by
\be
(-1)^{m_r^2- m_{r'}^2}e^{-\frac{i \pi  (m_r^2- m_{r'}^2)}{N}+4 i \pi  x (\omega ''-i\epsilon)}\frac{\gamma \left(-\l'+x-\omega ''+i \epsilon ,m-{m_{r'}}\right) \gamma \left(\l'+x-\omega ''+i \epsilon ,m+{m_{r'}}\right)}{\gamma \left(-\lambda +x+\omega ''-i \epsilon ,m-{m_r}\right) \gamma \left(\lambda +x+\omega ''-i \epsilon ,m+{m_r}\right)},\nonumber\\
\label{integral111}
\ee
where $\lambda '={\mu_{r'}}/{\sqrt{N}}$. We use the integration identity \eqref{intIdgamma2} to transform the ratio of two $\gamma$'s \footnote{The variables involved in \eqref{intIdgamma2} are given by $t=x$, $\a=\l'-\o''+i\epsilon,\ \beta=\l'+\o''-i\epsilon$, $d=m$, $a=m_r'$, and $p=m_r$. We have $\im\left(\alpha +\frac{c_b}{\sqrt{N}}\right)=\epsilon>0,\ \im\left(\frac{c_b}{\sqrt{N}}-\beta \right)=\epsilon>0,\ \im(\alpha -\beta )=2\epsilon-\frac{b+b^{-1}}{\sqrt{N}}<0$ for sufficiently small $\epsilon>0$. Therefore there exist an integration contour of $s$ with $\im(\alpha -\beta )<\im(s)<0$ allowing \eqref{intIdgamma2} to hold.}
\be
&&\frac{\gamma \left(\l'+x-\omega ''+i \epsilon ,m+{m_{r'}}\right)}{\gamma \left(\lambda +x+\omega ''-i \epsilon ,m+{m_r}\right)}\nonumber\\
&=&\frac{\zeta _0 }{\sqrt{N}}\sum_{c\in\Z/N\Z}\int\rmd s\frac{e^{\frac{2 i \pi  c (m+m_{r'})}{N}+2 i \pi  s (x+\l')} \gamma \left(-\lambda '+\lambda +s+\omega ''-2 i \epsilon ,-c-{m_{r'}}+{m_r}\right)}{ \gamma \left(s+\omega '',-c\right) \gamma \left(-\lambda '+\lambda +\omega ''-2 i \epsilon ,{m_r}-{m_{r'}}\right)},\label{integral2}
\ee
where $\zeta_0=e^{-\frac{1}{12} i \pi  \left(N-4 c_b^2/N\right)}$.

Inserting \eqref{integral2} in \eqref{integral111}, we check that the integrand of $\int \rmd \mu \rmd s$ suppresses exponentially fast as $\mu,\re(s)\to\pm\infty$ for $\epsilon>0$ by the asymptotic behavior of the quantum dilogarithm (see Appendix \ref{Quantum dilogarithm}), so we can interchange the order of integration. Then we compute the integral and sum of $\mu=x\sqrt{N},m\in\Z/N\Z$ by applying \eqref{intIdgamma1} \footnote{
We replace the variables in \eqref{intIdgamma1} by $t\to x,\ \a\to-\l'-\o''+2i\epsilon,\ \b \to -\l  +\o''-2i\epsilon,\ s\to -s+2i\epsilon -2\o'',\ a\to -m_r',\ p\to-m_r,\ c\to -c$ (the left-hand sides are variables in \eqref{intIdgamma1}, while the right-hand sides are variables in \eqref{integral2}). We have $\im\left(\alpha +\frac{c_b}{\sqrt{N}}\right)\to\epsilon>0,\ \im\left(\frac{c_b}{\sqrt{N}}-\beta \right)\to\epsilon>0,\ \im(\alpha -\beta )\to 2\epsilon-\frac{b+b^{-1}}{\sqrt{N}}<0$ for sufficiently small $\epsilon>0$, and $\im(s)\to 2\epsilon-\frac{b+b^{-1}}{\sqrt{N}} -\im(s)$. Therefore for sufficiently small $\epsilon>0$, $\im(s)<0$ and small $|\im(s)|$ in \eqref{integral2} implies $\im(\alpha -\beta )<\im(s)<0$ in \eqref{intIdgamma1}, ensuring the validity of \eqref{intIdgamma1}.
} then using the inverse formula and recusion relation of $\g$. We obtain that as $\epsilon \to 0$, the result has the singularities at $(\mu_{r'},m_{r'})=\pm(\mu_{r},m_{r})$. Away from $(\mu_{r'},m_{r'})=(\mu_{r},m_{r})$, 
\be
\sum_{m\in\Z/N\Z}\int\rmd\mu\,\psi_r(\mu,m)^*\psi_{r'}(\mu,m)
&=&\frac{1}{4\sqrt{N}}\rho(\mu_r,m_r)\sum_{c\in\Z/N\Z}\int\rmd s\,e^{-\frac{2\pi i}{N}c \left({m_{r'}}+{m_r}\right)+2 i \pi  s \left(\lambda '+\lambda \right)}\nonumber\\
&=&\frac{1}{4}\rho(\mu_r,m_r)\delta(\mu_r+\mu_{r'})\delta_{m_r+m_{r'},0}\ .\label{deltadistribution}
\ee
This equation holds in the sense of tempered distribution\footnote{The result of $s$-integral is understood as a tempered distribution, which makes sense only when interring another integral, say, over $\mu_r=\sqrt{N}\l_r$. For finite $\epsilon$, the integrand of $\int \rmd s$ is $e^{-\frac{2\pi i}{N}c \left({m_{r'}}+{m_r}\right)+2 i \pi  s \left(\lambda '+\lambda \right)}$ multiplying a function $f_\epsilon (\l,\l',s)$ being Schwartz in $s$ and $\lim_{\epsilon\to0}f_\epsilon (\l,\l',s)=1$. Consider the following integral: $\int \rmd \l\rmd s\ e^{2\pi is(\l-\l')}f_\epsilon (\l,\l',s) F(\l)$ with smooth $F$ compact support on $\R_{\geq 0}$ and vanishing at zero. $\widetilde{F}_\epsilon (\l',s)=\int \rmd \l\ e^{2\pi is\l}f_\epsilon (\l,\l',s) F(\l)$ satisfies the bound $|\widetilde{F}_\epsilon (\l',s)|\leq C_{2,\l'}(1+|2\pi s|^2)^{-1}$ for any $\l'$ and any $N>0$. So we can use dominated convergence theorem: $\int\rmd s\,e^{-2\pi is \l'}\widetilde{F}_\epsilon(\l',s)\to \int\rmd s\,e^{-2\pi is \l'}\int \rmd \l\ e^{2\pi is\l} F(\l)=F(\l)$ as $\epsilon\to0$.}. The behavior at the other singularity follows from the reflection symmetry of $\psi_r$
\be
\sum_{m\in\Z/N\Z}\int\rmd\mu\,\psi_r(\mu,m)^*\psi_{r'}(\mu,m)
=\frac{1}{4}\rho(\mu_r,m_r)\lt[\delta(\mu_r-\mu_{r'})\delta_{m_r,m_{r'}}+\delta(\mu_r+\mu_{r'})\delta_{m_r,-m_{r'}}\rt]
\ee

\end{proof}

\begin{lemma}

The eigenstates satisfy the following resolution of identity on $\ch$
\be
4\sum_{m_r\in\Z/N\Z}\int_0^\infty\rmd\mu_r\, \rho(\mu_r,m_r)^{-1}\,\psi_r(\mu',m')\,\psi_r(\mu,m)^*=\delta(\mu,\mu')\,\delta_{m,m'}.
\ee
for $\mu,\mu'\in\R$ and $m,m'\in \Z/N\Z$.

\end{lemma}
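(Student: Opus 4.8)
The plan is to evaluate the kernel
\[
K\big((\mu',m'),(\mu,m)\big):=4\sum_{m_r\in\Z/N\Z}\int_0^\infty\rmd\mu_r\,\rho(\mu_r,m_r)^{-1}\,\psi_r(\mu',m')\,\psi_r(\mu,m)^*
\]
directly, by the same manipulation of quantum dilogarithms used in the proof of the orthogonality lemma, but with the roles of the ``spectral'' variables $(\mu_r,m_r)$ and the ``position'' variables $(\mu,m),(\mu',m')$ interchanged. First I would substitute the explicit formula for $\psi_r$ and, using the inverse relation of $\g$ to rewrite $\psi_r(\mu,m)^*$, express $\psi_r(\mu',m')\psi_r(\mu,m)^*$ as a Gaussian phase in $(x,x',\lambda)$ (with $x=\mu/\sqrt N$, $x'=\mu'/\sqrt N$, $\lambda=\mu_r/\sqrt N$) times the ratio
\[
\frac{\g(-\lambda+x'-\o''+i\eps,\,m'-m_r)\,\g(\lambda+x'-\o''+i\eps,\,m_r+m')}{\g(-\lambda+x+\o''-i\eps,\,m-m_r)\,\g(\lambda+x+\o''-i\eps,\,m_r+m)}.
\]
Using the reflection symmetry $\psi_r=\psi_{-r}$ together with the invariance of $\rho$ under $(\mu_r,m_r)\mapsto(-\mu_r,-m_r)$, the half-line integral $4\int_0^\infty\rmd\mu_r\sum_{m_r}$ is folded to $2\int_{-\infty}^\infty\rmd\mu_r\sum_{m_r}$, so that the full-line dilogarithm identities become applicable.

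Next I would apply the integral identity \eqref{intIdgamma2} to the subratio $\g(\lambda+x'-\o''+i\eps,\,m_r+m')/\g(\lambda+x+\o''-i\eps,\,m_r+m)$ -- a ratio of two $\g$'s regarded as functions of $\lambda$ -- to trade it for a single $\g$ integrated against an auxiliary variable $s$, choosing the sign of $\eps$ and the $s$-contour so that all imaginary-part constraints in \eqref{intIdgamma2} hold, exactly as in the footnotes of the orthogonality proof. After interchanging the orders of the $\mu_r$-integral, the $m_r$-sum, and the $s$-integral -- legitimate because, by the asymptotics $\mathrm{D}_b(c_b/\sqrt N+x,n)\sim\exp[i\pi(x+c_b/\sqrt N)^2+O(1)]$ as $\re x\to+\infty$ and $\mathrm{D}_b(c_b/\sqrt N+x,n)\sim O(1)$ as $\re x\to-\infty$, the integrand is Schwartz in $(\mu_r,\re s)$ for $\eps>0$ -- the remaining $\lambda$-integral and $m_r$-sum are evaluated with \eqref{intIdgamma1}, followed by the inverse and recursion relations of $\g$. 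The weight $\rho(\mu_r,m_r)^{-1}$ is precisely what makes the surviving $\g$-factors cancel, leaving a pure Fourier kernel $\tfrac1N\sum_{c\in\Z/N\Z}\int\rmd s\,e^{2\pi i s(x-x')}e^{-\frac{2\pi i}{N}c(m-m')}=\delta(x-x')\,\delta_{m,m'}$; tracking the Jacobian $\rmd\mu_r=\sqrt N\,\rmd\lambda$ and the $\sqrt N$ in $\delta(\mu-\mu')=\delta(x-x')/\sqrt N$ then yields $K=\delta(\mu-\mu')\delta_{m,m'}$. As in \eqref{deltadistribution}, every equality here is understood in the sense of tempered distributions, with the $\eps\to0$ limit justified by dominated convergence against a fixed Schwartz test function in the remaining free variable.

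The main obstacle is analytic bookkeeping rather than any conceptual difficulty: one must pin down a single choice of $\eps>0$ and of the $s$-contour for which all applications of the quantum-dilogarithm identities are simultaneously valid, and must justify the Fubini interchanges and the $\eps\to0$ limit -- the same delicate point that the orthogonality proof handles through its footnotes -- while here, in addition, the half-line folding must be checked to be compatible with those contour choices. Once both lemmas are in place, the transform $f\mapsto\widehat f(\mu_r,m_r):=\lag\psi_r\mid f\rag$ is unitary from $\ch$ onto the direct integral over $\{(\mu_r,m_r):\mu_r\ge0,\ m_r\in\Z/N\Z\}$ with spectral measure $4\rho(\mu_r,m_r)^{-1}\rmd\mu_r$ and one-dimensional fibres, which is exactly the Fenchel--Nielsen representation \eqref{DID0000}; in particular $\dim\ch(\ell,\ell^*)=1$, so Lemma \ref{CGirrep} then gives the Clebsch--Gordan decomposition of the tensor product representation of $\suquqt$.
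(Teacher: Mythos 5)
Your plan reproduces the right scaffolding — fold the half-line to the full line, apply \eqref{intIdgamma2} to split a ratio of $\g$'s, interchange integrals, apply \eqref{intIdgamma1} — but the key claim at the heart of it, that ``the weight $\rho(\mu_r,m_r)^{-1}$ is precisely what makes the surviving $\g$-factors cancel, leaving a pure Fourier kernel,'' is not what happens, and this is where the proof actually has to do its work. After applying \eqref{intIdgamma1} the integrand still carries ratios such as
\[
\frac{\g\bigl(s+x+\omega''-y-2i\epsilon,\ \cdot\bigr)\,\g\bigl(i\delta+s+x-\omega''-y-2i\epsilon,\ \cdot\bigr)}{\g\bigl(s+\omega'',-c\bigr)\,\g\bigl(i\delta+s-\omega'',-c\bigr)\,\g\bigl(x+\omega''-y-2i\epsilon,\ \cdot\bigr)^2}\,,
\]
which do not cancel in the integrand. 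The delta function $\delta(\mu-\mu')\delta_{m,m'}$ is not produced by a residual Fourier integral over $s$; it emerges as the residue of the integrand at $s=0$ (coming only from the $c=0$ term), after a contour deformation in $s$ and a delicate cancellation between two pieces of the integral.

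More concretely, the paper does \emph{not} work with $\rho^{-1}$ directly. It first decomposes $\rho(\mu_r,m_r)^{-1}=\sigma(\mu_r,m_r)+\sigma(-\mu_r,-m_r)$ with $\sigma$ a specific pair of exponentials, which turns the single integral into $I=I_1+I_2$. The two pieces have different analytic behaviour: $I_1$ needs an additional $e^{2\pi t\delta}$ regularisation to make \eqref{intIdgamma1} applicable, whereas $I_2$ does not. Then one must catalogue the poles of the $\g$-factors in the $s$-plane, deform the $I_1$ contour to $s\to s+2\omega'$ (picking up the residue at $s=0$), and check that the shifted $I_1$ integral exactly cancels $I_2$ as $\delta\to0$. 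Only the residue survives, and its evaluation (using the known residue of $\rmD_b$ at its first pole and the linear vanishing of $\rmD_b(-c_b/\sqrt N+s,0)$) yields, in the $\delta,\epsilon\to 0$ limit, the distributional identity $\frac{\delta}{8\pi\sqrt N\,(x-y-2i\epsilon)(i(\delta-2\epsilon)+x-y)}\to\frac14\delta(\mu-\mu')$. Your proposal omits the $\sigma$-decomposition, the $\delta$-regularisation, the pole bookkeeping, the contour shift, and the $I_1$/$I_2$ cancellation — i.e.\ essentially all of the analysis that actually produces the delta function. Treating $\rho^{-1}$ as a single multiplier also hides the fact that the two exponential pieces of $\sigma$ have different convergence properties and cannot be handled uniformly. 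So the approach is pointed in the right direction, but the central step is asserted rather than carried out, and the asserted mechanism (cancellation of $\g$'s into a Fourier kernel) is wrong.
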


\begin{proof}

Firstly we define $\sigma(\mu_r,m_r)$ such that $\rho(\mu_r,m_r)^{-1}=\sigma(\mu_r,m_r)+\sigma(-\mu_r,-m_r)$:
\be
\sigma(\mu_r,m_r)=\frac {e^{-4  i\pi\lambda\omega'' - \frac {2  i\pi   m_r} {N} + 8  i\pi   m_r \omega'{}^2 - 8  i\pi   m_r\omega'\omega'' } - e^{-4 i\pi\lambda\left (\omega'' - 2\omega' \right) + \frac {2  i\pi   m_r} {N} + 8  i\pi   m_r \omega'{}^2 - 8  i\pi   m_r\omega'\omega''}} {4  N^2},\nonumber
\ee
so that
\be
\sum_{m_r\in\Z/N\Z}\int_{-\infty}^\infty\rmd\mu_r\, \sig(\mu_r,m_r)\,\psi_r(\mu',m')\,\psi_r(\mu,m)^*  
&=&\sum_{m_r\in\Z/N\Z}\int_{0}^\infty\rmd\mu_r\, \rho(\mu_r,m_r)^{-1}\,\psi_r(\mu',m')\,\psi_r(\mu,m)^*.\nonumber
\ee
We consider the integral
\be
I&=&\sum_{m_r\in\Z/N\Z}\int_{-\infty}^\infty\rmd\mu_r\, \sig(\mu_r,m_r)\,\psi_r(\mu',m')\,\psi_r(\mu,m)^*\nonumber\\
&=&\sqrt{N}\sum_{m_r\in\Z/N\Z}\int_{-\infty}^\infty\rmd\l\,\sigma\, e^ {-\frac{i \pi  \left(m^2-{m'}^2\right)}{N}+i \pi  (m-{m'})+i \pi  \left(x^2-y^2\right)+2 i \pi  (x+y) (\omega '' -i\epsilon)}\nonumber\\
&&\frac{  \gamma \left(-\lambda -\omega ''+y+i \epsilon ,m'-{m_r}\right) \gamma \left(\lambda -\omega ''+y+i \epsilon ,{m_r}+m'\right)}{\gamma \left(-\lambda +x+\omega ''-i \epsilon ,m-{m_r}\right) \gamma \left(\lambda +x+\omega ''-i \epsilon ,{m_r}+m\right)},
\ee
and we make the following change of variables
\be
-t= \lambda ,\qquad \alpha = -\omega ''+y+i \epsilon ,\qquad\beta = x+\omega ''-i \epsilon ,\qquad a= m',\qquad p= m,\qquad -d= {m_r}.\nonumber
\ee
Then we apply \eqref{intIdgamma2} with $\mathrm{Im}(\a+\frac{c_b}{\sqrt{N}})=\mathrm{Im}(-\b+\frac{c_b}{\sqrt{N}})=\epsilon>0$ and $\mathrm{Im}(\a-\b)=2\epsilon-\frac{2\mathrm{Im}(c_b)}{\sqrt{N}}<\mathrm{Im}(s)<0$. The result is
\be
I=I_1+I_2
\ee
where $I_1,I_2$ correspond to two terms in $\sigma(\mu_r,m_r)$,
\be
I_1&=& \frac{\zeta _0}{4 N^2}\sum_{d,c\in\Z/N\Z}\int\rmd t\rmd s\, e^{-i \pi  \left(\alpha ^2-\beta ^2\right)+\frac{i \pi  \left(a^2+2 a c-p^2\right)}{N}-i \pi  (a-p)-\frac{2 i \pi}{N}  d c}\nonumber\\
&&e^{2 i \pi  s \left(\alpha +\omega ''\right)-2 i \pi  t (s-2\omega '')}\frac{ \gamma (\alpha +t,a+d) \gamma \left(-\alpha +\beta +s-\omega '',-a-c+p\right) }{ \gamma \left(s+\omega '',-c\right) \gamma (\beta +t,d+p) \gamma \left(-\alpha +\beta -\omega '',p-a\right)},
\ee
and 
\be
I_2&=&-\frac{\zeta _0}{4 N^2}\sum_{d,c\in\Z/N\Z}\int\rmd t\rmd s\,  e^{-i \pi  \left(\alpha ^2-\beta ^2\right)+\frac{i \pi  \left(a^2+2 a c-p^2\right)}{N}-i \pi  (a-p)-\frac{2 i \pi}{N}  d \left(c+2\right)}\nonumber\\
&&e^{2 i \pi  s \left(\alpha +\omega ''\right)-2 i \pi  t \left(s-2 \omega +2 \omega '\right)}\frac{ \gamma (\alpha +t,a+d) \gamma \left(-\alpha +\beta +s-\omega '',-a-c+p\right)}{ \gamma \left(s+\omega '',-c\right) \gamma (\beta +t,d+p) \gamma \left(-\alpha +\beta -\omega '',p-a\right)}.
\ee
We insert a regularization $\int\rmd t\to \int\rmd t\, e^{2\pi t\delta }$ in $I_1$. This may also be understood as a modification of the integration measure by inserting a factor $e^{-2\pi \l\delta }$ to the first term in $\sigma(\mu_r,m_r)$. We requiring $2\epsilon-\delta <\im(s)<0$. This condition implies $\mathrm{Im}(\a-\b)<\mathrm{Im}(s+i\delta-2\o'')<0$ and ensures that the integrand is a Schwarz function of $t,s$. Then we can interchange the order of integration and apply \eqref{intIdgamma1} to carry out the integration and sum of $t,d$
\be
I_1&=&\frac{\zeta _0^2}{4 N^{3/2}}\sum_{c\in\Z/N\Z}\int\rmd s\, e^{\frac{4 i \pi  c m'{}}{N} -\frac{i \pi  \left(m^2-m'{}^2\right)}{N}+i \pi  (m-m'{})+4 i \pi  s y +i \pi  \left(x^2-y^2\right)+2 i \pi  (x-y) \omega '' -2 \pi  \delta  y}\nonumber\\
&&\frac{ \gamma \left(s+x+\omega ''-y-2 i \epsilon ,-c+m-m'{}\right) \gamma \left(i \delta +s+x-\omega ''-y-2 i \epsilon ,-c+m-m'{}\right) }{ \gamma \left(s+\omega '',-c\right) \gamma \left(i \delta +s-\omega '',-c\right) \gamma \left(x+\omega ''-y-2 i \epsilon ,m-m'{}\right)^2}.\label{I1}
\ee
In $I_2$, the integrand is already a Schwarz function of $t,s$, and we have $\mathrm{Im}(s-2 \omega +2 \omega ')=\mathrm{Im}(s)<0$, so \eqref{intIdgamma1} can be applied without any regularization. 
\be
I_2&=&-\frac{\zeta _0^2}{4 N^{3/2}}\sum_{c\in\Z/N\Z}\int\rmd s\, e^{\frac{2 i \pi  (2 c m'{}+2 m'{})}{N}-\frac{i \pi  \left(m^2-m'{}^2\right)}{N}+i \pi  (m-m'{})+4 i \pi  s y+i \pi  \left(x^2-y^2\right)+2 i \pi  (x+3 y) \omega '+2 i \pi  \omega  (x-y)}\nonumber\\
&&\frac{ \gamma \left(s+x-y+\omega'' -2 i \epsilon ,-c+m-m'{}\right) \gamma \left(s+x+3 \omega '-y-\omega -2 i \epsilon ,-c+m-m'{}-2\right)}{\gamma \left(s+\omega '' ,-c\right) \gamma \left(s+3 \omega '-\omega ,-c-2\right) \gamma \left(x-y+\omega'' -2 i \epsilon ,m-m'{}\right)^2}.
\ee
The integrand of $I_1$ is suppressed asymptotically as $e^{\mp 4\pi \epsilon s}$ as $s\to \pm\infty$. 

The poles and zeros of $\mathrm{D}_b(x,n)=\gamma(-x,n)$ ($\im(b)>0$) is respectively at $x=\frac{1}{\sqrt{N}}({c_b}+ib^{-1}l+ibm)$ for $n=m-l+N\Z$ and $x=-\frac{1}{\sqrt{N}}({c_b}+ib^{-1}l+ibm)$ for $n=l-m+N\Z$, where $l,m\geq 0$ for both \cite{andersen2016level}. The integrand of $I_1$ has the following poles ($l,m\geq 0$ for below):

\begin{itemize}

\item $\gamma \left(s+x+\omega ''-y-2 i \epsilon ,-c+m-m'{}\right)=\rmD_b\left(-\frac{c_b}{\sqrt{N}}-s-x+y+2 i \epsilon ,-c+m-m'\right)$ has poles at
\be
s_{\rm pole} = -\frac{i l}{b \sqrt{N}}-\frac{i b m}{\sqrt{N}}-\frac{i}{b \sqrt{N}}-\frac{i b}{\sqrt{N}}-x+y+2 i \epsilon,\qquad n=m-l+N\Z.
\ee
These poles satisfy $\im(s_{\rm pole})\leq -2\re (b)/\sqrt{N}+2\epsilon$ and are below the integration contour of \eqref{I1}.

\item $\gamma \left(i \delta +s+x-\omega ''-y-2 i \epsilon ,-c+m-m'{}\right)=\rmD_b\left(\frac{c_b}{\sqrt{N}}-i \delta -s-x+y+2 i \epsilon ,-c+m-m'\right)$ has poles
\be
s_{\rm pole}= -\frac{i l}{b \sqrt{N}}-\frac{i b m}{\sqrt{N}}-i \delta -x+y+2 i \epsilon,\qquad n=m-l+N\Z.
\ee
These poles satisfy $\im(s_{\rm pole})\leq 2\epsilon-\delta$ and are below the integration contour of \eqref{I1}, since the contour of \eqref{I1} satisfy $2\epsilon-\delta\leq \im(s)<0$.

\item $ \gamma \left(s+\omega '',-c\right)^{-1}=\rmD_b\left(-\frac{c_b}{\sqrt{N}}-s,-c\right)^{-1}$ has poles
\be
s_{\rm pole }=\frac{i l}{b \sqrt{N}}+\frac{i b m}{\sqrt{N}},\qquad n=l-m+N\Z.
\ee
It contains the pole at the origin $s_{\rm pole}=0$ only when $c=0$, while other poles satisfy $\im(s_{\rm pole})\geq \re(b)/\sqrt{N}$.

\item $\gamma \left(i \delta +s-\omega '',-c\right)^{-1}=\rmD_b\left(\frac{c_b}{\sqrt{N}}-i \delta -s,-c\right)^{-1}$ has poles
\be
s_{\rm pole }=\frac{i l}{b \sqrt{N}}+\frac{i b m}{\sqrt{N}}+\frac{i}{b \sqrt{N}}+\frac{i b}{\sqrt{N}}-i \delta,\qquad n=l-m+N\Z.
\ee
These poles satisfy $\im(s_{\rm pole})\geq 2\re (b)/\sqrt{N}+\delta$

\end{itemize}

We deform the integration contour of $I_1$ to $s\to s+2\o'=s+\frac{ib}{\sqrt{N}}$ and a circle around $s_{\rm pole}=0$ only when $c=0$. The integration along $ s+2\o'$ cancels with $I_2$ as $\delta\to 0$. Therefore,  the nonvanishing contribution to $I$ is the residue at $s=0$ only coming from $c=0$ in $\sum_{c\in\Z/N\Z}$. The residue of ${\rm D}_{b}\left(\frac{c_{b}}{\sqrt{N}}+s,0\right)$ is 
\be
-\frac{\sqrt{N}}{2\pi b^{-1}}\frac{\left({\bfq}^{2},{\bfq}^{2}\right)_{\infty}}{\left(\tilde{{\bfq}}^{-2},\tilde{{\bfq}}^{-2}\right)_{\infty}}
\ee
and ${\rm D}_{\mathrm{b}}\left(-\frac{c_{b}}{\sqrt{N}}+s,0\right)\sim - \frac{2\pi b}{\sqrt{N}}  s (\bfq^2,\bfq^2)_\infty/(\tilde{\bfq}^{-2},\tilde{\bfq}^{-2})_{\infty}$ as $s \to 0$. The residue of $I_1$ at $s_{\rm pole}=0$ vanishes as $\delta\to0$ unless $m=m'$ and $x\to y$. As a result, for $\delta,\epsilon\to 0$ with $\delta>2\epsilon$, we obtain
\be
I=\delta_{m,m'}\frac{\delta }{8 \pi  \sqrt{N} (x-y-2 i \epsilon ) (i (\delta -2 \epsilon )+x-y)}\to\frac{1}{4}\delta(\mu-\mu')\delta_{m,m'}.
\ee

\end{proof}

For convenience, we denote by
\be
\varrho(\mu_r,m_r)=\frac{1}{4}\rho(\mu_r,m_r)=\frac{N^2}{4}\lt[\sin\lt(\frac{2\pi}{N}(ib\mu_r+m_r)\rt)\sin\lt(\frac{2\pi}{N}(-ib^{-1}\mu_r+m_r)\rt)\rt]^{-1}.
\ee
The above result indicates that for any $f\in\ch$, we can find the spectral representation $f(\mu_r,m_r)$ with $\mu_r\in\R_+$, $m_r\in\Z/N\Z$,
\be
\mid f\rangle &=&\sum_{m_r\in\Z/N\Z}\int_0^\infty\rmd\mu_r\,\varrho(\mu_r,m_r)^{-1}f(\mu_r,m_r)\mid \psi_r \rangle,\nonumber\\
\text{where}&&\quad f(\mu_r,m_r)=\langle\psi_r\mid f\rangle.
\ee
and 
\be
\langle f\mid f'\rangle=\sum_{m_r\in\Z/N\Z}\int_0^\infty\rmd\mu_r\,\varrho(\mu_r,m_r)^{-1}f(\mu_r,m_r)^*f'(\mu_r,m_r).
\ee
$\bmL,\tilde{\bmL}$ are represented as the multiplication operators
\be
\bmL f(\mu_r,m_r)=\ell(r) f(\mu_r,m_r),\qquad \tilde{\bmL} f(\mu_r,m_r)=\ell(r)^* f(\mu_r,m_r).\label{sprepplus}
\ee
where $\ell(r)=r+r^{-1}$.


\begin{theorem}\label{spectral}
	The spectral decomposition of $(\bmL,\tilde{\bmL})$ gives the following direct integral representation
	\be
	\ch\simeq \bigoplus_{m_r\in\Z/N\Z}\int_{\R_{\geq0}}^\oplus\rmd\mu_r\, \varrho(\mu_r,m_r)^{-1}\, \ch_{\mu_r,m_r}.\label{DID}
	\ee
	where each $\ch_{\mu_r,m_r}$ is 1-dimensional.

	
\end{theorem}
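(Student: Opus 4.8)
The plan is to turn the two preceding lemmas into an explicit unitary realizing \eqref{DID}. I would define
\[
\Phi:\ \ch\longrightarrow \sum_{m_r\in\Z/N\Z}\int_0^\infty\rmd\mu_r\,\varrho(\mu_r,m_r)^{-1}\,\C,\qquad (\Phi f)(\mu_r,m_r):=\lag\psi_r\mid f\rag,
\]
initially on a dense domain of regular functions (e.g.\ $f\in\Fd$, or finite superpositions of smeared eigenstates), and then extend by continuity. The target space is by construction $\sum_{m_r}\int_0^\infty\rmd\mu_r\,\varrho(\mu_r,m_r)^{-1}\,\ch_{\mu_r,m_r}$ with each $\ch_{\mu_r,m_r}\cong\C$, so the theorem reduces to showing that $\Phi$ is unitary and intertwines $\bmL,\tilde{\bmL}$ with the corresponding multiplication operators.

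First I would establish that $\Phi$ is isometric: inserting the resolution of identity of the second lemma into $\lag f\mid f'\rag_\ch=\sum_m\int\rmd\mu\,f(\mu,m)^*f'(\mu,m)$ and interchanging the (absolutely convergent) sums and integrals gives $\lag f\mid f'\rag_\ch=\sum_{m_r}\int_0^\infty\rmd\mu_r\,\varrho(\mu_r,m_r)^{-1}(\Phi f)(\mu_r,m_r)^*(\Phi f')(\mu_r,m_r)$, which is precisely the inner product on the target. The same identity yields the inversion formula $f(\mu,m)=\sum_{m_r}\int_0^\infty\rmd\mu_r\,\varrho(\mu_r,m_r)^{-1}(\Phi f)(\mu_r,m_r)\,\psi_r(\mu,m)$, and feeding smeared eigenstates through $\Phi$ shows, via the orthogonality lemma, that $\Phi$ recovers any sufficiently regular spectral profile (up to the reflection $(\mu_r,m_r)\mapsto(-\mu_r,-m_r)$, which is why the support is $\mu_r\ge 0$); hence $\Phi$ has dense range and, being isometric, is unitary.

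Next I would transport the operators. Since $\bmL$ and $\tilde{\bmL}$ commute they are normal, and $\bmL^\dagger=\tilde{\bmL}$; using $\tilde{\bmL}\psi_r=\ell(r)^*\psi_r$ and $\lag a\phi\mid f\rag=\bar a\lag\phi\mid f\rag$, for $f$ in the common core one obtains $\lag\psi_r\mid\bmL f\rag=\lag\bmL^\dagger\psi_r\mid f\rag=\ell(r)\lag\psi_r\mid f\rag$, so $\Phi\bmL\Phi^{-1}$ is multiplication by $\ell(r)=r+r^{-1}$ and likewise $\Phi\tilde{\bmL}\Phi^{-1}$ by $\ell(r)^*$, reproducing \eqref{sprepplus}. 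Thus the joint projection-valued measure of $(\bmL,\tilde{\bmL})$ is the pushforward of $\rmd\mu_r\,\varrho(\mu_r,m_r)^{-1}$ along $(\mu_r,m_r)\mapsto(\ell(r),\ell(r)^*)$; because $|r|=\exp[\tfrac{2\pi}{N}\mu_r\re(b)]>1$ for $\mu_r>0$, this map is injective on $\R_+\times\Z/N\Z$ off the null locus $\mu_r=0$, so $(\mu_r,m_r)$ is a genuine coordinate on the joint spectrum and each fibre is the one-dimensional span of the single generalized eigenvector $\psi_r$. This establishes \eqref{DID} and, in particular, $\dim\ch_{\mu_r,m_r}=1$, the property invoked in Lemma~\ref{CGirrep}.

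The main obstacle is the rigorous treatment of the non-normalizable eigenstates $\psi_r$: one should fix a rigged Hilbert space $\Fd\subset\ch\subset\Fd'$ stable under $\bmL,\tilde{\bmL}$ and their adjoints, check that $\psi_r\in\Fd'$ with continuous evaluation pairings, and justify the Fubini-type interchanges --- all of which rest on the decay estimates already established in the proofs of the two lemmas. Granting those lemmas, the remainder is essentially bookkeeping; the one conceptually substantive point --- that $(\bmL,\tilde{\bmL})$ is already a \emph{complete} commuting pair, so no auxiliary operator is needed to split multiplicity --- is exactly the content of the resolution of identity, i.e.\ the completeness of the orthogonal family $\{\psi_r\}$.
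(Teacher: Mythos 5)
Your proof is correct and follows essentially the same route the paper takes implicitly: the paper states Theorem~\ref{spectral} as an immediate consequence of the orthogonality and completeness lemmas (it builds the spectral map $f\mapsto f(\mu_r,m_r)=\langle\psi_r\mid f\rangle$, the Plancherel-type inner product formula, and the multiplication-operator form of $\bmL,\tilde{\bmL}$ in the surrounding text, without a separate proof block). You formalize exactly this into the unitary $\Phi$ and then transport the operators. One thing you do that the paper leaves unsaid, and which is genuinely needed to conclude $\dim\ch_{\mu_r,m_r}=1$: you observe that $(\mu_r,m_r)\mapsto(\ell(r),\ell(r)^*)$ is injective on $\R_+\times\Z/N\Z$ (using $|r|>1$ for $\mu_r>0$), so that passing from the $(\mu_r,m_r)$-representation to the \emph{joint spectral} decomposition of $(\bmL,\tilde{\bmL})$ does not merge distinct fibers. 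Without this injectivity, completeness alone would give a direct integral over $(\mu_r,m_r)$ but not yet a multiplicity-one spectral decomposition. Your caveat about the reflection $(\mu_r,m_r)\mapsto(-\mu_r,-m_r)$, explaining the restriction to $\mu_r\ge 0$, is also consistent with the symmetry $\psi_r$ exhibits in the paper. The remaining analytic care you flag (rigged Hilbert space, Fubini interchanges) is indeed delegated to the estimates in the two lemmas, exactly as in the paper.
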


That $\ch_{\mu_r,m_r}$ is 1-dimensional implies $W(\l,\l^*)$ in Lemma \ref{CGirrep} is 1-dimensional. Moreover, the projective-valued measure $\rmd P_{\l,\tilde{\l}}$ in \eqref{CGequation} with $\l=\exp[\frac{2\pi i}{N}(-ib\mu_r-m_r)]$ and $\tilde{\l}=\l^*$ can be expressed as\footnote{$\psi_r$ with $\epsilon> 0$ is square-integrable and thus has well-defined Fourier transformation.}
\be   
\rmd P_{\l,\tilde{\l}}=\sum_{m_r'\in\Z/N\Z}\rmd \mu_r\rmd m_r\,\Theta(\mu_r)\delta(m_r-m'_r)\varrho(\mu_r,m_r)^{-1}\mathcal{S}_{\lambda_{1}}^{-1}\mathcal{D}_{\lambda_{2}}{\cal F}^{-1}|\psi_r\rangle\langle\psi_r|{\cal F}\mathcal{D}_{\lambda_{2}}^{-1}\mathcal{S}_{\lambda_{1}}.
\ee

Given that $|\psi_r\rangle$ diagonalizing $(\bmL,\tilde{\bmL})$, $\Fu_s^{-1}|\psi_r\rangle$ and $\Fu_t^{-1}|\psi_r\rangle$ diagonalize $(\bmL_s,\tilde{\bmL}_s)$ and $(\bmL_t,\tilde{\bmL}_t)$ respectively. Therefore, for any $f\in\ch$, 
\be    
f_s(\mu_r,m_r)=\langle \psi_r|\Fu_s|f\rangle, \qquad f_t(\mu_r,m_r)=\langle \psi_r|\Fu_t|f\rangle,
\ee
satisfy 
\be
\langle f\mid f'\rangle&=&\sum_{m_r\in\Z/N\Z}\int_0^\infty\rmd\mu_r\,\varrho(\mu_r,m_r)^{-1}f_{s}(\mu_r,m_r)^*f'_s(\mu_r,m_r) \\
&=&\sum_{m_r\in\Z/N\Z}\int_0^\infty\rmd\mu_r\,\varrho(\mu_r,m_r)^{-1}f_{t}(\mu_r,m_r)^*f'_t(\mu_r,m_r),
\ee
and
\be
\bmL_s f_s(\mu_r,m_r)&=&\ell(r) f_s(\mu_r,m_r),\qquad \tilde{\bmL}_s f_s(\mu_r,m_r)=\ell(r)^* f_s(\mu_r,m_r),\\
\bmL_t f_t(\mu_r,m_r)&=&\ell(r) f_t(\mu_r,m_r),\qquad \tilde{\bmL}_t f_t(\mu_r,m_r)=\ell(r)^* f_t(\mu_r,m_r).
\ee
Representing states in $\ch$ by functions $f_s(\mu_r,m_r)$ or $f_t(\mu_r,m_r)$ define the FN representation with respect to the S-cycle or T-cycle respectively.

\section{Changing triangulation}\label{Changing triangulation}

The above discussion is based on the ideal triangulation FIG.\ref{degtriangulation}. Changing ideal triangulation results in unitary equivalent quantization of FG coordinates. Let us denote the triangulation in FIG.\ref{degtriangulation}(a) by $\ct$ and consider the change of triangulation from $\ct$ to the tetrahedral triangulation $\ct'$ shown in FIG.\ref{changetriangulation}. The change is made by flipping the edge $3$ in the quadrilateral bounded by edges $4,6,5,2$. Other changes of triangulations have similar result. The edges in $\ct'$ are labelled by $1',\cdots,6'$, and their associated FG coordinates are denoted by $z_{E}'=\exp(Z_E')$, ($E=1,\cdots,6$). We may identify the triangulation and the associated data i.e. $\ct=(\{E\},\eps_{E_1,E_2})$ and $\ct'=(\{E'\},\eps'_{E'_1,E'_2})$. 

Classically, changing triangulation results in a change of FG coordinate. Taking $\ct\to\ct'$ as an example, $z_{E}'$ relates to $z_{E}$ by the following tranformation
\be
{z}_1'&=&{z}_1,\qquad {z}_3'={z}_3^{-1},\\
{z}_2'&=&{z}_2\lt(1-{z}_3\rt),\qquad {z}_4'{}^{-1}=(1-{z}_3^{-1}){z}_4^{-1},\\
{z}_6'&=&{z}_6\lt(1 - {z}_3\rt),\qquad {z}_5'{}^{-1}=(1 - {z}_3^{-1}){z}_5^{-1}.
\ee
where we denote $e'_a$ simply by $a$. The transformation preserves the Poisson bracket in a nontrivial manner, namely we have $\{z_a',z_{b}'\}={2} \eps_{a,b}'z_a'z_{b}',\ \{\tilde{z}_a',\tilde{z}_{b}'\}={2} \eps_{a,b}'\tilde{z}_a'\tilde{z}_{b}'$ from the Poisson bracket of $z_a,\tilde{z}_a$, whereas $\eps_{a,b}\neq \eps'_{a,b}$. $\eps'_{ab}$ is given by
\be
\eps'=\left(
\begin{array}{cccccc}
 0 & -1 & 0 & 1 & 1 & -1 \\
 1 & 0 & 1 & -1 & -1 & 0 \\
 0 & -1 & 0 & 1 & 1 & -1 \\
 -1 & 1 & -1 & 0 & 0 & 1 \\
 -1 & 1 & -1 & 0 & 0 & 1 \\
 1 & 0 & 1 & -1 & -1 & 0 \\
\end{array}
\right).
\ee

\begin{figure}[h]
\centering
\includegraphics[width=1\textwidth]{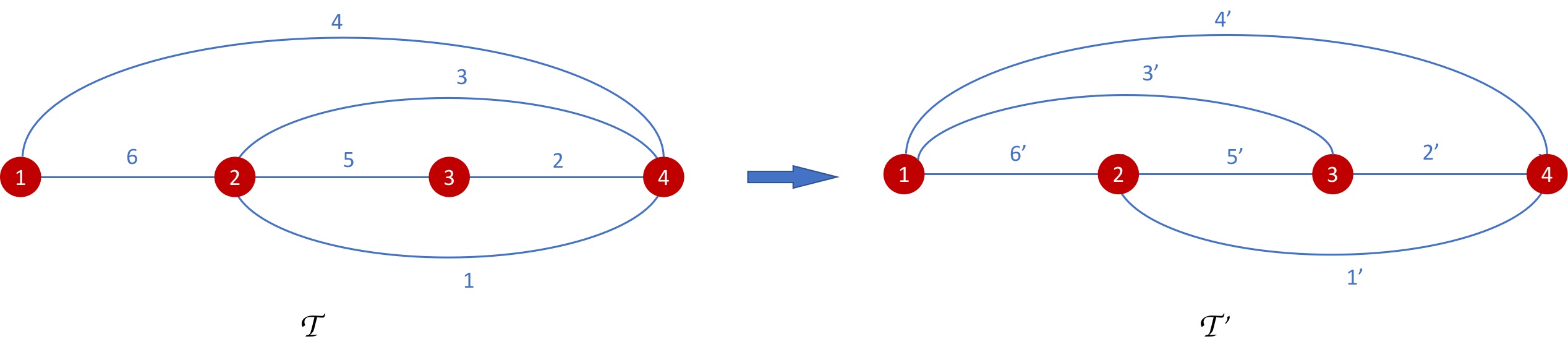}
\caption{Changing from the triangulation $\ct$ in FIG.\ref{degtriangulation}(a) to the tetrahedral triangulation $\ct'$}
\label{changetriangulation}
\end{figure}

The quantization on $\ct$ has been studied in the above. The quantization of the FG coordinates on $\ct'$ gives the quantum algebra
\be
\bm{z}'_{a}\bm{z}'_{b}={\bfq}^{2\eps'_{ab}}\bm{z}'_{b}\bm{z}'_a,\qquad \tilde{\bm{z}}'_a\tilde{\bm{z}}'_{b}=\tilde{{\bfq}}^{2\eps'_{ab}}\tilde{\bm{z}}'_{b}\tilde{\bm{z}}'_a,\qquad \bm{z}'_a\tilde{\bm{z}}'_{b}=\tilde{\bm{z}}'_{b}\bm{z}'_a\label{opalgFGprime}
\ee
The representation of this algebra relates to the representation of \eqref{opalgFG} (of $\ct$) by a unitary transformation known as the quantum cluster transformation, generalizing the results in \cite{2008InMat.175..223F,Teschner:2005bz}. Firstly, we define the map $i_3$ acting on the logarithmic coordinate
\be
i_3:\ \bm{Z}_a\mapsto\begin{cases}
-\bm{Z}_3+2\pi i, & a=3\\
\bm{Z}_a+\mathrm{Max}(\{\eps_{a,3},0\})(\bm{Z}_3-i\pi),& a\neq 3
\end{cases},
\ee
or explicitly,
\be
i_3\left(\bm{Z}_1\right)&=& -\bm{U} ,\qquad
i_3\left(\bm{Z}_2\right)\to -2 L_3+\bm{Y}+i\pi,\nonumber\\ 
i_3\left(\bm{Z}_3\right)&=& -L_1+L_2-L_3-L_4-\bm{U},\nonumber\\
i_3\left(\bm{Z}_4\right)&=& 2 L_3+2 L_4+\bm{U}- \bm{Y}+2i\pi,\qquad
i_3\left(\bm{Z}_5\right)= L_1-L_2+L_3+L_4+\bm{U}- \bm{Y}+2i\pi,\nonumber\\
i_3\left(\bm{Z}_6\right)&=& -L_1-L_2-L_3-L_4+ \bm{Y}+i\pi
\ee
and the same for $\tilde{\bm{Z}}_a$. They satisfy
\be
&&i_3\left(Z_3\right)+i_3\left(Z_4\right)+i_3\left(Z_6\right)-3\pi i=-2L_1,\qquad i_3\left(Z_1\right)+i_3\left(Z_5\right)+i_3\left(Z_6\right)-3\pi i=-2L_2,\\
&&i_3\left(Z_2\right)+i_3\left(Z_3\right)+i_3\left(Z_5\right)-3\pi i=-2L_3,\qquad i_3\left(Z_1\right)+i_3\left(Z_2\right)+i_3\left(Z_4\right)-3\pi i=2L_4
\ee
$i_3$ induces the monomial transformation of $\bm{z}_a$: $i_3(\bm{z}_a)=\exp[i_3(\bm{Z}_a)]$: 
\be
i_3\left(\bm{z}_1\right)&=&\bm{u}^{-1},\qquad 
i_3\left(\bm{z}_2\right)=-{\lambda _3^{-2}}\bm {y},\qquad 
i_3\left(\bm{z}_3\right)=\lambda _1^{-1} \lambda _2 \lambda _3^{-1} \lambda _4^{-1} \bm{u}^{-1},\nonumber\\
i_3\left(\bm{z}_4\right)&=&\bfq {\lambda _3^2 \lambda _4^2}\bm{u}\bm{y}^{-1},\qquad
i_3\left(\bm{z}_5\right)=\bfq \lambda _2^{-1}\lambda _1 \lambda _3 \lambda _4 \bm{u}\bm{y}^{-1},\qquad
i_3\left(\bm{z}_6\right)=-{(\lambda _1 \lambda _2 \lambda _3 \lambda _4)^{-1}}\bm{y}.\label{i3zi}
\ee
The image of $i_3$ gives operators on $\ch$ with the following commutation relation
\be
\lt[i_3(\bm{Z}_a),i_3(\bm{Z}_b)\rt]=2\hbar\eps'_{ab},\qquad \lt[i_3(\tilde{\bm{Z}}_a),i_3(\tilde{\bm{Z}}_b)\rt]=2\tilde{\hbar}\eps'_{ab}.
\ee
Therefore, the set of $i_3(\bm{z}_a)$, $i_3(\tilde{\bm{z}}_a)$ satisfies the same operator algebra as \eqref{opalgFGprime}. Then we define the unitary transformation 
\be
\varphi(-\bm{z}_3,-\tilde{\bm{z}}_3):\ \ch\to\ch'
\ee
where $\ch'$ carries the representation of \eqref{opalgFGprime}. Both $\ch$ and $\ch'$ as Hilbert spaces are isomorphic to $L^2(\R)\otimes \C^{N}$. $\varphi$ is the quantum dilogarithm and $\bm{z}_3=\frac{\lambda_{1}\lambda_{3}\lambda_{4}}{\lambda_{2}}\bm{u}$. Explicitly, for any state $f\in \ch$, the Fourier transform\footnote{The Fourier transformation is given by $\widetilde{f}(\nu, n)=\frac{1}{N} \sum_{m \in \mathbb{Z} / k \mathbb{Z}} \int d \mu\, e^{\frac{2 \pi i}{N}(\mu \nu-m n)} f(\mu, m)$.} of $f(\mu,m)$ gives $\widetilde{f}(\nu,n)$. $\bm{u}$ acts on $\widetilde{f}(\nu,n)$ as the multiplication of $u=\exp[\frac{2\pi i}{N}(-ib\nu-n)]$. Then $\varphi(-\bm{z}_3,-\tilde{\bm{z}}_3){f}(\mu,m)$ is given by the inverse Fourier transform of 
\be
\varphi\lt(-\frac{\lambda_{1}\lambda_{3}\lambda_{4}}{\lambda_{2}}u,-\frac{\tilde{\lambda}_{1}\tilde{\lambda}_{3}\tilde{\lambda}_{4}}{\tilde{\lambda}_{2}}\tilde{u}\rt)\widetilde{f}(\nu,n)\ .
\ee
The representation of $\bm{z}'_a,\tilde{\bm{z}}_a'$ is given by
\be
\bm{z}'_a=\varphi(-\bm{z}_3,-\tilde{\bm{z}}_3)i_3(\bm{z}_a)\varphi(-\bm{z}_3,-\tilde{\bm{z}}_3)^{-1},\qquad \tilde{\bm{z}}'_a=\varphi(-\bm{z}_3,-\tilde{\bm{z}}_3)i_3(\tilde{\bm{z}}_a)\varphi(-\bm{z}_3,-\tilde{\bm{z}}_3)^{-1}.\label{phii3phi}
\ee
We may compute \eqref{phii3phi} explicitly
\be
\bm{z}_1'&=&\bm{z}_1,\qquad \bm{z}_3'=\bm{z}_3^{-1},\nonumber\\
\bm{z}_2'&=&\bm{z}_2\lt(1-\bfq \bm{z}_3\rt),\qquad \bm{z}_4'{}^{-1}=(1-\bfq \bm{z}_3^{-1})\bm{z}_4^{-1},\nonumber\\
\bm{z}_6'&=&\bm{z}_6\lt(1-\bfq \bm{z}_3\rt),\qquad \bm{z}_5'{}^{-1}=(1-\bfq \bm{z}_3^{-1})\bm{z}_5^{-1}, \label{zaprime}
\ee
and similar for $\tilde{\bm{z}}'_a$. By the classical limit $\bfq=\exp[\frac{\pi i}{N}(b^2+1)]\to 1$ as $N\to\infty$, these transformations reduce to the classical transformation of FG coordinate under the flip. 

If we denote by $V_s$ the unitary transformation from $\ch$ to the direct integral representation for the S-cycle trace, the composition $V_s\circ \varphi(\bm{z}_3,\tilde{\bm{z}}_3)^{-1}$ maps from $\ch'$ to the direct integral representation.

\section*{Acknowledgements}

The author acknowledges an anonymous referee for the helpful comments on the earlier version of the manuscript and acknowledges Chen-Hung Hsiao and Qiaoyin Pan for some useful discussions. This work receives supports from the National Science Foundation through grant PHY-2207763, the College of Science Research Fellowship at Florida Atlantic University, and the Blaumann Foundation. The author also receives support from the visiting professorship at FAU Erlangen-N\"urnberg at the early stage of this work.

\appendix

\section{Quantum dilogarithm}\label{Quantum dilogarithm}

The quantum dilogarithm function is defined by
\be
\varphi\left(y,\tilde{y}\right) & \equiv\varphi_{{\bfq},\tilde{\bfq}}\left(y,\tilde{y}\right)=\left[\prod_{j=0}^{\infty}\frac{1+\bfq^{2j+1}y}{1+\tilde{\bfq}^{-2j-1}\tilde{y}}\right]^{-1}
\equiv\phi(\mu,m)
\ee
where $y=\exp[\frac{2\pi i}{N}(-i b\mu-m)]$, $\tilde{y}=\exp[\frac{2\pi i}{N}(-i b^{-1}\mu+m)]$, $\bfq=\exp[\frac{\pi i}{N}(b^2+1)]$, and $\tilde{\bfq}=\exp[\frac{\pi i}{N}(b^{-2}+1)]$. The quantum dilogarithm function satisfies the following recursion relations:
\begin{gather}
\varphi\left(\bfq^{2}y,\tilde{y}\right)=\left(1+\bfq y\right)\varphi\left(y,\tilde{y}\right),\qquad\varphi\left(y,\tilde{\bfq}^{2}\tilde{y}\right)=\left(1+\tilde{\bfq}\tilde{y}\right)\varphi\left(y,\tilde{y}\right),\label{recursion1}\\
\varphi\left(\bfq y,\tilde{y}\right)=\left(1+y\right)\varphi\left(\bfq^{-1}y,\tilde{y}\right),\qquad\varphi\left(y,\tilde{\bfq}\tilde{y}\right)=\left(1+\tilde{y}\right)\varphi\left(y,\tilde{\bfq}^{-1}\tilde{y}\right),\label{recursion2}\\
\varphi\left(y,\tilde{y}\right)=\left(1+\bfq^{-1}y\right)\varphi\left(\bfq^{-2}y,\tilde{y}\right),\qquad\varphi\left(y,\tilde{y}\right)=\left(1+\tilde{\bfq}^{-1}\tilde{y}\right)\varphi\left(y,\tilde{\bfq}^{-2}\tilde{y}\right).\label{recursion3}
\end{gather}
In the proofs of Lemmas \ref{lemma31}, \ref{lemma32} and \ref{CGirrep} and in Appendix \ref{Operator domains}, we often suppress the tilded entry of $\varphi$ and write $\varphi(y,\tilde{y})\equiv \varphi(y)$. 



$\varphi$ relates to the quantum dilogarithm ${\rm D}_b(x,n)$ in \cite{Andersen2014,andersen2016level} by
\be
\varphi\left(-y,-\tilde{y}\right)={\rm D}_b\lt(\frac{\mu}{\sqrt{N}},-m\rt)^{-1}.
\ee
We also introduce \footnote{This notation is inspired by \cite{Derkachov:2013cqa}.}
\be
\gamma(x,n)=\rmD_b(-x,n).\label{gammaD}
\ee 
For $N=1$, $\g(x,n)=\g(x,0)\equiv\g(x)$ is Faddeev's quantum dilogarithm in e.g. \cite{Derkachov:2013cqa,Faddeev:1995nb}:
\be
\g(x)=\exp\lt[\frac{1}{4}\int_{\R+i0^+}\frac{\rmd w}{w}\frac{e^{2 i w x} }{\sinh \left(b^{-1}{w}\right) \sinh (b w)}\rt]
\ee

The quantum dilogarithm functions satisfy the unitarity
\be
\phi\left(\mu,m\right)^{*}  =\phi\left(\mu^{*},m\right)^{-1},\qquad \mathrm{D}_b(x,n)^*=\mathrm{D}_b(x^*,n)^{-1},\qquad \g(x,n)^*=\g(x^*,n)^{-1}.
\ee
We introduce some notations $\o,\o',\o''$ by
\be 
\omega = \frac{i}{2 b \sqrt{N}},\quad
\o'= \frac{i b}{2 \sqrt{N}},\quad
\o''= \frac{c_b}{\sqrt{N}}\ .
\ee 
The following summarizes some useful properties of $\gamma(x,n)$:

\begin{itemize}

\item The inverse relation:
\be
\g(x,n)\g(-x,-n)=\exp \left(-\frac{i \pi  n^2}{N}-i \pi  n-\frac{i \pi  N}{6}+i \pi  x^2-\frac{1}{3} i \pi  \omega ''{}^2\right).
\ee

\item The recursion relation:
\be
\gamma (x\pm 2 \o',n\mp 1)&=&\lt(1+ e^{\frac{2 i \pi  n}{N}\pm\frac{i \pi  (N-1)}{N}+4 i \pi  x \omega '\pm4 i \pi  \omega '{}^2}\rt)^{\pm1}\gamma (x,n),\\
\gamma (x\pm 2 \omega ,n\pm 1)&=&\lt(1+e^{-\frac{2 i \pi  n}{N}\pm\frac{i \pi  (N-1)}{N}+4 i \pi  x \omega \pm 4 i \pi  \omega ^2}\rt)^{\pm1}\gamma (x,n).
\ee

\item The integration identity:
\be
&&\frac{1}{\sqrt{N}}\sum_{d\in \Z/N\Z}\int_\R\rmd t\frac{\gamma (t+\alpha ,a+d)}{\gamma (t+\beta,p+d)}e^{-2 i \pi  s t}e^{-\frac{2 i \pi cd}{N}} \nonumber\\
&=&\zeta_0\,e^{\frac{2 i \pi  a c}{N}+2 i \pi  s \left(\alpha +\omega ''\right)}\frac{ \gamma \left(-\alpha +\beta +s-\omega '',-a-c+p\right)}{\gamma \left(s+\omega '',-c\right) \gamma \left(-\alpha +\beta -\omega '',p-a\right)},\label{intIdgamma1}
\ee
where $\zeta_0=e^{-\frac{1}{12} i \pi  \left(N-\frac{4 c_b^2}{N}\right)}$ and $\a,\b,s$ satisfy
\be
\mathrm{Im}(\alpha+\frac{c_b}{\sqrt{N}})>0,\qquad \mathrm{Im}(-\beta+\frac{c_b}{\sqrt{N}})>0,\qquad \mathrm{Im}(\alpha-\beta)<\mathrm{Im}(s)<0.
\ee
The identity \eqref{intIdgamma1} translates to Theorem 3.10 in \cite{andersen2016level} by $t\to-x$, $\a\to-u$, $\b\to -v$, $s\to w$, and \eqref{gammaD}. The inverse of the identity \eqref{intIdgamma1} is given by
\be
&&\frac{\gamma (\alpha +t,a+d)}{\gamma (\beta +t,d+p)}\nonumber\\
&=&\frac{\zeta _0}{\sqrt{N}} \sum_{c\in \Z/N\Z}\int\rmd s\frac{\gamma \left(-\alpha +\beta +s-\omega '',-a-c+p\right)e^{\frac{2 i \pi  (a+d) c}{N}+2 i \pi  s \left(\alpha +\omega ''\right)+2 i \pi  s t} }{ \gamma \left(s+\omega '',-c\right) \gamma \left(-\alpha +\beta -\omega '',p-a\right)}.\label{intIdgamma2}
\ee

\item Asymptotic behavior: $\g(-\frac{c_{b}}{\sqrt{N}}-x,n)\sim \exp\left[i\pi\left(x+\frac{c_{b}}{\sqrt{N}}\right)^{2}+O(1)\right]$ as $\re(x)\to\infty$ and $\g(-\frac{c_{b}}{\sqrt{N}}-x,n)\sim O(1)$ as $\re(x)\to-\infty$ \cite{levelk}. 

\end{itemize}

\section{Operator domains}\label{Operator domains}

We denote by $\Fw\subset L^2(\R)$ the space of functions
\be
e^{-\alpha \mu^2+\b\mu}\, \mathrm{Pol}(\mu),\quad\text{where},\quad \re(\alpha)>0,\ \b\in\C,\ \text{and $\mathrm{Pol}(\mu)$ is a polynomial in $\mu$}.
\ee
$\Fw$ is dense because all Hermite functions are inside $\Fw$. We define $\sw\simeq \Fw\otimes\C^N\subset\ch$ and $\sw_2=\sw\otimes\sw$. Obviously, $\sw_2$ is a subset of the domain $\Fd_2$ of all Laurent polynomials of $\bm{u}_1,\bm{u}_2,\bmy_1,\bmy_2$ and their tilded partners. For the notation, the actions of $\bm{u}_1,\bmy_1$ and $ \bm{u}_2,\bmy_2$ are given by \eqref{repuandy} and \eqref{repuandy1} with the replacement $(\mu,m)\to(\mu_1,m_1)$ and $(\mu,m)\to(\mu_2,m_2)$. In the following, $y_a=\exp[\frac{2\pi i}{N}(-ib\mu_a-m_a)]$, $a=1,2$, and $y=\exp[\frac{2\pi i}{N}(-ib\mu-m)]$.


First, we would like to understand the image of the unitary operator $\cs_2^{-1}t^{-1}_{21}$ acting on $\sw_2$: for any $w\in\sw_2$, we denote by $f=\varphi\left(-y_{1}y_{2}^{-1}\right)^{-1}\mathcal{V}_{2}^{-1}w\left(\mu_{2},m_{2},\mu_{1},m_{1}\right)$ and
\be
&&t_{21}^{-1 }w=\mathcal{V}_{2}f\\
&=&\sum_{n,m}\int_\R d\nu e^{\frac{2\pi i}{N}\left(\mu_{2}\nu-m_{2}n\right)}(-1)^{n^{2}}e^{\frac{\pi i}{N}\left(\nu^{2}-n^{2}\right)}\int_\R d\mu e^{-\frac{2\pi i}{N}(\mu\nu-mn)}\varphi\left(-y_{1}y^{-1}\right)^{-1}\mathcal{V}_{2}^{-1}w\left(\mu,m,\mu_{1},m_{1}\right)\nonumber
\ee
First, $\mathcal{V}_{2}^{-1}w\in \sw_2$ since both Fourier transformation and multiplying $e^{-\frac{\pi i}{N}\left(\nu^{2}-n^{2}\right)}$ leave $\sw_2$ invariant. $f$ does not belong to $\Fd_2$ since $\varphi\left(-y_{1}y_{2}^{-1}\right)^{-1}$ introduces poles. $\varphi\left(-y,-\widetilde{y}\right)$ with $y=e^{\frac{2\pi i}{N}(-ib \mu -m)}$ has poles in lower half plane and zeros in upper half plane.
\be
&&\text{Poles: \ensuremath{\qquad}}\mu=-\frac{i}{2}\left(b+b^{-1}\right)-ib\alpha_{p}-ib^{-1}\beta_{p},\qquad\alpha_{p}-\beta_{p}-m=N\mathbb{Z},\qquad\alpha_{p},\beta_{p}\in\mathbb{Z},\qquad\alpha_{p},\beta_{p}\geq0,\nonumber\\
&&\text{Zeros: \ensuremath{\qquad}}\mu=\frac{i}{2}\left(b+b^{-1}\right)+ib\alpha_{0}+ib^{-1}\beta_{0},\qquad\alpha_{0}-\beta_{0}+m=N\mathbb{Z},\qquad\alpha_{0},\beta_{0}\in\mathbb{Z},\qquad\alpha_{0},\beta_{0}\geq0.\nonumber
\ee
$|\varphi\left(-y,-\widetilde{y}\right)|=1$ for real $\mu$. For complex $\mu$, $\varphi\left(-y,-\widetilde{y}\right)$ has the following asymptotic behavior:
\be
\varphi\left(-y,-\widetilde{y}\right)=
\begin{cases}
	1+O\left(e^{-\frac{2\pi\mathrm{Re}\left(b\right)}{N}|\mathrm{Re}(\mu)|}\right) & \mathrm{Re}(\mu)\to-\infty,\\
	\zeta_{\rm inv}e^{\pi in(n+N)/N}e^{-\frac{\pi i}{N}\mu^{2}}\left[1+O\left(e^{-\frac{2\pi\mathrm{Re}\left(b\right)}{N}\mathrm{Re}(\mu)}\right)\right] & \mathrm{Re}(\mu)\to-\infty.
	\end{cases}
\ee
where ${\zeta}_{\rm inv}=e^{\frac{i\pi}{6}\left(N+2c_{b}^{2}/N\right)}$. The function $\varphi(-y,-\widetilde{y})$ at most grows exponentially as $\re(\mu)\to\pm\infty$. $f\left(\mu,m,\mu_{1},m_{1}\right)=\varphi\left(-y_{1}y^{-1}\right)^{-1}\mathcal{V}_{2}^{-1}w\left(\mu,m,\mu_{1},m_{1}\right)$ has following properties on the analyticity and asymptotic behavior:

\begin{itemize}

\item $f$ is analytic in the regime $\mathrm{Im}\left(\mu_{1}-\mu\right)<\mathrm{Re}(b)$. In particular, along the integration contour $\mu\in\R$, $f$ is analytic for $\mathrm{Im}\left(\mu_{1}\right)<\mathrm{Re}(b)$.

\item In the regime where $f$ is analytic, $f$ is a Schwartz function of $\re(\mu),\re(\mu_1)$ and decays as Gaussian as $\re(\mu),\re(\mu_1)\to\pm\infty$.
\end{itemize}
Due to these properties, the Fourier transform of $f$: 
\be
\mathcal{F}[f](\nu,n;\mu_1,m_1)=\frac{1}{N}\sum_{m\in\Z/N\Z}\int_{\R}d\mu \, e^{-\frac{2\pi i}{N}(\mu\nu-mn)}f(\mu,m;\mu_1,m_1)
\ee
satisfies the following analyticity and asymptotic behaviors (in the following computations, we often neglect the unimportant overall factor $1/N^2$):
\begin{itemize}
\item Analyticity in $\nu$: $\mathcal{F}[f]$ is an entire function in $\nu$, and $\mathcal{F}[f](\nu,n;\mu_1,m_1)$ is a Schwartz function in $\re(\nu)$.

\item Asymptotics in $\nu$: $\mathcal{F}[f](\nu,n;\mu_1,m_1)\sim e^{-\frac{2\pi}{N}\left(\mathrm{Re}(b)-\mathrm{Im}\left(\mu_{1}\right)\right)\re(\nu)}$ as $\re(\nu)\to \infty$ and decays faster than $e^{a\re(\nu)}$ for any $a>0$ as $\re(\nu)\to-\infty$. The exponential decay of $\mathcal{F}[f]$ requires $\mathrm{Im}\left(\mu_{1}\right)<\mathrm{Re}(b)$

\item Asymptotics in $\mu_{1}$: for any $p\in\mathbb{Z}$, 
\be
y_{1}^{p}\mathcal{F}[f]\left(\nu,n;\mu_{1},m_{1}\right)=\sum_{m\in\mathbb{Z}/N\mathbb{Z}}\int_{\mathbb{R}}d\mu e^{-\frac{2\pi i}{N}(\mu\nu-mn)}y_{1}^{p}\varphi\left(-y_{1}y^{-1}\right)^{-1}\mathcal{V}_{2}^{-1}w\left(\mu,m,\mu_{1},m_{1}\right).
\ee 
The integrand decays as Gaussian at infinities and the Fourier transformation is unitary. So $y_{1}^{p}F\left(\nu,n;\mu_{1},m_{1}\right)\in{\cal H}\otimes{\cal H}$.

\end{itemize}
These properties imply that
\be
t_{21}^{-1}w\left(\mu_{2},m_{2},\mu_{1},m_{1}\right)=\sum_{n\in\mathbb{Z}/N\mathbb{Z}}\int d\nu e^{\frac{2\pi i}{N}\left(\mu_{2}\nu-m_{2}n\right)}(-1)^{n^{2}}e^{\frac{\pi i}{N}\left(\nu^{2}-n^{2}\right)}\cf[f]\left(\nu,n;\mu_{1},m_{1}\right)
\ee
satisfies the following analyticity and asymptotic behaviors:
\begin{itemize}

\item Analyticity: $t^{-1}_{21}w\left(\mu_{2},m_{2},\mu_{1},m_{1}\right)$ is analytic for $\mathrm{Im}\left(\mu_{2}\right)>-\mathrm{Re}(b)+\mathrm{Im}\left(\mu_{1}\right)$. 

\item Asymptotics in $\mu_{1}$: $y_{1}^{p}t^{-1}_{21}w\left(\mu_{2},m_{2},\mu_{1},m_{1}\right)\in{\cal H}\otimes{\cal H}$ for any $p\in\Z$.

\item Analyticity in $\mu_1$:  For $\mathrm{Im}(\mu_{1})<\mathrm{Re}(b)$, we can make small analytic rotation of the $\nu$-contour: 
\be
&&\sum_{n,m}\int d\nu e^{\frac{2\pi i}{N}\left(\mu_{2}\nu-m_{2}n\right)}e^{\frac{\pi i}{N}\left(\nu^{2}-n^{2}\right)}\int d\mu e^{-\frac{2\pi i}{N}(\mu\nu-mn)}f\left(\mu,m,\mu_{1},m_{1}\right)\nonumber\\
&=&\sum_{n,m}\int d\mu\sqrt{iN}e^{\frac{i\pi\left(-(\mu-\mu_{2})^{2}+2mn-n(2m_{2}+n)\right)}{N}}f\left(\mu,m,\mu_{1},m_{1}\right),
\ee
and there exists a Schwartz function $s\left(\mu,m\right)$ such that  $|\partial_{\mu_{1}}f\left(\mu,m,\mu_{1},m_{1}\right)|\leq s\left(\mu,m\right)$. We can interchange the derivative of $\mu_1$ with the integral, so the integral is analytic in $\mu_1$.

\end{itemize}
The analyticity of $t_{21}w\left(\mu_{2},m_{2},\mu_{1},m_{1}\right)$ implies that 
\be
\cs^{-1}_2 t^{-1}_{21}w\left(\mu_{2},m_{2},\mu_{1},m_{1}\right)=t^{-1}_{21}w\left(\mu_{2},m_{2},\mu_{1}+\mu_2,m_{1}+m_2\right)
\ee 
is analytic for $\mathrm{Im}\left(\mu_{1}\right)<\mathrm{Re}(b)$ and is an entire function in $\mu_2$.

Let us revisit the derivation in Lemma \ref{lemma31}: First, let us consider 
\be
t_{21}{\cal S}_2\bm{y}_{1}^{-1}{\cal S}_2^{-1}t_{21}^{-1}w, \qquad \text{where}\qquad t_{21}=\cv_2\varphi\left(-y_{1}y_{2}^{-1}\right)\mathcal{V}_{2}^{-1}
\ee
Recall that for any $\psi\in\ch$ satisfying $\psi\left(\nu,n\right)=\frac{1}{N^2}\sum_{m\in\Z/N\Z}\int d\mu e^{-\frac{2\pi i}{N}(\mu\nu-mn)}\psi\left(\mu,m\right)$ analytic in the strip $\im(\nu)\in[0,\re(b)]$ and $\psi \left(\nu+ib,n-1\right)\in{\cal H}$:
\be
\psi\left(\nu+ib,n-1\right)&=&\frac{1}{N}\sum_{m\in\Z/N\Z}\int d\mu e^{-\frac{2\pi i}{N}(\mu\nu-mn)}y\psi\left(\mu,m\right),\nonumber
\ee
and the inverse Fourier transformation of this equation gives
\be
y\psi\left(\mu,m\right)&=&\frac{1}{N}\sum_{n\in\Z/N\Z}\int d\nu e^{\frac{2\pi i}{N}(\mu\nu-mn)}\psi\left(\nu+ib,n-1\right) ,
\ee
while $y\psi\left(\mu,m\right)=\frac{1}{N}\sum_n\int d\nu e^{\frac{2\pi i}{N}(\mu(\nu-ib)-m(n+1))}\psi\left(\nu,n\right)$, so the shift of integration contour does not change the result. 

$\psi\left(\nu,n;\mu_{1},m_{1}\right)\equiv(-1)^{n^{2}}e^{\frac{\pi i}{N}\left(\nu^{2}-n^{2}\right)}\mathcal{F}\left[f\right]\left(\nu,n;\mu_{1},m_{1}\right)$ is entire in $\nu$ and 
\be 
&&\psi\left(\nu+ib,n-1;\mu_{1},m_{1}\right)=(-1)^{\left(n-1\right)^{2}}e^{\frac{\pi i}{N}\left(\left(\nu+ib\right)^{2}-\left(n-1\right)^{2}\right)}\mathcal{F}\left[f\right]\left(\nu+ib,n-1;\mu_{1},m_{1}\right)\nonumber\\
&=&-(-1)^{n^{2}}e^{\frac{\pi i}{N}\left(\nu^{2}-n^{2}\right)}e^{\frac{\pi i}{N}\left(-b^{2}-1\right)}e^{-\frac{2\pi i}{N}\left(-ib\nu-n\right)}\mathcal{F}\left[f\right]\left(\nu+ib,n-1;\mu_{1},m_{1}\right)\in{\cal H}\otimes{\cal H}
\ee
where 
\be
&&e^{-\frac{2\pi i}{N}\left(-ib\nu-n\right)}\mathcal{F}\left[f\right]\left(\nu+ib,n-1;\mu_{1},m_{1}\right)\nonumber\\
&=&\sum_{m\in\Z/N\Z}\int d\mu e^{-\frac{2\pi i}{N}(\mu\nu-mn)}\bm{q}^{2}e^{\frac{2\pi i}{N}(-ib\mu-m)}\varphi\left(-\bm{q}^{-2}y_{1}y^{-1}\right)^{-1}\mathcal{V}_{2}^{-1}w\left(\mu+ib,m-1;\mu_{1},m_{1}\right)\nonumber\\
&=&\sum_{m\in\Z/N\Z}\int d\mu e^{-\frac{2\pi i}{N}(\mu\nu-mn)}\bm{q}^{2}\left(y-\bm{q}^{-1}y_{1}\right)\varphi\left(-y_{1}y^{-1}\right)^{-1}\lt(\bm{u}\mathcal{V}_{2}^{-1}w\rt)\left(\mu,m;\mu_{1},m_{1}\right).\nonumber
\ee
We have shifted the $\mu$-contour by using the analyticity and asymptotic behavior of the integrand. Then we have
\be
&&\bm{y}_{1}^{-1}{\cal S}_2^{-1}\mathcal{V}_{2}f
={\cal S}_{2}^{-1}\lt[y_{1}^{-1}y_{2}\sum_{n\in\Z/N\Z}\int d\nu e^{\frac{2\pi i}{N}\left(\mu_{2}\nu-m_{2}n\right)}\psi\left(\nu,n;\mu_{1},m_{1}\right)\rt]\nonumber\\
&=&{\cal S}_{2}^{-1}\lt[y_{1}^{-1}\sum_{n\in\Z/N\Z}\int d\nu e^{\frac{2\pi i}{N}\left(\mu_{2}\nu-m_{2}n\right)}\psi\left(\nu+ib,n-1;\mu_{1},m_{1}\right)\rt]\nonumber\\
&=&\lt[{\cal S}_{2}^{-1}\mathcal{V}_{2}\left(1-\bm{q} y_{1}^{-1} y_{2}\right)\varphi\left(-y_{1} y_{2}^{-1}\right)^{-1}\bm{u}_{2}\mathcal{V}_{2}^{-1}w\rt]\left(\mu_{1},m_{1};\mu_{2},m_{2}\right)\in\ch\otimes\ch.
\ee
This shows that ${\cal S}_2^{-1}\mathcal{V}_{2}f$ is in the domain of $\bm{y}_{1}^{-1}$. Moreover, we obtain the result
\be
t_{21}{\cal S}_2\bm{y}_{1}^{-1}{\cal S}^{-1}_2t_{21}^{-1}w&=&\mathcal{V}_{2}\left(\bm{u}_{2}-\bm{q}^{-1}\bm{y}_{1}^{-1}\bm{u}_{2}\bm{y}_{2}\right)\mathcal{V}_{2}^{-1}w\nonumber\\
&=&\left(\bm{u}_{2}+\bm{y}_{1}^{-1}\bm{y}_{2}\right)w.
\ee

Next, we compute
\be
t_{21}{\cal S}_2\bm{u}_{1}^{-1}{\cal S}_2^{-1}t_{21}^{-1}w, \qquad \text{where}\qquad t_{21}=\cv_2\varphi\left(-y_{1}y_{2}^{-1}\right)\mathcal{V}_{2}^{-1}.
\ee
$\cs^{-1}_2 t^{-1}_{21}w\left(\mu_{2},m_{2},\mu_{1},m_{1}\right)$ is analytic for $\mathrm{Im}\left(\mu_{1}\right)<\mathrm{Re}(b)$. We have
\be
&&\cf[f]\left(\nu,n;\mu_{1}+\mu_2-ib,m_{1}+m_2+1\right)\nonumber\\
&=&\sum_{m\in\Z/N\Z}\int d\mu\, e^{-\frac{2\pi i}{N}(\mu\nu-mn)}\varphi\left(-\bm{q}^{-2}y_{1}y_{2}y^{-1}\right)^{-1}\mathcal{V}_{2}^{-1}w\left(\mu,m,\mu_{1}-ib+\mu_{2},m_{1}+1+m_{2}\right)
\ee
and
\be
&&\bm{u}_{1}^{-1}{\cal S}_2^{-1}\mathcal{V}_{2}f\nonumber\\
&=&\sum_{n,m\in\Z/N\Z}\int d\nu e^{\frac{2\pi i}{N}\left(\mu_{2}\nu-m_{2}n\right)}(-1)^{n^{2}}e^{\frac{\pi i}{N}\left(\nu^{2}-n^{2}\right)}\cf[f]\left(\nu,n;\mu_{1}+\mu_2-ib,m_{1}+m_2+1\right)\nonumber\\
&=&\cs_2^{-1}\lt[\mathcal{V}_{2}\left(1-\bm{q}^{-1}y_{1}y_{2}^{-1}\right)\varphi\left(-y_{1}y_{2}^{-1}\right)^{-1}\mathcal{V}_{2}^{-1}w\left(\mu_{1}-ib,m_{1}+1;\mu_{2},m_{2}\right)\rt]
\ee
The result belongs to $\ch\otimes\ch$, so ${\cal S}_2^{-1}\mathcal{V}_{2}f$ is in the domain of $\bm{u}_{1}^{-1}$. From this result, we obtain
\be	
t_{21}{\cal S}_2\bm{u}_{1}^{-1}{\cal S}^{-1}_2t_{21}^{-1}w=\left(\bm{u}_{1}^{-1}+e^{\bm{U}_{2}-\bm{Y}_{2}}e^{-\bm{U}_{1}+\bm{Y}_{1}}\right)w
\ee

A computation similar to the above gives
\be
t_{21}{\cal S}_2e^{-\bm{U}_{1}-\bm{Y}_{1}}{\cal S}^{-1}_2t_{21}^{-1}w&=&\left[\left(\bm{q}^{-1}+\bm{q}\right)e^{-\bm{U}_{1}+\bm{U}_{2}}+e^{-\bm{U}_{1}-\bm{Y}_{1}+\bm{Y}_{2}}+e^{-\bm{U}_{1}+\bm{Y}_{1}+2\bm{U}_{2}-\bm{Y}_{2}}\right]w,\\
t_{21}{\cal S}_2e^{-\bm{U}_{1}+\bm{Y}_{1}}{\cal S}_2^{-1}t_{21}^{-1}w&=&e^{-\bm{U}_{1}+\bm{Y}_{1}-\bm{Y}_{2}}w.
\ee

In order to compute 
\be
t_{21}{\cal S}_2e^{\bm{U}_{1}-\bm{Y}_{1}}{\cal S}_2^{-1}t_{21}^{-1}w,
\ee
A regularization is needed, since the earlier argument only ensures ${\cal S}_2^{-1}t_{21}^{-1}w$ be analytic for $\im(\mu_1)<\re(b)$, whereas $e^{\bm{U}_1}$ sends $\phi(\mu_1,m_1)$ to $\phi(\mu_1+ib,m-1)$. We introduce a regularization parameter $\epsilon>0$ and consider 
\be
f_{\epsilon}=\varphi\left(-y_{1}(\mu_{1}-i\epsilon)y_{2}^{{-1}}\right)^{{-1}}\mathcal{V}_{2}^{-1}w\left(\mu,m,\mu_{1}-i\epsilon,m_{1}\right)
\ee
which is analytic for ${\rm Im}\left(\mu_{1}\right)<\mathrm{Re}(b)+\epsilon$ if $\mu\in \R$.
\be
&&\bm{u}_{1}\bm{y}_{1}^{-1}{\cal S}_{2}^{-1}\mathcal{V}_{2}f_{\epsilon}\nonumber\\
&=&\sum_{n,m\in\Z/N\Z}\int d\nu e^{\frac{2\pi i}{N}\left(\mu_{2}\nu-m_{2}n\right)}(-1)^{n^{2}}e^{\frac{\pi i}{N}\left(\nu^{2}-n^{2}\right)}\int_{\mathbb{R}}d\mu\, e^{-\frac{2\pi i}{N}(\mu\nu-mn)}\nonumber\\
&&\bm{q}^{-2}y_{1}^{-1}\varphi\left(-\bm{q}^{2}y_{1}(\mu_{1}-i\epsilon)y_{2}y^{-1}\right)^{-1}\mathcal{V}_{2}^{-1}w\left(\mu,m,\mu_{1}+ib+\mu_{2}-i\epsilon,m_{1}-1+m_{2}\right)\nonumber\\
&=&{\cal S}_{2}^{-1}\lt[{y}_{2}\sum_{n,m\in\Z/N\Z}\int d\nu e^{\frac{2\pi i}{N}\left(\mu_{2}\nu-m_{2}n\right)}(-1)^{n^{2}}e^{\frac{\pi i}{N}\left(\nu^{2}-n^{2}\right)}e^{\frac{2\pi i}{N}(-ib\nu-n)}\int_{\mathbb{R}}d\mu e^{-\frac{2\pi i}{N}(\mu\nu-mn)}\rt.\nonumber\\
&&\lt.\bm{q}^{-2}y_{1}^{-1}\varphi\left(-y_{1}(\mu_{1}-i\epsilon)y^{-1}\right)^{-1}\lt(\bm{u}\bm{u}_{1}\mathcal{V}_{2}^{-1}w\rt)\left(\mu,m,\mu_{1}-i\epsilon,m_{1}\right)\rt]
\ee
where we have shift the $\mu$-contour from $\R$ to $\R+ib$ in the second step (recall \eqref{repuandy} for the action of $\bm u$, while replacing $(\mu,m)$ by $(\mu_a,m_a)$ for $\bm u_a$, $a=1,2$). The poles in $\varphi(\cdot)^{-1}$ are avoided by the $i\epsilon$ regularization.
\be
\psi(\nu,n;\mu_1,m_1)&=&\sum_{m\in\Z/N\Z}(-1)^{n^{2}}e^{\frac{\pi i}{N}\left(\nu^{2}-n^{2}\right)}e^{\frac{2\pi i}{N}(-ib\nu-n)}\int_{\mathbb{R}}d\mu e^{-\frac{2\pi i}{N}(\mu\nu-mn)}\nonumber\\
&&\bm{q}^{-2}y_{1}^{-1}\varphi\left(-y_{1}(\mu_{1}-i\epsilon)y^{-1}\right)^{-1}\lt(\bm{u}\bm{u}_{1}\mathcal{V}_{2}^{-1}w\rt)\left(\mu,m,\mu_{1}-i\epsilon,m_{1}\right)
\ee
belongs to $\ch\otimes\ch$ for $\nu\in\R$ and is entire in $\nu$, and $\psi(\nu+ib,n-1;\mu_1,m_1)$ also belongs to $\ch\otimes\ch$. We have
\be
\bm{y}_{2}\sum_{n\in\Z/N\Z}\int d\nu e^{\frac{2\pi i}{N}\left(\mu_{2}\nu-m_{2}n\right)}\psi(\nu,n;\mu_1,m_1)=\sum_{n\in\Z/N\Z}\int d\nu e^{\frac{2\pi i}{N}\left(\mu_{2}\nu-m_{2}n\right)}\psi(\nu+ib,n-1;\mu_1,m_1)\nonumber
\ee
Then we obtain
\be
\bm{u}_{1}\bm{y}_{1}^{-1}{\cal S}_{2}^{-1}\mathcal{V}_{2}f_{\epsilon}=-\cs_2^{-1}\mathcal{V}_{2}y_{1}^{-1}y_{2}\varphi\left(-y_{1}(\mu_{1}-i\epsilon)y_{2}^{-1}\right)^{-1}\bm{q}^{-1}\lt(\bm{u}_{1}\bm{u}_{2}\mathcal{V}_{2}^{-1}w\rt)\left(\mu,m,\mu_{1}-i\epsilon,m_{1}\right).\nonumber
\ee
We denote by
\be
\phi_{\epsilon}\equiv \mathcal{V}_{2}^{-1}{\cal S}_2\bm{u}_{1}\bm{y}_{1}^{-1}{\cal S}_2^{-1}\mathcal{V}_{2}f_{\epsilon}=-y_{1}^{-1}y_{2}\varphi\left(-y_{1}(\mu_{1}-i\epsilon)y_{2}^{-1}\right)^{-1}\bm{q}^{-1}\lt(\bm{u}_{1}\bm{u}_{2}\mathcal{V}_{2}^{-1}w\rt)\left(\mu,m,\mu_{1}-i\epsilon,m_{1}\right).\nonumber
\ee
We can remove the regulator $\epsilon$: $f_\epsilon\to f\equiv f_0$ and $\phi_{\epsilon}\to \phi\equiv \phi_0$ converge respectively in the sense of Hilbert space norm, so ${\cal S}^{-1}_2\mathcal{V}_{2}f_{\epsilon}\to{\cal S}_2^{-1}\mathcal{V}_{2}f$ and ${\cal S}_2^{-1}\mathcal{V}_{2}\phi_{\epsilon}\to\psi\equiv{\cal S}_2^{-1}\mathcal{V}_{2}\phi$ converge. $\bm{u}_{1}\bm{y}_{1}^{-1}$ is closed operator and $\bm{u}_{1}\bm{y}_{1}^{-1}{\cal S}_2^{-1}\mathcal{V}_{2}f_{\epsilon}={\cal S}_2^{-1}\mathcal{V}_{2}\phi_{\epsilon}$. It implies that ${\cal S}_2^{-1}\mathcal{V}_{2}f$ is in the domain of $\bm{u}_{1}\bm{y}_{1}^{-1}$ and $\bm{u}_{1}\bm{y}_{1}^{-1}{\cal S}_2^{-1}\mathcal{V}_{2}f={\cal S}_2^{-1}\mathcal{V}_{2}\phi$. Then we have the result
\be
t_{21}{\cal S}_2e^{\bm{U}_{1}-\bm{Y}_{1}}{\cal S}_2^{-1}t_{21}^{-1}w=e^{\bm{U}_{1}-\bm{Y}_{1}+\bm{Y}_{2}}w
\ee

It is straight-forward to check that 
\be
t_{21}{\cal S}_{2}e^{\bm{Y}_{2}}{\cal S}_{2}^{-1}t_{21}^{-1}w&=&\left(\bm{y}_{2}+\bm{u}_{2}\bm{y}_{1}\right)w,\\
t_{21}{\cal S}_2e^{-\bm{U}_{2}}{\cal S}_2^{-1}t_{21}^{-1}w&=&\bm{u}_{1}^{-1}\bm{u}_{2}^{-1}w
\ee

In order to compute
\be
t_{21}{\cal S}_2e^{\bm{U}_{2}}{\cal S}_2^{-1}t_{21}^{-1}w,
\ee
we express $\mathcal{V}_{2}f$ by shifting the $\mu$-contour:
\be
{\cal S}_2^{-1}\mathcal{V}_{2}f&=&\sum_{n,m\in\Z/N\Z}\int d\nu e^{\frac{2\pi i}{N}\left(\mu_{2}\nu-m_{2}n\right)}(-1)^{n^{2}}e^{\frac{\pi i}{N}\left(\nu^{2}-n^{2}\right)}\int_{\mathbb{R}}d\mu\, e^{-\frac{2\pi i}{N}((\mu+ib)\nu-(m-1)n)}\nonumber\\
&&\varphi\left(-\bm{q}^{-2}y_{1}y_{2}y^{-1}\right)^{-1}\lt(\bm{u}\mathcal{V}_{2}^{-1}w\rt)\left(\mu,m,\mu_{1}+\mu_{2},m_{1}+m_{2}\right)
\ee
${\cal S}_2^{-1}\mathcal{V}_{2}f $ is entire in $\mu_2$, so 
\be
\bm{u}_{2}{\cal S}_2^{-1}\mathcal{V}_{2}f&=&{\cal S}_2^{-1}\lt[\sum_{n,m\in\Z/N\Z}\int d\nu e^{\frac{2\pi i}{N}\left(\mu_{2}\nu-m_{2}n\right)}(-1)^{n^{2}}e^{\frac{\pi i}{N}\left(\nu^{2}-n^{2}\right)}\int_{\mathbb{R}}d\mu\, e^{-\frac{2\pi i}{N}(\mu\nu-mn)}\rt.\nonumber\\
&&\varphi\left(-y_{1}y^{-1}\right)^{-1}\lt(\bm{u}\bm{u}_{1}\mathcal{V}_{2}^{-1}w\rt)\left(\mu,m,\mu_{1},m_{1}\right)\Bigg]\nonumber\\
&=&{\cal S}_2^{-1}\mathcal{V}_{2}\varphi\left(-y_{1}y_{2}^{-1}\right)^{-1}\bm{u}_{1}\bm{u}_{2}\mathcal{V}_{2}^{-1}w 
\ee
Therefore, ${\cal S}_2^{-1}\mathcal{V}_{2}f$ is in the domain of $\bm u_2$ and 
\be
t_{21}{\cal S}_2e^{\bm{U}_{2}}{\cal S}_2^{-1}t_{21}^{-1}w=\bm{u}_{1}\bm{u}_{2}w.
\ee

The above computation implies
\be
t_{21}{\cal S}_2Q'_1{\cal S}^{-1}_2t_{21}^{-1}w=Q_{21}w,\qquad t_{21}{\cal S}_2K_2^{\pm1}{\cal S}_2^{-1}t_{21}^{-1}w=(\Delta K^{\pm 1})_{21}w,\qquad t_{21}{\cal S}_2F_2{\cal S}_2^{-1}t_{21}^{-1}w=(\Delta F)_{21}w,\nonumber
\ee
for any $w\in\sw_2$. Since $Q_{21},(\Delta K^{\pm 1})_{21},(\Delta F)_{21}$ leave $\sw_2$ invariant, these relations can generalize to polynomials of $Q_{21},(\Delta K^{\pm 1})_{21},(\Delta F)_{21}$: For any $\xi$ polynomial of $Q,K^{\pm1},F$, we have 
\be
t_{21}{\cal S}_2\L(\xi){\cal S}^{-1}_2t_{21}^{-1}w=(\Delta\xi)_{21}w,
\ee
where $\L(\xi)$ is the polynomial $\xi$ with the substitution: $Q\mapsto Q'_1,K^{\pm1}\mapsto K_2^{\pm 1},F\mapsto F_2$. 

Define the operator $*$-algebra $\bf{L}$ generated by polynomials $\Delta\xi$ and the tilded partners on $\ch\otimes\ch$, and define the Schwartz space $\css_{\bf L}\subset \ch \otimes\ch$ as the domain of $\bf{L}$: $f\in\cs_{\bf L}$ if and only if the functional $w\mapsto \langle f\mid(\Delta\xi)_{21} w\rangle$ is continuous in $w\in\sw_2$ for all $\Delta\xi\in\mathbf{L}$ (equivalently, $\sw_2$ can be replaced by any dense domain in $\ch\otimes\ch$). The Schwartz space $\css_{\bf L}$ is invariant under the action of $\bf{L}$ and is dense in $\ch\otimes\ch$. $\css_{\bf L}$ is a Fr\'echet space with a family of seminorms $\Vert f \Vert_{n}=\Vert (\Delta\xi_n)f\Vert$ by choosing a basis $\Delta\xi_n\in\bf{L}$. By the same method of proving Proposition 5.8 in \cite{2008InMat.175..223F}, we can show that $\sw_2$ is dense in $\css_{\bf L}$ with respect to the Fr\'echet topology.

In a similar way, we define the operator $*$-algebra $\L({\bf L})$ generated by $\L(\xi)$ and the tilded partners and define the corresponding Schwartz space $\css_{\L(\mathbf L)}\subset \ch \otimes\ch$ as the common domain of $\L(\xi)$. 

Since $\sw_2$ is dense in $\css_{\bf L}$, there is a sequence $w_i\in\sw_2$ converging to $f\in\css_{\bf L}$ in the sense of the Fr\'chet topology. This means that $w_i\to f$ and $(\Delta\xi)_{21}w_i\to (\Delta\xi)_{21}f $ in the sense of Hilbert space norm. The following quantity is clearly a continuous functional of $w'\in\sw_2$ \cite{2008InMat.175..223F}:
\be
&&\lag {\cal S}_2^{-1}t_{21}^{-1}(\Delta\xi)_{21} f\mid w' \rag=\lag (\Delta\xi)_{21} f\mid t_{21}{\cal S}_2 w' \rag=\lim_{i\to\infty}\lag (\Delta\xi)_{21} w_i\mid t_{21}{\cal S}_2 w' \rag\nonumber\\
&=&\lim_{i\to\infty}\lag \L(\xi) {\cal S}_2^{-1}t_{21}^{-1} w_i\mid  w' \rag=\lim_{i\to\infty}\lag {\cal S}_2^{-1}t_{21}^{-1} w_i\mid \L(\xi) ^\dagger w' \rag=\lag {\cal S}_2^{-1}t_{21}^{-1} f\mid \L(\xi) ^\dagger w' \rag
\ee
This implies ${\cal S}_2^{-1}t_{21}^{-1} f\in \css_{\L(\mathbf L)}$, i.e. ${\cal S}_2^{-1}t_{21}^{-1}: \css_{\mathbf L}\to \css_{\L(\mathbf L)}$, and 
\be
t_{21}{\cal S}_2\L(\xi){\cal S}^{-1}_2t_{21}^{-1}f=(\Delta\xi)_{21}f.
\ee
Conversely, for any $F\in  \css_{\L(\mathbf L)}$, the following quantity is a continuous functional of $w'\in\sw_2$:
\be
\langle t_{21}{\cal S}_2 \L(\xi)^\dagger F \mid w'\rangle&=&\langle F \mid \L(\xi) {\cal S}^{-1}_2t_{21}^{-1}w'\rangle=\langle F \mid  {\cal S}^{-1}_2t_{21}^{-1}(\Delta\xi)_{12} w'\rangle\nonumber\\
&=&\langle t_{21}{\cal S}_2 F \mid (\Delta\xi)_{12} w'\rangle
\ee
so $t_{21}{\cal S}_2 F\in \css_{\mathbf L}$. Therefore, ${\cal S}_2^{-1}t_{21}^{-1}: \css_{\mathbf L}\to \css_{\L(\mathbf L)}$ is bijective.

Now we enlarge the operator algebra $\L(\mathbf{L})$ by taking into account $F_2^{-1}\propto \bmy_2^{-1}$: We define $\L(\mathbf{L}')$ to be the algebra of polynomials of $Q_1',K_2^{\pm 1},F_2^{\pm1}$ and the tilded partners. The corresponding Schwartz space $\css_{\L(\mathbf{L}')}$ is the common domain of $\L(\mathbf{L}')$,  $\css_{\L(\mathbf{L}')}$ is dense in $\ch\otimes\ch$, and $\css_{\L(\mathbf{L}')}\subset \css_{\L(\mathbf{L})}$. For any $\phi\in\css_{\L(\mathbf{L}')} $, $F_2^{-1}\phi \in\css_{\L(\mathbf{L}')}\subset \css_{\L(\mathbf{L})} $, and there exist $f\in \css_{\bf L}$ such that ${\cal S}^{-1}_2t_{21}^{-1}f=F_2^{-1}\phi$. Let $Q_{21}=-\left(\bm{q}-\bm{q}^{-1}\right)^{2}(\Delta E)_{21}(\Delta F)_{21}-\bm{q}^{-1}(\Delta K)_{21}-\bm{q}(\Delta K)_{21}^{-1}$ act on $f$, and we obtain
\be
{\cal S}_{2}^{-1}t_{21}^{-1}(\Delta E)_{21}t_{21}{\cal S}_{2}\phi=-\frac{\left[Q_{1}^{\prime}+\bm{q}^{-1}K_{2}+\bm{q}K_{2}^{-1}\right]F_{2}^{-1}\phi}{\left(\bm{q}-\bm{q}^{-1}\right)^{2}}\equiv E_2 \phi
\ee
The operator algebra of polynomials $E_2,F_2,K_2^{\pm1}$ and their tilded partners form a representation of $\suquqt$ on $\css_{\L(\mathbf{L}')}$. $t_{21}{\cal S}_{2}$ maps $\css_{\L(\mathbf{L}')}$ to the domain of the polynomials of $(\Delta E)_{21}$, $(\Delta F)_{21}$, and $(\Delta K^{\pm1})_{21}$. 

We denote by $\mathbf{L}'$ the algebra of polynomials of $(\Delta E)_{21}$, $(\Delta F)_{21}$, and $(\Delta K^{\pm1})_{21}$ and define the corresponding Schwartz space $\css_{\mathbf{L}'}$ as the dense domain of $\mathbf{L}'$. We have shown above that ${\cal S}_{2}^{-1}t_{21}^{-1}\css_{\L(\mathbf{L}')}\subset \css_{\mathbf{L}'}$. For any $f\in\css_{\mathbf{L}'} $, $\phi\in\css_{\L(\mathbf{L}')}$ and any $\zeta$ a polynomial of $E,F,K^{\pm1}$ and the tilded partners,
\be
\langle f \mid (\Delta\zeta)_{21} t_{21} {\cal S}_{2}\phi \rangle=\langle f \mid t_{21}{\cal S}_2 \L(\zeta)\phi \rangle=\langle {\cal S}_2^{-1}t_{21}^{-1} f \mid \L(\zeta)\phi \rangle
\ee
is a continuous linear functional of $\phi\in \css_{\L(\mathbf{L}')}$. Therefore ${\cal S}_2^{-1}t_{21}^{-1} f \in \css_{\L(\mathbf{L}')}$, and thus ${\cal S}_2^{-1}t_{21}^{-1}$ is bijective from $\css_{\mathbf{L}'}$ to $\css_{\L(\mathbf{L}')}$.

The operator domains in Lemma \ref{lemma32} can be understood in a similar way: The unitary operator $C_{1}^{-1}$ can be expressed as
\be
C_{1}^{-1}=\mathcal{F}\varphi\left(\lambda_{2}^{-1}\bm{u}^{-1}\right){\cal F}^{-1}\frac{\varphi\left(\lambda_{1}\bm{u}^{-1}\right)}{\varphi\left(\lambda_{1}\bm{u}\right)}
\ee
and it is convenient to use the representation $\bm{u}\psi\left(\nu,n\right)=e^{-\frac{2\pi i}{N}\left[-ib\nu-n\right]}\psi\left(\nu,n\right)$, $\bm{y}\psi\left(\nu,n\right)=\psi\left(\nu+ib,n-1\right)$. For any $w\in\sw$, 
\be
F&\equiv&\mathcal{F}\varphi\left(\lambda_{2}^{-1}\bm{u}^{-1}\right){\cal F}^{-1}\frac{\varphi\left(\lambda_{1}\bm{u}^{-1}\right)}{\varphi\left(\lambda_{1}\bm{u}\right)}w\\
&=&\sum_{n_{1},n_{2}\in\Z/N\Z}\int d\nu_{2}e^{-\frac{2\pi i}{N}\left(\nu\nu_{2}-nn_{2}\right)}\varphi\left(\lambda_{2}^{-1}\bm{u}_{2}^{-1}\right)\int d\nu_{1}e^{\frac{2\pi i}{N}\left(\nu_{2}\nu_{1}-n_{2}n_{1}\right)}\frac{\varphi\left(\lambda_{1}\bm{u}_{1}^{-1}\right)}{\varphi\left(\lambda_{1}\bm{u}_{1}\right)}w\left(\nu_{1},n_{1}\right)\nonumber
\ee
and we define $F_{\epsilon>0}$ by the following substitution in the integrand, which regularizing the quantum dilogarithm
\be
\frac{\varphi\left(\lambda_{1}\bm{u}_{1}^{-1}\right)}{\varphi\left(\lambda_{1}\bm{u}_{1}\right)}\mapsto \frac{\varphi\left(\lambda_{1}\bm{u}_{1}^{-1}\right)}{\varphi\left(\lambda_{1}\bm{u}_{1}\right)}\Big|_{\nu_{1}\to\nu_{1}+i\epsilon},\qquad \varphi\left(\lambda_{2}^{-1}\bm{u}_{2}^{-1}\right)\mapsto\varphi\left(\lambda_{2}^{-1}\bm{u}_{2}^{-1}\right)\Big|_{\nu_{2}\to\nu_{2}+i\epsilon}, 
\ee
so that they are analytic for $\mathrm{Im}(\nu_{1})>-\mathrm{Re}(b)-\epsilon$ and $\mathrm{Im}(\nu_{2})>-\mathrm{Re}(b)-\epsilon$ respectively. As $\epsilon\to0$, $F_\epsilon\to F$ in the sense of Hilbert space norm. As a result, for $\mathrm{Im}(\nu)>-\mathrm{Re}(b)-\epsilon$, $F_\epsilon(\nu,n)$ is analytic and is a Schwartz function of $\re(\nu)$ with the asymptotic behavior $e^{-\frac{2\pi}{N}\left(\mathrm{Re}(b)+\epsilon\right)\mathrm{Re}(\nu)}$ as $\mathrm{Re}(\nu)\to\infty$ and decaying faster than $e^{a\re(\nu)}$ for any $a>0$ as $\mathrm{Re}(\nu)\to-\infty$. Therefore, $F_\epsilon$ is in the domain of $\bm{u}^\alpha\bmy^\beta$ with $\a,\b=\pm 1$. Then following the formal proof of Lemma \ref{lemma32}, one can show that $Q_1'F_\epsilon$ converges to $F'=C_1^{-1}Q_1'' w\in \ch$ in the Hilbert space norm as $\epsilon\to 0$. Therefore, $F=C^{-1}_1 w$ is in the domain of $Q_1'$ and 
\be
C_1 Q_1'C^{-1}_1 w=Q_1''w.
\ee
We denote by $\css_{Q_1'},\css_{Q_1''}\in\ch$ the Schwartz spaces associated to the polynomial algebras of $Q_1'$ and $Q_1''$. By the same argument as the above, $C_1^{-1}$ maps $Q_1''$ to $\css_{Q_1'}$ and is bijective.

\section{Fock-Goncharov coordinate and holonomies}\label{Fock-Goncharov coordinate and holonomies}

\begin{figure}[h]
	\centering
	\includegraphics[width=0.5\textwidth]{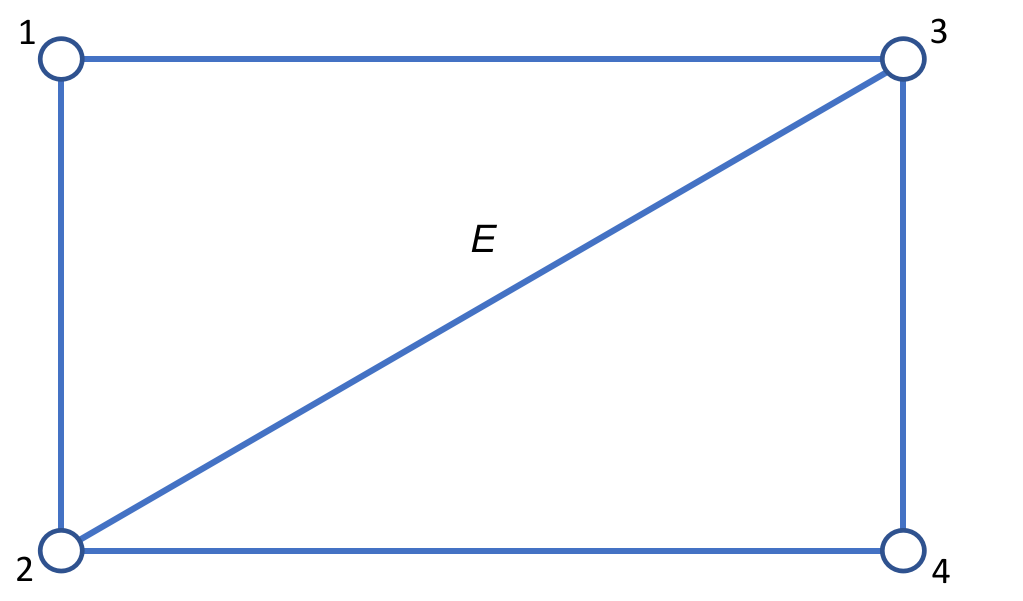}
	\caption{}
	\label{FGcoord}
\end{figure}

A 2-sphere in which $n$ discs are removed is a $n$-holed sphere. We make a 2d ideal triangulation of the $n$-holed sphere such that edges in the triangulation end at the boundary of the holes. For example, the boundary of the ideal tetrahedron is an ideal triangulation of the 4-holed sphere. The 2d ideal triangulation has $3(n-2)$ edges on the $n$-holed sphere. Each edge $E$ associates to a coordinate $z_E$ of the moduli space of framed $\PSlc$ flat connections. A framed flat connection on is a standard flat connection with a choice of $\mathbb{CP}^1$ flat section $s_i$ for each hole $i$. The section $s_i$ obeying the condition $\nabla s_i=0$ ($\nabla$ is the flat connection) and is the eigenvector of monodromy around the hole $i$. $s_i$ associates to the eigenvalue $\l_i=e^{L_i}$ of the monodromy matrix. Given a framed flat connection, $z_E$ is a cross-ratio of 4 flat section $s_1,s_2,s_3,s_4$ associated to the vertices of the quadrilateral containing $E$ as the diagonal (see FIG.\ref{FGcoord}), 
\be
z_E=\frac{\lag s_1\wedge s_2\rag\lag s_3\wedge s_4\rag}{\lag s_1\wedge s_3\rag\lag s_2\wedge s_4\rag}
\ee
where $\lag s_i\wedge s_j\rag$ is an $\Slc$ invariant volume on $\C^2$, and is computed by parallel transporting $s_1,\cdots,s_4$ to a common point inside the quadrilateral by the flat connection. The coordinates $z_E$ for all $E$ are the Fock-Goncharov (FG) coordinates. We often consider a lift of $z_E$ to the logarithmic coordinate $Z_E$ such that $z_E=e^{Z_E}$. At any hole $i$, $z_E$ on the adjacent edges $E$ satisfy
\be
\prod_{E\ \text{at}\ i}(-z_E)=\l_i^2.\label{monodromy000}
\ee
The lifts of the coordinates are chosen such that the logarithm of \eqref{monodromy000} is given by \cite{DGV}
\begin{eqnarray}
	\sum_{E\ \text{at}\ i}(Z_E-\pi i)=2L_i.\label{monodromy001}
\end{eqnarray}
Note that $L_{1,\cdots,3}$ in \eqref{LZ1234} are defined with flipped sign.

$\Slc$ holonomy along any closed path on the $n$-holed sphere can be expressed as $2\times 2$ matrices whose entries are functions of $\{Z_E\}$ by using the ``snake rule'' \cite{DGV}: There are three rules for transporting a {\it snake} -- an arrow pointing from one vertex of the triangle to another with a {\it fin} facing inside the triangle, each corresponds to a matrix as follows. (The inverse transportation of each type corresponds to the inverse of the relevant matrix.
\be
\includegraphics[width=0.25\textwidth]{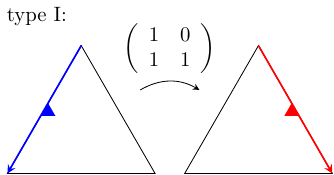}\quad\qquad
\includegraphics[width=0.25\textwidth]{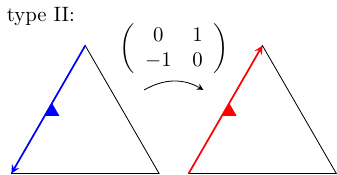}\quad\qquad
\includegraphics[width=0.3\textwidth]{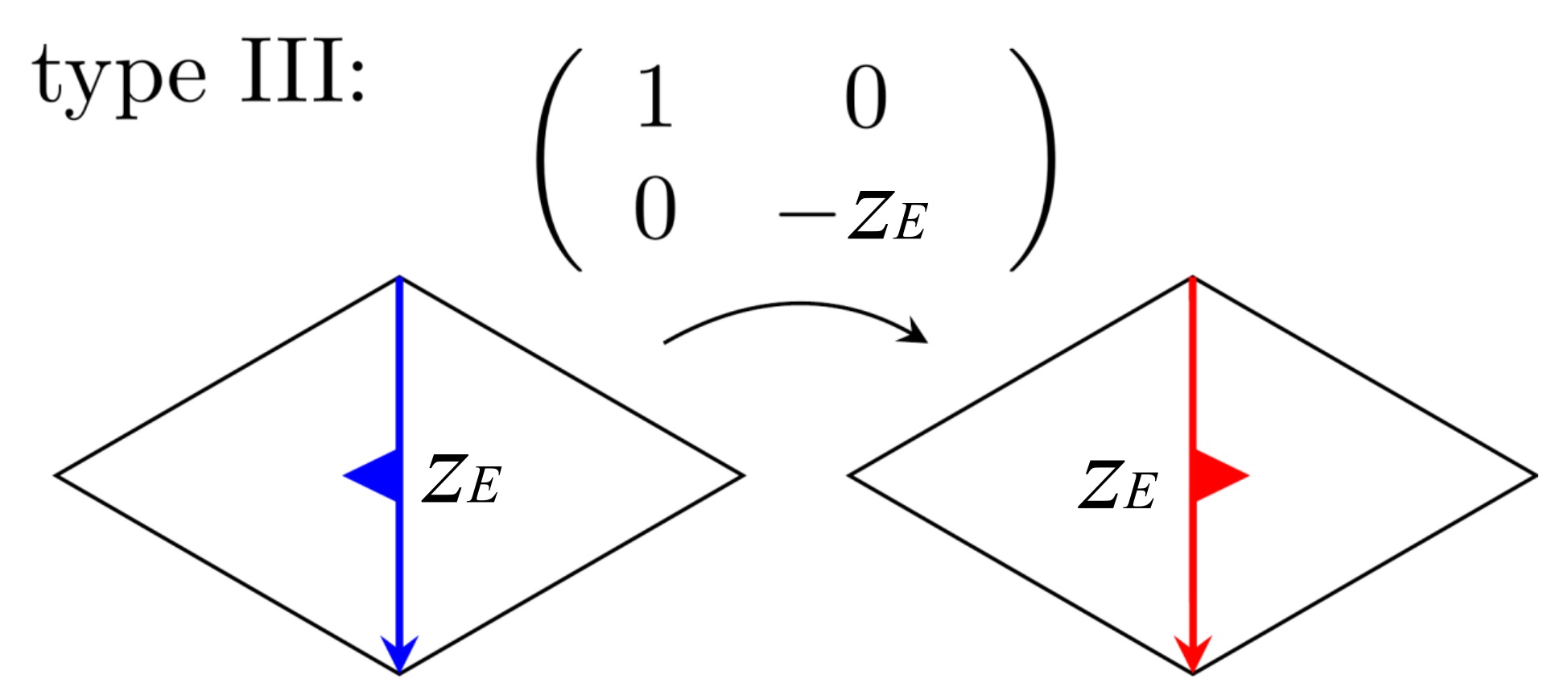}\,.
\label{eq:snake_rule}
\ee
A snake represents a projective basis $(v_1,v_2)$ given by $v_1\in\C$ at the tail of the snake
and $v_2\in\C$ at the head of the snake, such that either $v_1 + v_2$ (type I: blue) or
$v_1-v_2$ (type I: red) at the third vertex of the triangle. Type I and II correspond to transporting a snake within a triangle and III correspond to moving a snake from one triangle to its adjacent triangle. The transformation matrix acts on the projective basis $(v_1,v_2)^T$ by left multiplication. Any holonomy $H(\g)$ of a closed loop $\g$ can be calculated by multiplying the matrices from right to left corresponding to moving a snake along the loop. The holonomy matrix resulting from the snake rule is not immediately $\Slc$ but rather understood as $\mathrm{PGL}(2,\C)$. The $\Slc$ holonomy is obtained up to $\pm$ sign by a lift that can conveniently chosen by normalizing the Type III matrix
\be
\text{Type III:}\qquad  \left(\begin{array}{cc}
	e^{-\frac{Z_E-\pi i}{2}} & 0 \\
	0 & e^{\frac{Z_E-\pi i}{2}}
	\end{array}\right)\equiv E(Z_E).
\ee 
For any closed path $\g_i$ around a single hole $i$, the $\pm$ sign of $H(\g_i)$ can be determined by requiring the trace of $H(\g_i)$ to be
\be
\tr(H(\g_i))=e^{\frac{1}{2}\sum_{E\ \text{at}\ i}(Z_E-i\pi)}+e^{-\frac{1}{2}\sum_{E\ \text{at}\ i}(Z_E-i\pi)}=e^{L_i}+e^{-L_i}.
\ee
This requirements is consistent with \eqref{monodromy001} and the eigenvalue of the monodromy matrix when discussing $s_i$. The fundamental group of the $n$-holed sphere is generated by $\{\g_i\}_{i=1,\dots,n}$, so $\{H(\g_i)\}_{i=1,\dots,n}$ determine all holonomies of closed paths.



\begin{figure}[h]
	\centering
	\includegraphics[width=0.5\textwidth]{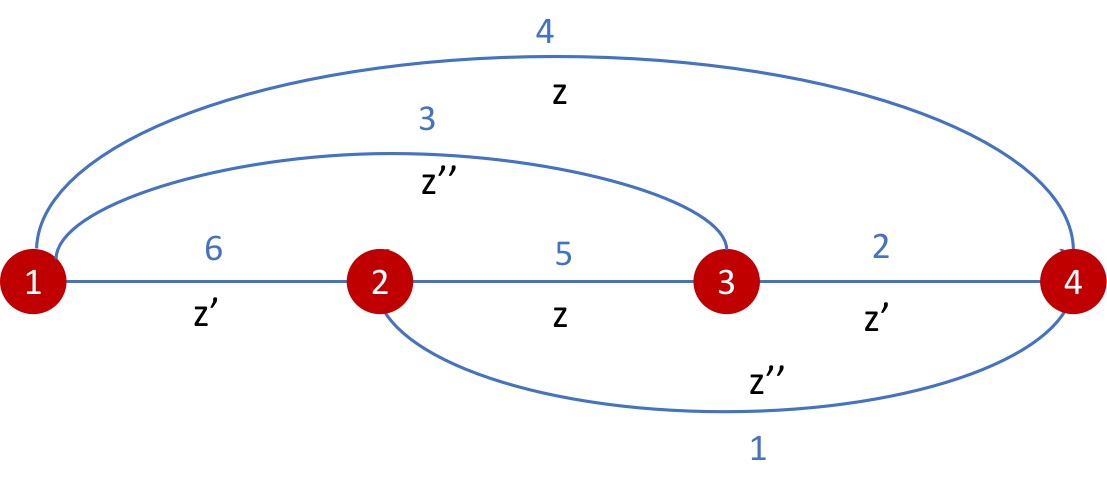}
	\caption{The tetrahedral ideal triangulation of 4-holed sphere and the FG coordinates $z,z',z''$.}
	\label{zzpzpp}
\end{figure}

We consider the 4-holed sphere and the ideal triangulation in FIG.\ref{zzpzpp} as an example. The ideal triangulation is tetrahedral since it is the boundary of an ideal tetrahedron. We denote by $\g_i$ a loop around the hole $i$ oriented counter-clockwisely. All $\g_i$ share the same base point represented by a snake pointing from the 4th hole to the 2nd hole along the edge 1 with the fin inside the triangle with vertices $2,3,4$.
\be
H(\gamma_4)&=&\left(
\begin{array}{cc}
 1 & 0 \\
 1 & 1 \\
\end{array}
\right)E\left(Z_2\right)\left(
\begin{array}{cc}
 1 & 0 \\
 1 & 1 \\
\end{array}
\right)E\left(Z_4\right)\left(
\begin{array}{cc}
 1 & 0 \\
 1 & 1 \\
\end{array}
\right)E\left(Z_1\right),\nonumber\\
H(\g_3)&=&-\left(
\begin{array}{cc}
 0 & 1 \\
 -1 & 0 \\
\end{array}
\right)\left(
\begin{array}{cc}
 1 & 0 \\
 1 & 1 \\
\end{array}
\right)^{-1}\left(
\begin{array}{cc}
 0 & 1 \\
 -1 & 0 \\
\end{array}
\right)E\left(Z_5\right)\left(
\begin{array}{cc}
 1 & 0 \\
 1 & 1 \\
\end{array}
\right)E\left(Z_3\right)\left(
\begin{array}{cc}
 1 & 0 \\
 1 & 1 \\
\end{array}
\right)\left(
\begin{array}{cc}
 0 & 1 \\
 -1 & 0 \\
\end{array}
\right)^{-1}E\left(Z_2\right)^{-1}\left(
\begin{array}{cc}
 1 & 0 \\
 1 & 1 \\
\end{array}
\right)^{-1},\nonumber\\
H(\g_2)&=&\left(
\begin{array}{cc}
 0 & 1 \\
 -1 & 0 \\
\end{array}
\right)E\left(Z_1\right)\left(
\begin{array}{cc}
 1 & 0 \\
 1 & 1 \\
\end{array}
\right)E\left(Z_6\right)\left(
\begin{array}{cc}
 1 & 0 \\
 1 & 1 \\
\end{array}
\right)E\left(Z_5\right)\left(
\begin{array}{cc}
 1 & 0 \\
 1 & 1 \\
\end{array}
\right)\left(
\begin{array}{cc}
 0 & 1 \\
 -1 & 0 \\
\end{array}
\right)^{-1},\nonumber\\
H(\g_1)&=&\left(
	\begin{array}{cc}
	 0 & 1 \\
	 -1 & 0 \\
	\end{array}
	\right)E\left(Z_1\right)\left(
	\begin{array}{cc}
	 1 & 0 \\
	 1 & 1 \\
	\end{array}
	\right)\left(
	\begin{array}{cc}
	 0 & 1 \\
	 -1 & 0 \\
	\end{array}
	\right)^{-1}\left(
	\begin{array}{cc}
	 1 & 0 \\
	 1 & 1 \\
	\end{array}
	\right)E\left(Z_4\right)\left(
	\begin{array}{cc}
	 1 & 0 \\
	 1 & 1 \\
	\end{array}
	\right)E\left(Z_3\right)\left(
	\begin{array}{cc}
	 1 & 0 \\
	 1 & 1 \\
	\end{array}
	\right)E\left(Z_6\right)\nonumber\\
	&& \left(
	\begin{array}{cc}
	 0 & 1 \\
	 -1 & 0 \\
	\end{array}
	\right)\left(
	\begin{array}{cc}
	 1 & 0 \\
	 1 & 1 \\
	\end{array}
	\right)^{-1}E\left(Z_1\right){}^{-1}\left(
	\begin{array}{cc}
	 0 & 1 \\
	 -1 & 0 \\
	\end{array}
	\right)^{-1}.
\ee
They satisfy $\tr(H(\g_i))=e^{L_i}+e^{-L_i}$ and
\be
H(\g_1)H(\g_2)H(\g_3)H(\g_4)=1.
\ee


When the 4-holed sphere is the boundary of an ideal tetrahedron, all $H(\g_i)$ are constrained to the identity matrix, since the connection is flat inside the tetrahedron. In this case, all $L_i$ vanishes, i.e.
\be
Z_1+Z_2+Z_4=Z_3+Z_4+Z_6=Z_1+Z_5+Z_6=Z_2+Z_3+Z_5=3\pi i
\ee
so we can parametrize $z_e$ by calling the FG coordinates $z,z',z''$ occurring in the same counter-clockwise order around any hole, equal on opposite edges, and satisfying \footnote{We relabel $z_1=z_3= z'',\ z_2=z_6= z',\ z_4=z_5= z$.}
\be
zz'z''=-1.
\ee
The off-diagonal vanishes implies
\be
z''+z^{-1}-1=0.\label{Ltetra}
\ee
$z,z''$ are the symplectic coordinates of the phase space (of flat connections) on the boundary of ideal tetrahedron, with the poisson bracket given by $\{Z'',Z\}=\{Z,Z'\}=\{Z',Z''\}=2$. Eq.\eqref{Ltetra} defines the Lagrangian submanifold $\cl_\Delta$ of the flat connections that can be extended to interior of the tetrahedron.

Let us consider the ideal triangulation in FIG.\ref{degtriangulation}. $\g_i$ is the loop travels around the $i$-th hole counter-clockwisely. All $\g_i$ share the same base point represented by a snake pointing from the 1st hole to the 2nd hole along the edge 6 with the fin inside the triangle with vertices 1,2,4. We obtain 
\be    
H(\g_1)&=&E\left(Z_6\right).\left(
	\begin{array}{cc}
	 1 & 0 \\
	 1 & 1 \\
	\end{array}
	\right)E\left(Z_4\right).\left(
	\begin{array}{cc}
	 1 & 0 \\
	 1 & 1 \\
	\end{array}
	\right),\nonumber\\
H(\g_2)&=&\left(
	\begin{array}{cc}
	 0 & 1 \\
	 -1 & 0 \\
	\end{array}
	\right)^{-1} \left(
	\begin{array}{cc}
	 1 & 0 \\
	 1 & 1 \\
	\end{array}
	\right) E\left(Z_3\right) \left(
	\begin{array}{cc}
	 1 & 0 \\
	 1 & 1 \\
	\end{array}
	\right) E\left(Z_5\right) \left(
	\begin{array}{cc}
	 1 & 0 \\
	 1 & 1 \\
	\end{array}
	\right) E\left(Z_1\right) \left(
	\begin{array}{cc}
	 1 & 0 \\
	 1 & 1 \\
	\end{array}
	\right) E\left(Z_6\right) \left(
	\begin{array}{cc}
	 0 & 1 \\
	 -1 & 0 \\
	\end{array}
	\right),\nonumber\\
H(\g_3)&=&\mathbf{M}_3^{-1}E\left(Z_5\right)\left(
	\begin{array}{cc}
	 1 & 0 \\
	 1 & 1 \\
	\end{array}
	\right) E\left(Z_2\right)\left(
	\begin{array}{cc}
	 1 & 0 \\
	 1 & 1 \\
	\end{array}
	\right)\mathbf{M}_3,\nonumber\\
H(\g_4)&=&\mathbf{M}_4^{-1}E\left(Z_2\right) \left(
	\begin{array}{cc}
	 1 & 0 \\
	 1 & 1 \\
	\end{array}
	\right) E\left(Z_3\right) \left(
	\begin{array}{cc}
	 1 & 0 \\
	 1 & 1 \\
	\end{array}
	\right) E\left(Z_4\right) \left(
	\begin{array}{cc}
	 1 & 0 \\
	 1 & 1 \\
	\end{array}
	\right) E\left(Z_1\right) \left(
	\begin{array}{cc}
	 1 & 0 \\
	 1 & 1 \\
	\end{array}
	\right)\mathbf{M}_4,
\ee    
where 
\be   
\mathbf{M}_3&=&\left(
	\begin{array}{cc}
	 0 & 1 \\
	 -1 & 0 \\
	\end{array}
	\right)^{-1} \left(
	\begin{array}{cc}
	 1 & 0 \\
	 1 & 1 \\
	\end{array}
	\right) E\left(Z_1\right) \left(
	\begin{array}{cc}
	 1 & 0 \\
	 1 & 1 \\
	\end{array}
	\right) E\left(Z_6\right) \left(
	\begin{array}{cc}
	 0 & 1 \\
	 -1 & 0 \\
	\end{array}
	\right),\nonumber\\
\mathbf{M}_4&=&\left(
	\begin{array}{cc}
	 0 & 1 \\
	 -1 & 0 \\
	\end{array}
	\right)^{-1} \left(
	\begin{array}{cc}
	 1 & 0 \\
	 1 & 1 \\
	\end{array}
	\right)\mathbf{M}_3.
\ee   
The set of $H(\g_i)$ satisfies 
\be    
H(\g_1)H(\g_2)H(\g_3)H(\g_4)=1.
\ee

\bibliographystyle{jhep}
\bibliography{muxin.bib}

\end{document}